\documentclass[11pt]{article}

\usepackage[final]{acl}

\usepackage{times}
\usepackage{latexsym}

\usepackage[T1]{fontenc}

\usepackage[utf8]{inputenc}

\usepackage{microtype}

\usepackage{inconsolata}

\usepackage{graphicx}

\usepackage{tikz}
\usepackage{arydshln}
\usepackage{algpseudocode}
\algrenewcommand\textproc{\text}
\usepackage{makecell}
\usepackage{booktabs}
\usepackage{color}
\usepackage{colortbl}
\usepackage{multirow}
\usepackage{enumitem}
\usepackage{makecell}
\usepackage{threeparttable}
\usepackage{amsmath,amsfonts,mathtools}
\usepackage{pifont}

\usepackage{float}
\usepackage{graphicx}
\usepackage{xcolor}
\usepackage{enumitem}
\usepackage{tabularx, booktabs}

\usepackage{tcolorbox}

\usepackage{colortbl}
\usepackage{float}
\usepackage{graphicx}
\usepackage{hyperref}
\usepackage{tabularx, booktabs}
\usepackage{tikz}
\usepackage{arydshln}
\usepackage{CJKutf8}

\usepackage{bm}        
\usepackage{amsfonts}
\usepackage{cancel}

\definecolor{mygreen}{HTML}{00B050}

\definecolor{myorange}{HTML}{ED7D31}
\definecolor{rowgray}{gray}{0.97}
\definecolor{avgblue}{RGB}{210,230,250}  
\definecolor{headergray}{RGB}{160,160,160} 
\definecolor{uc_color}{rgb}{0.99,0.24,0.63}
\definecolor{hc_color}{rgb}{0.02,0.51,0.51}
\definecolor{tc_color}{rgb}{0.99,0.55,0.09}
\definecolor{posgreen}{RGB}{0,150,0}
\definecolor{negred}{RGB}{200,0,0}

\usepackage{array}
\usepackage{mathtools}
\usepackage{multirow}
\usepackage{subcaption}
\usepackage{tikz}
\renewcommand{\arraystretch}{1.1}
\definecolor{color1}{cmyk}{0.216,0.176,0,0}
\definecolor{color2}{cmyk}{0.059,0.235,0.392,0}

\usepackage{graphicx}
\usepackage{tikz}
\usepackage{wrapfig}
\usepackage{algorithm}
\usepackage{algpseudocode}
\algrenewcommand\textproc{\text}
\usepackage{makecell}
\usepackage{booktabs}
\usepackage{pifont}
\usepackage{multirow}
\usepackage{enumitem}
\usepackage{balance}
\usepackage{threeparttable}
\usepackage{amsmath,amsfonts,mathtools} 
\usepackage{longtable}
\usepackage{amsthm}
\newtheorem{theorem}{Theorem}

\tikzstyle{mybox} = [draw=black, very thick,
    rectangle, rounded corners, inner sep=10pt, inner ysep=13pt]
\tikzstyle{fancytitle} =[fill=black, text=white]

\newcommand{\M}{\textsc{AgenticRag}-R1}

\title{{\textsc{AgenticRag}-\textsc{R1}}: Agentic Reinforcement Learning with Stack Memory for Multi-Step Reasoning, Retrieval and Memorizing}
\author{
  \textbf{Xinke Jiang\textsuperscript{1,2,3,*}},
  \textbf{Yue Fang\textsuperscript{1,2,3,*}},
  \textbf{Zhibang Yang\textsuperscript{1,2,3,*}},
  \textbf{Jiaran Gao\textsuperscript{1,2,3,*}},
  \textbf{Zhixin Zhang\textsuperscript{1,2,3,*}},
  \textbf{Tao Feng\textsuperscript{1}},\\
  \textbf{Rihong Qiu\textsuperscript{1,2,3}},
  \textbf{Wentao Zhang\textsuperscript{1}},
  \textbf{Hongxin Ding\textsuperscript{1,2}},
  \textbf{Ruizhe Zhang\textsuperscript{1,2,3}},
  \textbf{Yongxin Xu\textsuperscript{1,3}},\\
  \textbf{Yuheng Huang\textsuperscript{4}},
  \textbf{Xu Chu\textsuperscript{2,3,5,\ensuremath{\dagger}}},
  \textbf{Junfeng Zhao\textsuperscript{2,3,\ensuremath{\dagger}}},
  \textbf{Yasha Wang\textsuperscript{1,6,\ensuremath{\dagger}}}\\
  \textsuperscript{1}National Engineering Research Center of Software Engineering, Peking University, Beijing, China\\
  \textsuperscript{2}School of Computer Science, Peking University, Beijing, China\\
  \textsuperscript{3}Key Laboratory of High Confidence Software Technologies, Ministry of Education, Beijing, China\\
  \textsuperscript{4}GRG Banking Equipment Co., Ltd., Guangzhou, China\\
  \textsuperscript{5}Center on Frontiers of Computing Studies, Peking University, Beijing, China\\
  \textsuperscript{6}Peking University Information Technology Institute (Tianjin Binhai), Tianjin, China\\ \small
  \textsuperscript{*}Equal contribution \quad
  \textsuperscript{\ensuremath{\dagger}}Corresponding authors\\
  \small{\texttt{\{xinkejiang, yangzb\}@stu.pku.edu.cn;
  \{chu\_xu, zhaojf, wangyasha\}@pku.edu.cn}}
}

\begin{document}
\maketitle
\begin{abstract}
Retrieval-Augmented Generation (RAG) improves the factuality of large language models (LLMs), yet existing RAG systems often struggle with complex, multi-step reasoning that requires \emph{adaptive retrieval and continuous revision of intermediate contexts}.
Recent reinforcement learning (RL)-based agentic RAG methods partially alleviate this issue, but typically rely on \emph{coarse-grained action spaces} and \emph{trajectory-level rewards}, resulting in weak reward assignment and a bias toward short-horizon, stereotyped reasoning template.
To address, we propose \M, a RL framework that deeply integrates reasoning, retrieval, and memory via a memory stack and fine-grained action space, supported by hierarchical action-aware rewards and an information-aware trajectory rejection strategy to enable effective long-horizon learning.
Experiments across a diverse set of multi-hop, open-domain, and agentic reasoning benchmarks, spanning multiple backbone model sizes, demonstrate that \M~consistently outperforms strong baselines.
Moreover, \M~learns more robust, interpretable, and memory-aware reasoning behaviors, highlighting the effect of fine-grained action modeling and information-aware optimization for long-horizon reasoning. 
Our code is anonymous available at 
{\url{https://github.com/jiangxinke/Harness-RL/tree/AgenticRAG-R1-Whitebox}}.
\end{abstract}

\section{Introduction}
\textbf{Large Language Models (LLMs)} have achieved impressive performance across a broad range of natural language processing tasks~\cite{kaplan2020scaling,yang2024qwen2,guo2025deepseek,vu2024gptvoicetasker,chang2024survey}. 
However, despite their strengths, LLMs still remain prone to factual errors~\cite{ji2023survey,cao-etal-2020-factual,10.1145/3571730,HyKGE}, outdated knowledge~\cite{he2022rethinking} and limited domain-specific expertise~\cite{kandpal2023large}, making \textbf{R}etrieval-\textbf{A}ugmented \textbf{G}eneration (RAG)~\cite{GraphRAG,asai2023selfrag,yang2024faima} a promising solution by enabling access to external knowledge for improved accuracy and adaptability. 

\begin{figure}[t]
\centering
\includegraphics[width=\linewidth]{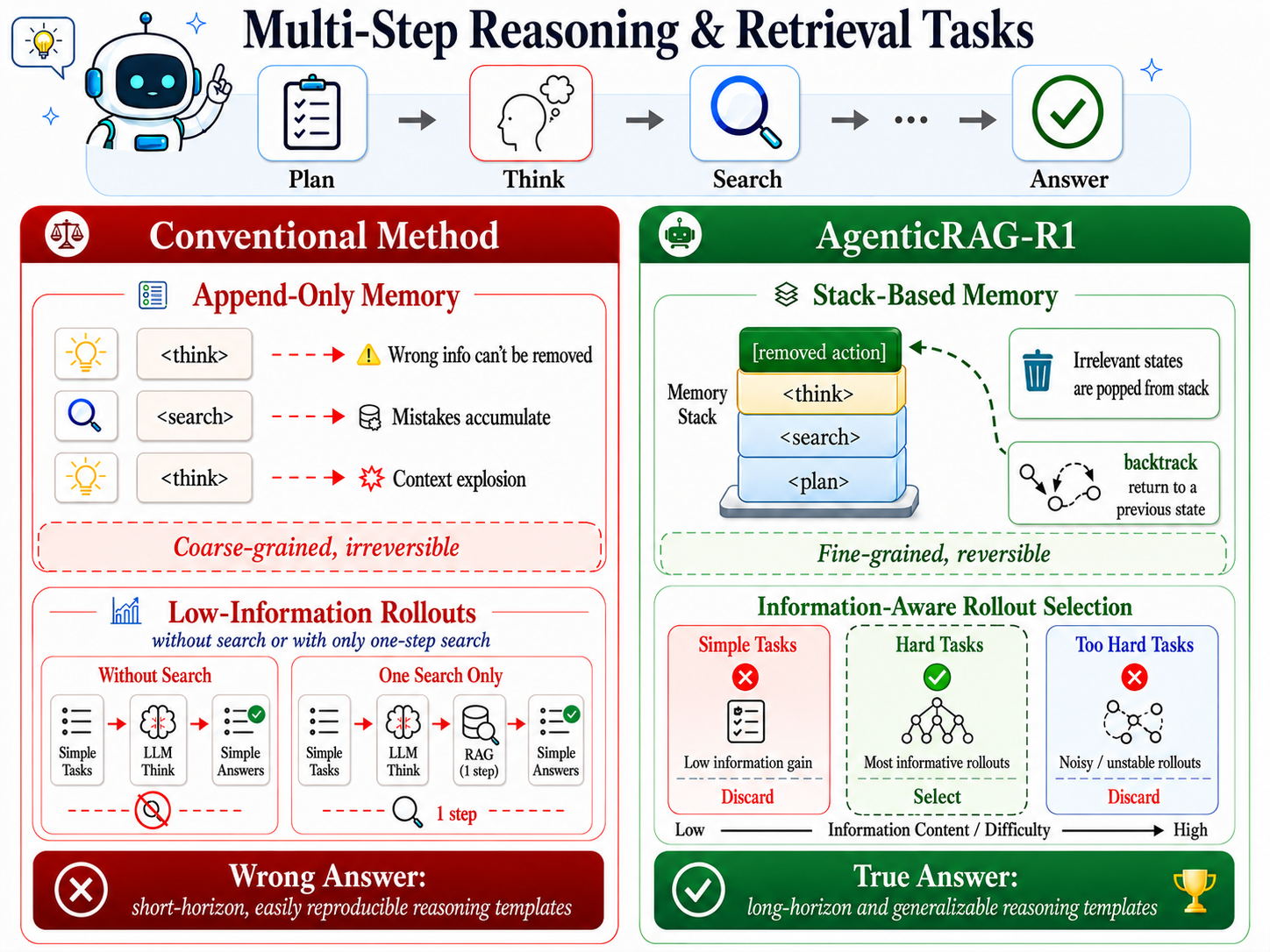}
\caption{Conventional Agentic vs \M~for multi-step reasoning and retrieval.
Conventional methods adopt coarse-grained, append-only memory and shallow retrieval, leading to error accumulation and short-horizon. \M~employs fine-grained, reversible memory actions and information-aware rollout selection, for robust long-horizon reasoning.}
\label{fig:intro}
\end{figure}

In contrast, many real-world decision-making tasks are inherently complicated~\cite{gao2025turnsunlockinglonghorizonagentic}, requiring models to decompose problems into multiple interdependent steps rather than relying on one-off reasoning or single-step retrieval~\cite{shao2023enhancing,press2022measuring}. 
Such settings demand not only access to external knowledge, but also the ability to dynamically decide \emph{when} and \emph{what} to retrieve as the reasoning process unfolds~\cite{lee-etal-2024-planrag,jeong2024adaptive}. 
Crucially, retrieved information could be continuously integrated into the model's evolving internal state, influencing subsequent reasoning and decision making. 
This tight coupling between reasoning and retrieval makes their \textit{\textbf{deep integration}} a critical yet underexplored challenge.

Early attempts to integrate reasoning and retrieval primarily relied on manually crafted prompting strategies~\cite{press2022measuring,shao2023enhancing,trivedi2023interleaving,jiang2024tcrag,li2025searcho1}, and were later extended through \textbf{S}upervised \textbf{F}ine-\textbf{T}uning (SFT) to better coordinate their interaction~\cite{wang2025corag}.
While effective in specific settings, these approaches depend heavily on heuristic design and often overfit to \textbf{fixed reasoning patterns}, without \textbf{intrinsically} improving the model’s ability to jointly reason and retrieve~\cite{chu2025sft}.
More recently, \textbf{R}einforcement \textbf{L}earning (RL) has emerged as a promising paradigm for enhancing complex reasoning, as demonstrated by OpenAI O1~\cite{openaio1} and DeepSeek-R1~\cite{guo2025deepseek}. 
Motivated by this success, several studies have begun to apply RL to retrieval-augmented settings, training models to invoke external retrieval tools during reasoning~\cite{jin2025search,dong2025agentic,dong2025agentic2,zhou2025mem1,liu2026proprietary}. 
As illustrated in Figure \ref{fig:intro}, most existing RL-based RAG rely on \textbf{coarse-grained action modeling}, treating retrieval as a single or shallow action, which limits fine-grained control over complex tasks and biases policy learning toward \textbf{short-horizon}, easily \textbf{reproducible reasoning templates} than long-horizon retrieval and reasoning, posing two challenges:

\noindent\textit{\textbf{\ding{182} Challenge 1. How to expose fine-grained, memory-aware action-making and reward allocation in multi-step reasoning and retrieval.}}
Deep integration of reasoning and retrieval requires LLMs not only to acquire \textbf{external information}, but also to continuously \textbf{manage and revise intermediate content memory} under noisy and uncertain conditions.
In multi-step reasoning, both external retrieval and internal inference inevitably introduce noise: retrieved documents may be irrelevant or misleading~\cite{react,shinn2023reflexion,liu2023lost}, while intermediate reasoning steps may contain logical errors or overconfident assumptions~\cite{jiang2024tcrag,xiong2024watch,hao2023reasoning}.
If such noisy intermediate conclusions are written into context memory without the ability to be inspected, revised, or discarded, they tend to propagate and amplify over steps, effectively corrupting the agent’s internal decision state~\cite{ning2025survey,hoscilowicz2025clickagent,lai2024autowebglm}.
However, most existing RL-based agentic methods adopt a highly \textbf{coarse-grained} action formulation, collapsing the entire reasoning and retrieval processes into a single latent and memory-appending \texttt{<think>} and \texttt{<search>} action template~\cite{dong2025agentic,dong2025agentic2,zhou2025mem1}, which are then stored as irreversible content.
This limitation is further exacerbated by reward design, where outcome rewards are evaluated only on final correctness, overlooking the contributions of intermediate actions throughout the trajectory~\cite{shao2024deepseekmath,jin2025search}.
Without fine-grained, memory-aware action and reward signals, RL suffers from weak reward assignment and is biased toward \textbf{shallow, template-like strategies}.

\noindent\textit{\textbf{\ding{183} Challenge 2. How to expose reinforcement learning to sufficiently diverse and high-information reasoning and retrieval rollouts.}}
Even with fine-grained action spaces and memory-aware reward mechanisms, RL may still fail to learn robust long-horizon reasoning if the training distribution is dominated by \textbf{low information trajectories}. 
Most multi-hop and retrieval-augmented tasks are simple—requiring only a \textbf{single retrieval step or shallow reasoning}—so these short-horizon trajectories often yield higher immediate rewards than more complex, long-horizon ones~\cite{gao2025turnsunlockinglonghorizonagentic,zhou2025mem1}. 
As a result, RL tends to overfit to \textbf{low-information}, trivially rewarding trajectories, rarely exercising actions such as revising, summarizing, or planning, and under-optimizing the long-horizon combinations of cognitive actions~\cite{wang20258020rulehighentropyminority}. 

To address the above challenges, we propose \M, a reinforcement learning framework that achieves deep integration of reasoning, retrieval and memorizing through fine-grained action modeling and action-aligned reward mechanisms.
\ding{182} For \textbf{\textit{C1}}, we design a structured multi-action space together with a stack-based memory which represents the agent's internal state across multi-step reasoning. 
The memory stack allows intermediate actions to be selectively pushed, revised, or popped, enabling the agent to inspect and correct noisy information accumulated during reasoning and retrieval. 
Specifically, our framework supports distinct cognitive actions such as \texttt{<think>}, \texttt{<search>}, and \texttt{<plan>}, as well as memory-oriented actions such as \texttt{<summary>} and \texttt{<backtrack>}, which operate over the memory stack to refine or retract intermediate actions. 
This action design is further complemented by an action-aware reward attribution mechanism that assigns process rewards to each actions, providing fine-grained reward supervision.
\ding{183} For \textbf{\textit{C2}}, we introduce a dynamic training trajectories rejection strategy that rejects low-information trajectories and prioritizes high-reward rollouts while up-weighting inputs whose rollouts exhibit meaningful dispersion. 
Inspired by rejection sampling~\cite{liustatistical,xiong2025minimalist}, we define a \textbf{cross-batch, information-awareness metric (Upper Confidence Bound-like)}~\cite{UCB_1,UCB_2} that evaluates each trajectory based on both its \textit{rollout reward} and the \textit{sample variance}. 
This approach ensures trajectories eliciting diverse reasoning behaviors and requiring multi-step, long-horizon decision-making receive higher priority during training.
In summary, our main contributions are as follows:
\begin{itemize}[leftmargin=*,noitemsep]
\item We first formalize the agentic reinforcement learning process within a search-and-memory environment, and introduce a fine-grained, memory-aware action space together with action-aware process-level rewards. This design enables explicit supervision of heterogeneous cognitive actions and effectively mitigates error accumulation in the agent’s memory (\textit{C1}).
\item We introduce a dynamic, information-aware trajectories rejection strategy that filters low-information trajectories and prioritizes challenging, long-horizon samples with high behavioral variance, ensuring RL effectively explores and optimizes complex, multi-step reasoning strategies (\textit{C2}).
\item We evaluate \M~on multiple multi-hop and RAG benchmarks, demonstrating improved reasoning performance, generalization, and applicability to complex downstream tasks, i.e., report generation.
\end{itemize}




\section{\M}\label{sec:methodology}
In this section, we present \M, as illustrated in Figure~\ref{fig:newmethod.png}, which consists of two key components:
\ding{182} action-centric training paradigm that supports multi-turn, multi-action rollouts with reasoning, retrieval, and memory operations, and incorporates hierarchical, action-aware reward modeling for fine-grained, token-level reward assignment;
\ding{183} Information-Gain-Aware dynamic rollout selection strategy, improves long-horizon exploration and data efficiency during training. More details of Notation Table, Full Algorithm, Prompt, Proofs, are in Appendix~\ref{sec:app notation},~\ref{sec: algorithm},~\ref{sec: prompts},~\ref{sec:proof of agentic}.

\begin{figure*}
    \centering
    \includegraphics[width=1.0\textwidth]{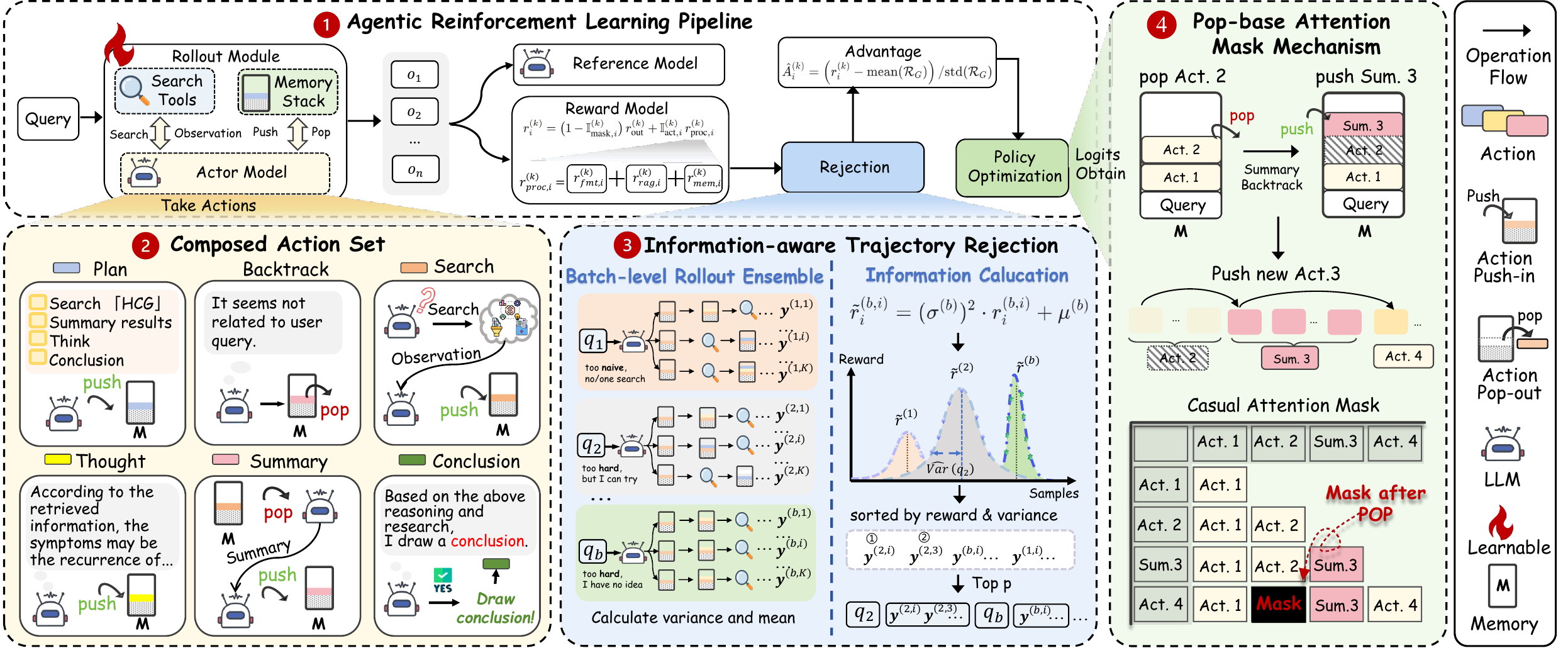}
    \caption{Overall Framework of \M.}
    \label{fig:newmethod.png}
\end{figure*}

\subsection{Action Modeling with Hierarchical Rewards}
\paragraph{Memory-aware Multi-Action Modeling.}
To overcome \textbf{\textit{C1}}, we introduce a structured \emph{multi-action mechanism} integrated with the last-in-first-out (LIFO) memory stack that supports \texttt{push} and \texttt{pop} operations.
Each stack element corresponds to a discrete and semantically meaningful reasoning unit,
serving as the basic execution primitive of the agent.
The memory stack enables step-wise \textbf{action decomposition} of reasoning trajectories,
supports dynamic rollback through stack manipulation,
and maintains a reversible and interpretable record of the agent’s internal decision process:
\begin{itemize}[leftmargin=*, noitemsep]
    \item \textbf{\texttt{<Plan>}} (\texttt{push}): Generate and store a sub-goal, task decomposition, or high-level strategy for long-horizon reasoning.
    \item \textbf{\texttt{<Think>}} (\texttt{push}): Record internal reasoning, such as logical deduction or decision-making, to support subsequent inference.

    \item \textbf{\texttt{<Search>}} (\texttt{push}): Formulate a query with structured parameters and invoke the retrieval module. This action explicitly determines what and how external knowledge is accessed by external tools.    
    \item \textbf{\texttt{<Backtrack>}} (\texttt{pop}, then \texttt{push}): Remove one or more top elements upon detecting inconsistencies or errors to restore a prior valid reasoning state for alternative exploration.
    \item \textbf{\texttt{<Summary>}} (\texttt{pop}, then \texttt{push}): Compress recent memory entries into a concise representation and push the summary back onto the stack, for context management and information preservation.
    \item \textbf{\texttt{<Conclusion>}} (\texttt{push}): Indicate that the current state represents a candidate final answer, concluding the reasoning sequence.

\end{itemize}

\paragraph{Hierarchical Reward Modeling}
To better model interaction outcomes arising from multi-turn reasoning and multi-action execution—particularly those involving \textbf{external retrieval engines and the memory stack}—we introduce a \emph{hierarchical reward modeling} framework. Specifically, we decompose the overall reward signal into two parts: a \emph{outcome reward} and a \emph{process reward}. The outcome reward reflects the quality of the final outcome and is uniformly distributed across valid generation tokens, while the process reward provides fine-grained supervision and is assigned to tokens for corresponding actions.

Formally, for a rollout indexed by $k$, let
\(
y^{(k)} = \bigl(x^{(k)}_1, \dots, x^{(k)}_{|y^{(k)}|}\bigr)
\)
denote the generated token sequence of the $k$-th trajectory.
We define per-token reward as:
\begin{equation}
r^{(k)}_i
=
\bigl(1 - \mathbb{I}^{(k)}_{\text{mask},i}\bigr)\, r^{(k)}_{\text{out}}
+
\mathbb{I}^{(k)}_{\text{act},i}\, r^{(k)}_{\text{proc},i},
\text{ }
i\in[1, |y^{(k)}|],
\end{equation}
where $\mathbb{I}^{(k)}_{\text{mask},i}$ indicates whether token $x^{(k)}_i$ is masked from policy optimization (e.g., retrieved observations) to avoid those false gradient updates, and $\mathbb{I}^{(k)}_{\text{act},i}$ indicates whether the token corresponds to this action invocation. This hierarchical reward formulation allows outcome-level supervision to be shared across the entire reasoning trajectory, while process-level rewards are selectively applied to action tokens, providing fine-grained control over decisions:

\begin{itemize}[leftmargin=*]
    \item \textbf{{Outcome Reward}} $r^{(k)}_{\text{out}}$\textbf{:}  evaluates the quality of the final answer produced by the trajectory. We extract the predicted answer from \texttt{<conclusion>} $a^{(k)}_{\text{conclusion}}$ from $y^{(k)}$ and compare it with the ground-truth $a_{\text{gold}}$ via Exact Match (EM) for discrete answers and F1 score for graded answers:
    \begin{equation}
    r^{(k)}_{\text{out}}
    =
    \mathrm{Acc}\!\left(a^{(k)}_{\text{conclusion}}, a_{\text{gold}}\right)
    \in [0,1].
        \end{equation}

    \item \textbf{Process Rewards} $r^{(k)}_{\text{proc},i}$\textbf{:}
Beyond the final outcome, we introduce action-conditioned process rewards
to supervise intermediate decisions during reasoning.
These rewards are applied only at action tokens
and encourage syntactically valid actions,
effective external retrieval,
and robust memory manipulation.
Specifically, the process reward is defined as:
\begin{equation}
    r^{(k)}_{\text{proc},i}
=
r^{(k)}_{\text{fmt},i}
+
r^{(k)}_{\text{rag},i}
+
r^{(k)}_{\text{mem},i},
\end{equation}
which contains the three types:
\begin{itemize}[leftmargin=*]
    \item 
\textbf{\ding{182} Action Format Reward} ($r^{(k)}_{\text{fmt},i}$),  enforces syntactic correctness at action level.
Let $\mathbb{I}_{\text{syntax}}(a_i)$ be an indicator that equals 1 if action $a_i$ is properly formatted with valid
\texttt{<ACTION>}...\texttt{</ACTION>} tags, and 0 otherwise.
$r^{\text{fmt}}_t$ is defined as
\begin{equation}
    r^{(k)}_{\text{fmt},i} =
\alpha_{\text{type}}(a_i),  \text{  if  } \mathbb{I}_{\text{syntax}}(a_i)=1,
\end{equation}
where $\alpha_{\text{type}}(a_i)=1.0$ for \texttt{CONCLUSION} actions and $0.5$ for other valid action types, otherwise $0$. 
\item \textbf{\ding{183} Search-based Reward} ($r^{(k)}_{\text{rag},i}$). To encourage effective external retrieval, we assign a reward based on semantic relevance of retrieved content. Upon a \texttt{<Search>} action, a large-parameterized \emph{general reward model} (GRM)~\cite{grm} evaluates the alignment between the issued query, the retrieved evidence, and the current task context, producing a relevance score:
\begin{equation}
r^{(k)}_{\text{rag},i}
= \textsc{Grm}\!\left(q^s_i,\, o_i,\, q\right)
\in [0,1],
\end{equation}
where $q^s_i$ denotes the search query, $o_i$ the retrieved observation, and $q$ the original user input.
\item \textbf{\ding{184} Memory-aware Reward}
($r^{(k)}_{\text{mem},i}$) supervises \emph{useful} memory operations rather than
their mere occurrence.
For memory-related actions (\texttt{Backtrack}, \texttt{Summary}),
we also employ the GRM to assess whether the operation is \emph{reasonable}
and \emph{beneficial} given the current reasoning context and memory state.
Specifically, we define:
\begin{equation}
r^{(k)}_{\text{mem},i}
=
\textsc{Grm}\!\left(a_i,\, M_i,\, q\right) \in [0,1],
\end{equation}
where $M_i$ denotes the current memory stack.
We use separate fixed prompts for retrieval and memory evaluation, as detailed in Appendix~\ref{sec: prompts}.
\end{itemize}
\end{itemize}

\noindent\textbf{Reward Budget and Proportional Normalization.}
To prevent reward hacking via excessive action invocation, we impose a per-trajectory budget on process rewards. Let $r^{(k)}_{\text{proc}}=\sum_i r^{(k)}_{\text{proc},i}$ denote the accumulated process reward. Process reward is normalized as:

\begin{equation}
\tilde{r}^{(k)}_{\text{proc},i}
=
\begin{cases}
r^{(k)}_{\text{proc},i}, & R^{(k)}_{\text{proc}} \le B_{\text{proc}}, \\
\dfrac{B_{\text{proc}}}{R^{(k)}_{\text{proc}}}\, r^{(k)}_{\text{proc},i},
& R^{(k)}_{\text{proc}} > B_{\text{proc}},
\end{cases}
\end{equation}
where $B_{\text{proc}}$ is a fixed reward budget.
This ensures that when the total process reward exceeds the budget, all process rewards are proportionally scaled to share the same budget, thereby discouraging excessive or redundant action invocation.

\subsection{Information-Aware Trajectories Rejection}
\label{sec:dynamic-rejection}
Despite fine-grained, action-level reward, \textbf{\textit{C2}} still remains: how to distinguish \emph{informative} trajectories
from noisy or redundant ones when multiple stochastic rollouts are generated per input.
In practice, rollouts sampled from the same prompt can exhibit substantially different reasoning
paths and reward outcomes, while rollouts across different inputs vary widely in their overall
learning utility.
Naively treating all rollouts equally during optimization can dilute gradient signals and bias
training toward \textbf{short-sighted, low-effort, or stereotyped templated behaviors}. To address, we propose an \emph{Information-Aware Trajectories Rejection} mechanism that
explicitly accounts for both \emph{input-level uncertainty} and \emph{rollout-level quality}.
Unlike prior approaches that perform selection independently per input, our method operates
\emph{globally} over a mini-batch of rollouts, enabling competition across both inputs and
trajectories.

\paragraph{Batch-level Rollout Ensemble.}

At each training step, we sample a mini-batch of inputs $\mathcal{B} = \{q^{(1)}, \dots, q^{(B)}\}$. For each input $q^{(b)}$, the policy $\pi_\theta$ generates $K$ stochastic rollouts $y^{(b)} = \{y^{(b,1)}, \dots, y^{(b,K)}\}$, where $y^{(b,i)} \sim \pi_\theta(\cdot \mid q^{(b)})$. All rollouts in the mini-batch form a unified rollout set $\mathcal{Y}_{\mathcal{B}} = \bigcup_{b=1}^{B} y^{(b)}$ with cardinality $|\mathcal{Y}_{\mathcal{B}}| = B \times K$. Each rollout $y^{(b,i)}$ is assigned with a scalar reward $r^{(b,i)}$. For each input $q^{(b)}$, we compute the empirical reward mean and variance across its rollouts to estimate expected performance:
\begin{equation}
\small
\mu^{(b)}=\frac{1}{K} \sum_{i=1}^{K} r^{(b,i)},
\text{ }
(\sigma^{(b)})^2=\frac{1}{K} \sum_{i=1}^{K}
\left(r^{(b,i)} - \mu^{(b)}\right)^2, \notag
\end{equation}

Here, the variance $(\sigma^{(b)})^2$ provides a lightweight measure of \emph{input-level informativeness}, reflecting the variability of learning signals across reasoning and retrieval trajectories for a given input.

\paragraph{Information-Aware Rejection.}
Given the batch-level rollout set
$\mathcal{Y}_{\mathcal{B}} = \{ y^{(b,i)} \}_{b=1,i=1}^{B,K}$,
we assign each rollout an \emph{information-aware score} that jointly
accounts for \emph{input-level diversity} and \emph{rollout-level quality}.
Concretely, for $i$-th rollout of input $x^{(b)}$, we define it as Upper Confidence Bound (UCB) following~\cite{UCB_1, UCB_2}: 
\begin{equation}
\tilde{r}^{(b,i)}
=
(\sigma^{(b)})^2 \cdot r^{(b,i)} + \mu^{(b)},
\end{equation}
where $r^{(b,i)}$ denotes the aggregated scalar reward of the rollout,
and $\mu^{(b)}$ and $(\sigma^{(b)})^2$ are the mean and variance of rollout
rewards under the same input $x^{(b)}$.
This scoring function captures two complementary signals.
The variance term $(\sigma^{(b)})^2$ reflects how much the learning signal
varies across different reasoning and retrieval trajectories induced by the
same input, and thus emphasizes inputs that admit diverse, non-trivial
solution paths.
The rollout-level reward $r^{(b,i)}$ distinguishes higher-quality trajectories
within the same input context.
By multiplicatively coupling these terms, the score selectively amplifies
\textbf{high-reward rollouts when meaningful diversity exists}, while suppressing
uninformative or degenerate trajectories.
Additive mean term $\mu^{(b)}$ serves as a stabilizing baseline to prevent overly rejection.

Rollout rejection is then performed \emph{globally} over the batch-level rollout set.
Specifically, we rank all rollouts $y^{(b,i)} \in \mathcal{Y}_{\mathcal{B}}$
according to $\tilde{r}^{(b,i)}$ and retain only the top fraction
$p \in (0,1]$.
Rollouts not selected are discarded and do not participate in policy optimization.
By performing rejection over the joint \emph{batch $\times$ rollout} space,
this mechanism prioritizes trajectories that are both locally high-quality
and globally informative, encouraging exploration of diverse reasoning
strategies while improving long-horizon reasoning.

\section{Experiments}
We conduct extensive experiments to evaluate \M~and address the following key research questions:

\begin{itemize}[leftmargin=*,noitemsep,topsep=2pt]

    \item \textbf{RQ1:} Does \M~consistently outperform strong baselines on in-domain and out-of-domain QA?
    \item \textbf{RQ2:} How do memory-aware actions and trajectory rejection jointly affect overall model performance?
    \item \textbf{RQ3:} What is the impact of retrieval- and memory-based rewards on the quality of learning and reasoning?
    \item \textbf{RQ4:} How sensitive is \M~to the chosen rejection threshold in trajectory filtering?
    \item \textbf{RQ5:} How does \M~effectively handle long-horizon reasoning with larger step budgets?    
\end{itemize}

\subsection{Experimental Setup}

\textbf{\ding{182} Training Data.}
We train our models and baselines on a curated multi-hop question answering dataset constructed from the training splits of HotpotQA~\cite{yang2018hotpotqa} and 2WikiMultiHopQA~\cite{ho2020constructing}. To focus on genuinely non-trivial reasoning scenarios, we filter out instances that require no external retrieval or can be solved with only a single, trivial retrieval step. Detailed dataset construction procedures and statistics are provided in Appendix~\ref{appendix:datasets}. \textbf{\ding{183} Evaluation Benchmarks.} We evaluate our method on ten benchmarks spanning three settings: \emph{multi-hop QA} (HotpotQA, 2Wiki, MusiQue~\cite{trivedi2022musique}, Bamboogle~\cite{press2022measuring}), \emph{open-domain QA} (Natural Questions~\cite{kwiatkowski2019natural},
TriviaQA~\cite{joshi2017triviaqa}),
and \emph{agentic benchmarks} (FRAMES~\cite{krishna2025fact}).
Additional benchmark details are reported in
Appendix~\ref{appendix:datasets}.
\textbf{\ding{184} Backbone Models.}
We adopt \textbf{Qwen2.5}~\cite{Yang2024Qwen25TR} as the backbone language
model across all experiments, with parameter scales of 1.5B, 3B, and 7B.
\textbf{\ding{185} RAG Tools.}
We employ a Wikipedia-based search tool built on a dump dated November~1,~2023.
\textbf{\ding{186} Baselines.}
We compare our approach against a broad range of baselines spanning
\emph{No-RAG}, \emph{Naive RAG}, \emph{Agentic RAG}, and
\emph{RL-based Agentic} paradigms.
These include Base, CoT~\cite{wei2022chainofthought}, FS-RAG~\cite{trivedi2023interleaving},
FL-RAG~\cite{khandelwal2020generalizationmemorizationnearestneighbor},
ReAct~\cite{react},
IRCoT~\cite{trivedi2023interleaving}, TC-RAG~\cite{jiang2024tcrag},
ReSearch~\cite{chen2025learning}, Search-R1~\cite{jin2025search},
AEPO~\cite{dong2025agentic}, ARPO~\cite{dong2025agentic2},
and Mem1~\cite{zhou2025mem1}. Baseline descriptions are provided in Appendix~\ref{appendix:baselines}. 
\textbf{\ding{187} Evaluation Metrics.} We report the \textbf{F1 score}~\cite{yacouby2020probabilistic} as the primary evaluation metric for all question-answering benchmarks (Appendix~\ref{appendix:Implementation}). 
\ding{188} \textbf{Additional Experiments. } More experiments are listed in Appendix~\ref{sec:additional experiments}.

\subsection{Main Result Analysis}
\label{Main_Result_Analysis}
\begin{table*}[!ht]
\centering
\fontsize{10pt}{12pt}\selectfont
\setlength{\tabcolsep}{3pt}
\renewcommand{\arraystretch}{0.85}
\resizebox{\textwidth}{!}{
\begin{tabular}{l | l | l l | l l l l l | l}
\toprule
\rowcolor{gray!30}
\multicolumn{2}{c|}{\textbf{Method}} &
\multicolumn{2}{c|}{\textbf{In-Domain}} &
\multicolumn{5}{c|}{\textbf{Out-of-Domain}} &
\\

\rowcolor{gray!30}
\textbf{Paradigm} & \textbf{Approach} &
\textbf{2Wiki} & \textbf{HotpotQA} &
\textbf{Bamboogle} & \textbf{FRAMES} &
\textbf{MusiQue} & \textbf{NQ} & \textbf{TriviaQA} &
\textbf{Avg} \\
\midrule

\multicolumn{10}{c}{\textbf{\textit{Qwen2.5-1.5B}}} \\
\midrule

\multirow{2}{*}{No RAG}
&Base & 21.95 & 20.20 & 5.87 & 9.51 & 8.26 & 11.31 & 32.81 & 15.70 \\
& COT  & 16.75 & 17.90 & 19.74 & 8.43 & 7.22 & 9.82  & 25.17 & 15.00 \\
\midrule

\multirow{2}{*}{Naive RAG}
&FS-RAG & 21.37 & 26.55 & 15.31 & 11.40 & 9.50 & 17.74 & 44.73 & 20.94 \\
& FL-RAG & 26.08 & 28.14 & 14.84 & \underline{12.58} & \underline{10.30} & 21.94 & \underline{48.59} & 23.21 \\
\midrule

\multirow{3}{*}{Agentic RAG}
&ReACT & 12.43 & 23.48 & 12.54 & 8.63 & 7.99 & 21.06 & 33.91 & 17.15 \\
& IRCOT & 21.79 & 26.68 & 17.49 & 9.32 & 8.73 & 22.93 & 44.19 & 21.59 \\

&TCRAG & 25.23 & 22.29 & 11.63 & 10.27 & 7.90 & 14.21 & 34.05 & 17.94 \\
\midrule

\multirow{6}{*}{{RL-based}}
& ReSearch  & 25.55 & \underline{30.45} & \textbf{22.16} & 9.21 & 9.54 & \underline{29.07} & 44.65 & \underline{24.38} \\

&Search-R1 & \underline{26.51} & 20.66 & 13.80 & 9.65 & 8.36 & 11.93 & 25.76 & 16.67 \\
& AEPO     & 17.34 & 14.75 & 11.58 & 8.98 & 7.11 & 9.55 & 23.02 & 13.19 \\

&ARPO     & 20.02 & 15.48 & 12.38 & 9.11 & 6.16 & 9.19 & 23.96 & 13.76 \\
& Mem1     & 14.43 & 14.34 & 12.03 & 8.18 & 6.08 & 18.41 & 22.06 & 13.65 \\

&Ours     
& \textbf{29.53} {\color{posgreen}{(+3.02)}} 
& \textbf{30.88} {\color{posgreen}{(+0.43)}} 
& \underline{20.00} {\color{negred}{(-2.16)}} 
& \textbf{13.46} {\color{posgreen}{(+0.88)}} 
& \textbf{16.48} {\color{posgreen}{(+6.18)}} 
& \textbf{31.35} {\color{posgreen}{(+2.28)}} 
& \textbf{51.83} {\color{posgreen}{(+3.24)}} 
& \textbf{27.65} {\color{posgreen}{(+3.27)}} \\
\midrule

\multicolumn{10}{c}{\textbf{\textit{Qwen2.5-3B}}} \\
\midrule

\multirow{2}{*}{No RAG}
&Base & 23.98 & 24.08 & 9.45 & 8.01 & 9.70 & 14.27 & 38.84 & 18.33 \\
& COT  & 18.90 & 23.82 & 20.80 & 7.16 & 10.47 & 16.53 & 39.62 & 19.61 \\
\midrule

\multirow{2}{*}{Naive RAG}
&FS-RAG & 15.47 & 25.85 & 10.48 & 10.42 & 7.64 & 19.84 & 45.38 & 19.30 \\
& FL-RAG & 16.80 & 26.78 & 11.05 & 9.19 & 7.29 & 21.93 & 48.50 & 20.22 \\
\midrule

\multirow{3}{*}{Agentic RAG}
&ReACT & 25.09 & 34.37 & 24.86 & 10.53 & \underline{13.92} & 27.19 & 46.04 & 26.00 \\
& IRCOT & 15.89 & 24.50 & 25.27 & 6.79 & 12.43 & 27.86 & 49.19 & 23.13 \\

&TCRAG & 28.47 & 21.94 & 17.59 & 7.69 & 8.99 & 20.74 & 50.46 & 22.27 \\
\midrule

\multirow{6}{*}{{RL-based}}
& ReSearch  & 27.23 & 33.96 & 15.09 & 10.00 & 9.47 & 34.61 & 53.93 & 26.33 \\

&Search-R1 & \underline{29.90} & \underline{37.24} & \underline{29.90} & 10.76 & 13.53 & \underline{34.73} & \underline{55.08} & \underline{30.16} \\
& AEPO     & 23.01 & 28.71 & 22.09 & 12.26 & 11.70 & 26.76 & 47.78 & 24.62 \\

&ARPO     & 29.55 & 36.48 & 27.32 & \underline{13.49} & 13.38 & 33.29 & 53.66 & 29.60 \\
& Mem1     & 18.06 & 20.15 & 5.19 & 4.99 & 4.47 & 19.18 & 33.09 & 15.02 \\

&Ours
& \textbf{32.92} {\color{posgreen}{(+3.02)}}
& \textbf{44.00} {\color{posgreen}{(+6.76)}}
& \textbf{31.48} {\color{posgreen}{(+1.58)}}
& \textbf{16.23} {\color{posgreen}{(+2.74)}}
& \textbf{16.48} {\color{posgreen}{(+2.56)}}
& \textbf{37.15} {\color{posgreen}{(+2.42)}}
& \textbf{56.62} {\color{posgreen}{(+1.54)}}
& \textbf{33.55} {\color{posgreen}{(+3.39)}} \\
\midrule

\multicolumn{10}{c}{\textbf{\textit{Qwen2.5-7B}}} \\
\midrule

\multirow{2}{*}{No RAG}
&Base & 25.41 & 26.63 & 17.86 & 12.52 & 12.15 & 19.72 & 49.08 & 23.34 \\
& COT  & 23.55 & 29.10 & 37.56 & 17.60 & 14.35 & 22.47 & 49.33 & 27.71 \\
\midrule

\multirow{2}{*}{Naive RAG}
&FS-RAG & 17.71 & 29.21 & 16.86 & 12.52 & 10.74 & 16.82 & 35.02 & 19.84 \\
& FL-RAG & 19.78 & 34.42 & 24.10 & 12.10 & 12.46 & 19.72 & 42.66 & 23.61 \\
\midrule

\multirow{3}{*}{Agentic RAG}
&ReACT & 27.51 & \underline{42.81} & 27.63 & 15.29 & \underline{19.34} & \textbf{30.01} & 54.55 & \underline{31.02} \\
& IRCOT & \underline{36.45} & 26.29 & 21.90 & 6.78 & 8.39 & 19.63 & 49.43 & 24.12 \\

&TCRAG & 29.70 & 40.83 & 25.13 & 16.46 & 17.56 & 29.01 & \underline{54.78} & 30.50 \\
\midrule

\multirow{6}{*}{{RL-based}}
&ReSearch  & 30.03 & 30.39 & 30.42 & 15.61 & 12.58 & 23.69 & 48.25 & 27.28 \\
& Search-R1 & 35.03 & 38.89 & \underline{42.04} & \underline{18.01} & 19.08 & 29.59 & 55.91 & 26.94 \\
& AEPO     & 19.88 & 13.85 & 13.24 & 7.24 & 5.85 & 9.93 & 17.53 & 12.50 \\
& ARPO     & 30.71 & 25.20 & 32.94 & 12.18 & 12.71 & 17.80 & 40.16 & 24.53 \\
& Mem1     & 25.29 & 29.98 & 36.50 & 14.15 & 14.13 & 26.38 & 51.04 & 28.21 \\

&Ours
& \textbf{38.34} {\color{posgreen}{(+1.89)}}
& \textbf{45.15} {\color{posgreen}{(+2.34)}}
& \textbf{49.21} {\color{posgreen}{(+7.17)}}
& \textbf{19.44} {\color{posgreen}{(+1.43)}}
& \textbf{22.01} {\color{posgreen}{(+2.67)}}
& \underline{23.60} {\color{negred}{(-6.41)}}
& \textbf{58.45} {\color{posgreen}{(+2.54)}}
& \textbf{36.60} {\color{posgreen}{(+5.58)}} \\

\bottomrule
\end{tabular}
}
\caption{Performance comparison (\%) on multi-hop, retrieval-augmented benchmarks, including \textit{2Wiki}, \textit{HotpotQA}, \textit{Bamboogle}, \textit{FRAMES}, \textit{MusiQue}, \textit{NQ}, and \textit{TriviaQA}. \textbf{Bold} indicates the best, and \underline{underline} indicates the second.}
\label{tab:comparison}
\end{table*}

\begin{table*}[htbp]
\centering
\fontsize{7pt}{8pt}\selectfont
\setlength{\tabcolsep}{4.6pt}
\renewcommand{\arraystretch}{0.85}

\resizebox{\textwidth}{!}{
\begin{tabular}{l | l | c c | c c c c c | c}
\toprule

\rowcolor{gray!30}
\multicolumn{2}{c|}{\textbf{Method}} & 
\multicolumn{2}{c|}{\textbf{In-Domain}} & 
\multicolumn{5}{c|}{\textbf{Out-of-Domain}} &
\\

\rowcolor{gray!30}
\multicolumn{1}{c|}{\textbf{Paradigm}} & \multicolumn{1}{c|}{\textbf{Approach}} &
\textbf{2Wiki} & \textbf{HotpotQA} &
\textbf{Bamboogle} & \textbf{FRAMES} &
\textbf{MusiQue} & \textbf{NQ} & \textbf{TriviaQA} &
\textbf{Avg} \\

\midrule

\multirow{1}{*}{Full Model}
& \M    & 32.92 & 44.00 & 31.48 & 16.23 & 16.48 & 37.15 & 56.62 & 33.55 \\

\midrule

\multirow{3}{*}{{\makecell{Component\\Ablation}}}
& w/o MemAction & 27.54 & 28.77 & 24.60 & 9.83 & 13.86 & 35.10 & 51.91 & 27.37 \\
& w/o Reject  & 26.86 & 28.33 & 27.69 & 12.98 & 12.68 & 23.67 & 45.80 & 25.43 \\
& w/o MemAct\&Reject  & 24.56 & 28.61 & 25.49 & 11.61 & 12.54 & 22.07 & 44.83 & 24.24 \\
\midrule

\multirow{3}{*}{{\makecell{Reward\\Ablation}}}
& w/o $r_{\text{rag}}$ & 23.70 & 34.71 & 19.68 & 7.70 & 11.45 & 29.07 & 52.39 & 25.53 \\
& w/o $r_{\text{mem}}$ & 27.49 & 33.78 & 17.14 & 7.41 & 7.63 & 22.00 & 55.29 & 24.39 \\
& w/o $r_{\text{rag}} \& r_{\text{mem}}$  & 22.98 & 26.42 & 15.74 & 5.33 & 6.07 & 18.62 & 44.40 & 19.94 \\

\bottomrule
\end{tabular}
}
\caption{Ablation analysis of component and reward designs in \M~on Qwen2.5-3B.}
\label{tab:ablation}
\end{table*}

To address \textbf{RQ1}, we evaluate \M~on two types of tasks using three backbones (Table~\ref{tab:comparison}) and analyze its training behaviors and learning dynamics (Figure~\ref{fig:reward_behavior_count}). 
Based on these analyses, we highlight the following main observations:

\textbf{\ding{182} Strong Performance Compared with Baselines.}
Across in-domain and out-of-domain multi-hop QA benchmarks (Table~\ref{tab:comparison}),
\M~achieves the best or near-best performance across backbone model sizes.
On the in-domain datasets, \M~consistently surpasses the strongest baselines,
with, for example, absolute gains of \textbf{+3.02\%} on \textit{2Wiki}
and \textbf{+6.76\%} on \textit{HotpotQA} using the 3B backbone.
On out-of-domain benchmarks, \M~exhibits strong generalization,
e.g., achieving an absolute improvement of about \textbf{+7.2\%} on \textit{TriviaQA}
with the 1.5B backbone and \textbf{+7.1\%} on \textit{Bamboogle} with the 7B backbone
over the strongest Agentic RAG baselines.
Notably, these gains hold across backbone scales, demonstrating the effectiveness and robustness of our reinforcement learning-based optimization in this setting.

\textbf{\ding{183} Learning Dynamics of Cognitive Actions and Information-Aware Rejection.}
We examine the evolution of cognitive actions during training and the associated learning dynamics, corroborating trends using reward trajectories from Figure~\ref{fig:wandb_all}. Using the Qwen2.5-3B model as a primary example, \texttt{<think>} actions steadily increase, reflecting a growing reliance on internal reasoning. Actions related to intermediate reasoning, namely \texttt{<summary>} and \texttt{<backtrack>}, fluctuate but overall increase; these behaviors help manage memory and remove noise from previous contexts. In contrast, \texttt{<search>} remains relatively stable with minor increases. \texttt{<conclusion>} actions rise consistently, reducing cases where the model fails to produce a final answer.
For the smaller Qwen2.5-1.5B model, similar trends are observed for \texttt{<think>}, \texttt{<summary>}, \texttt{<backtrack>}, and \texttt{<plan>}. However, \texttt{<search>} frequency is lower, likely due to limited context capacity: performing excessive retrievals can disrupt reasoning, so the model reduces search while still successfully completing reasoning episodes, as reflected by increased \texttt{<conclusion>}.

Additional training statistics track overall model behavior throughout training (Figure~\ref{fig:wandb_all}). Over 0--300 training steps, rewards steadily increase while action entropy shows a general downward trend, suggesting convergence toward more consistent policies. Response length and number of reasoning turns initially rise and then decline, indicating that the model discovers more efficient reasoning and retrieval strategies over time, ultimately producing answers faster and more accurately. Collectively, these analyses reveal how \M~adapts its cognitive actions and information-aware rejection throughout training, balancing exploration and exploitation to optimize multi-hop reasoning.


\begin{figure*}[t]
\begin{minipage}[t]{0.45\textwidth}
    \centering
    \includegraphics[width=\linewidth]{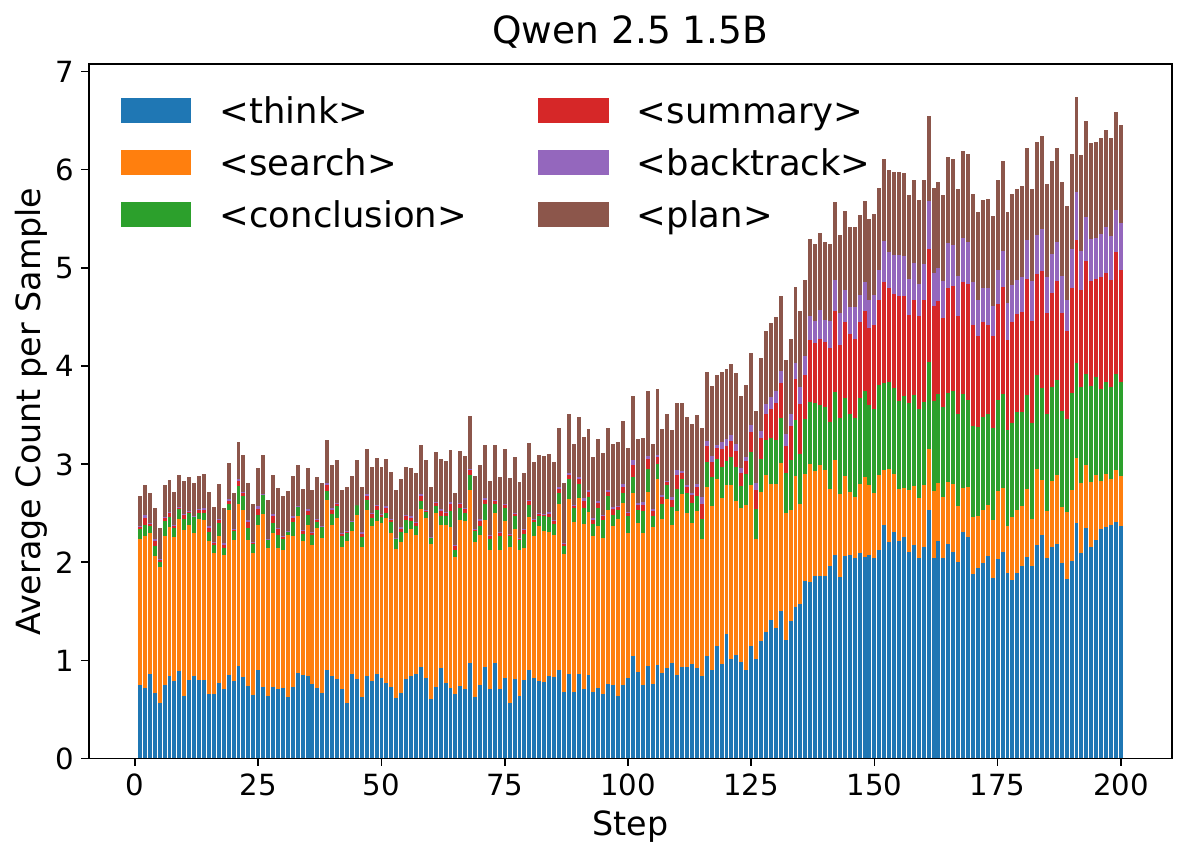}
\end{minipage}
\hfill
\begin{minipage}[t]{0.45\textwidth}
    \centering    \includegraphics[width=\linewidth]{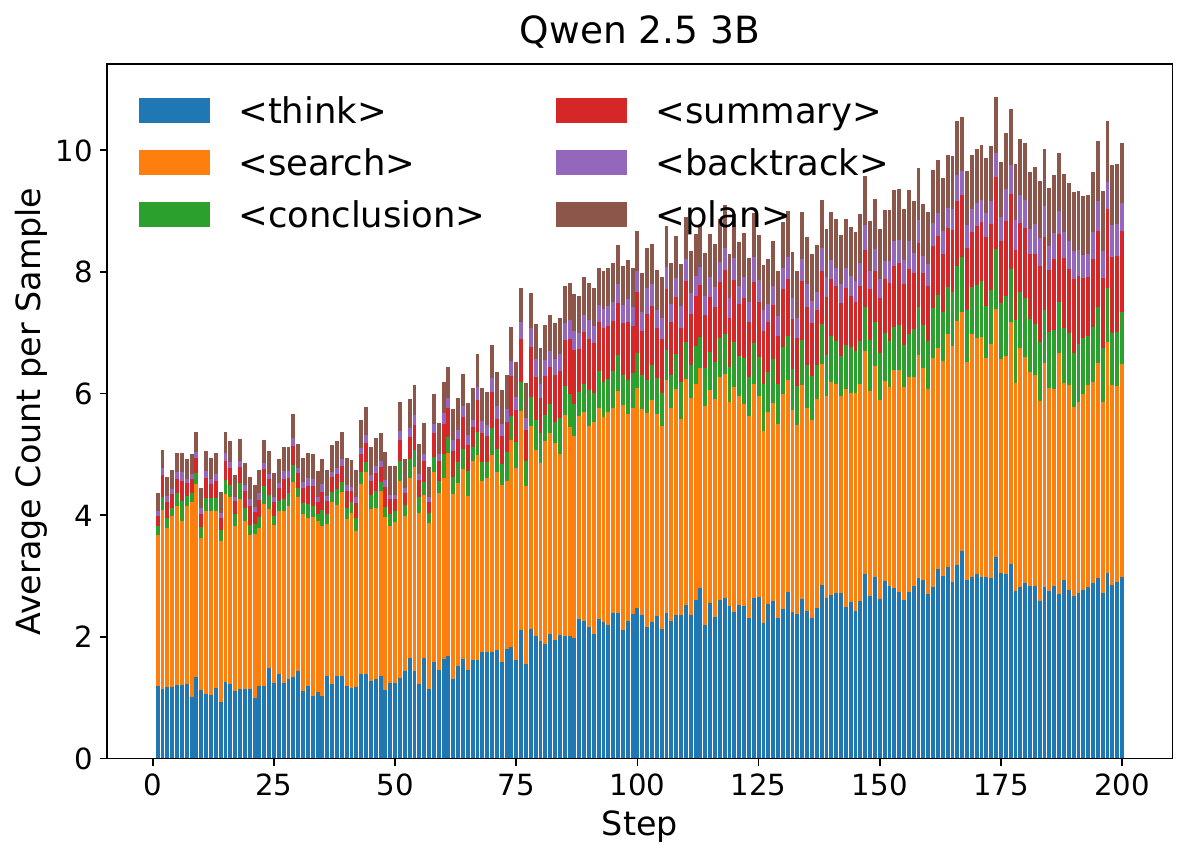}
\end{minipage}
\caption{Action-level frequency distributions during RL training across training steps.}
\label{fig:reward_behavior_count}
\end{figure*}

\subsection{Component Analysis}
We analyze the effects of key model components, process-level rewards, and rejection thresholds on performance across both in-domain and out-of-domain tasks.

\textbf{\ding{182} Model Component Ablation.}
We evaluate key components of \M~in Table~\ref{tab:ablation} for \textbf{RQ2}. Removing memory-related actions (\textit{w/o MemAction}) substantially degrades performance, e.g., 2Wiki drops from 32.92$\%$ to 27.54$\%$ and HotpotQA from 44.00$\%$ to 28.77$\%$. Removing information-aware rejection (\textit{w/o Reject}) also reduces the average score from 33.55$\%$ to 25.43$\%$, while removing both components further lowers it to 24.24$\%$, confirming their complementary contributions. 

\textbf{\ding{183} Process-level Reward Ablation.}
Table~\ref{tab:ablation} also analyzes reward design for \textbf{RQ3}. Removing the RAG reward (\textit{w/o $r_{\text{rag}}$}) hurts 2Wiki (32.92$\%$ to 23.70$\%$) and HotpotQA (44.00$\%$ to 34.71$\%$), while removing the memory reward (\textit{w/o $r_{\text{mem}}$}) weakens out-of-domain performance, e.g., MusiQue drops from 16.48$\%$ to 7.63$\%$. Removing both rewards yields the largest drop (Avg 19.94$\%$), showing that both signals are necessary for  effective reasoning and memory utilization.

\begin{figure}[t]
\centering
\includegraphics[width=0.9\linewidth]{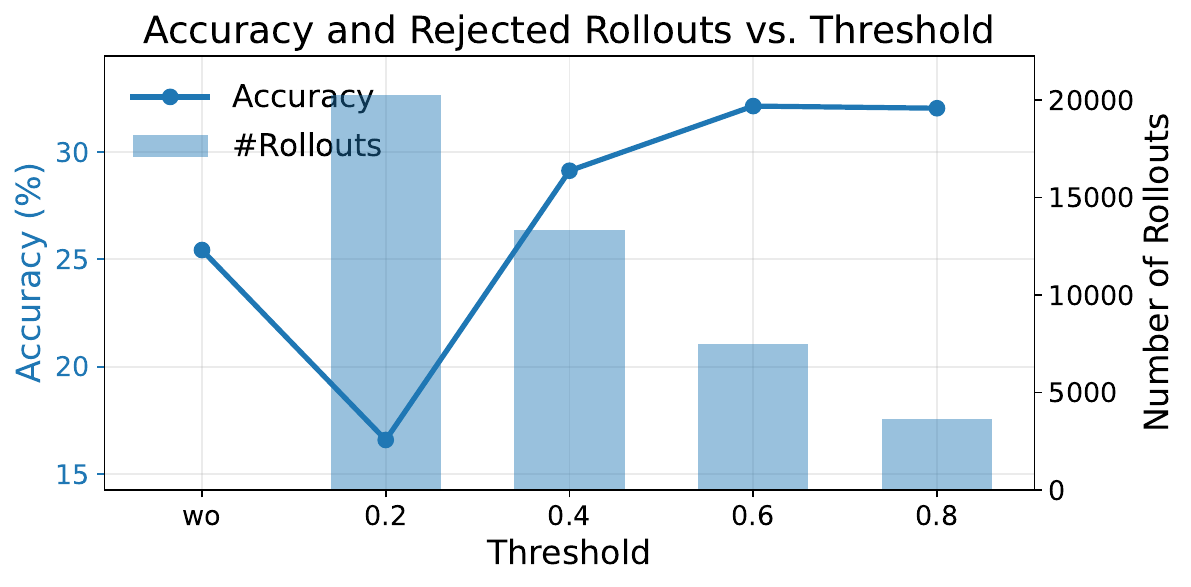}
\caption{Sensitivity analysis of Rejection threshold on Qwen2.5-3B. The line plot shows F1-Score, and the bar chart depicts the number of rejected training samples.}
\label{fig:ole_threshold_analysis}
\end{figure}

{\ding{184} \textbf{Sensitivity to Rejection Threshold.}}
We analyze the impact of varying the rejection threshold on model performance and training behavior (Figure~\ref{fig:ole_threshold_analysis})  to address \textbf{RQ4}. Accuracy, measured by F1-Score, initially decreases and reaches a minimum around a threshold of 0.2, then steadily improves as the threshold increases, peaking at 0.6 and 0.8. Correspondingly, the number of rejected training samples decreases with higher thresholds, from over 20,000 at 0.2 to the fewest at 0.8. This trend indicates that moderate to high thresholds effectively balance answer generation and rejection.

\paragraph{\ding{185}Long-Horizon Experiments.}
To address \textbf{RQ5}, we progressively increase the maximum number of allowed steps from 10 to 15, 20, 25, and 30, thereby evaluating its capability to perform long-horizon reasoning effectively.



\begin{table}[ht]
\centering
\fontsize{9pt}{9.5pt}\selectfont
\renewcommand{\arraystretch}{1.1}
\setlength{\tabcolsep}{5pt}

\resizebox{0.9\linewidth}{!}{
\begin{tabular}{l| ccccc}
\toprule
\rowcolor{gray!30}
\textbf{Method} 
& \multicolumn{5}{c}{\textbf{Max Steps}} \\
\rowcolor{gray!30}
& \textbf{10} & \textbf{15} & \textbf{20} & \textbf{25} & \textbf{30} \\
\midrule
\rowcolor{gray!10}
\multicolumn{6}{c}{\textit{2Wiki}} \\
\midrule
ReAct               & 18.23 & 22.45 & 25.17 & 27.89 & 29.03 \\
Search-R1            & 24.12 & 38.02 & 37.51 & 33.08 & 44.27 \\
TC-RAG              & 22.78 & 37.35 & 36.92 & 32.41 & 43.10 \\
\M                  & 23.91 & 31.81 & 32.53 & 38.61 & 47.45 \\
\midrule
\rowcolor{gray!10}
\multicolumn{6}{c}{\textit{Bamboogle}} \\
\midrule
ReAct               & 15.67 & 18.92 & 20.34 & 21.76 & 22.19 \\
Search-R1            & 37.05 & 34.11 & 37.02 & 28.03 & 28.19 \\
TC-RAG              & 36.42 & 33.34 & 36.36 & 27.27 & 27.23 \\
\M                  & 10.03 & 31.81 & 14.29 & 21.43 & 35.47 \\
\midrule
\rowcolor{gray!10}
\multicolumn{6}{c}{\textit{TriviaQA}} \\
\midrule
ReAct               & 28.76 & 30.12 & 31.05 & 31.42 & 31.68 \\
Search-R1            & 34.21 & 36.12 & 33.57 & 33.09 & 32.28 \\
TC-RAG              & 33.57 & 35.48 & 32.91 & 32.44 & 31.62 \\
\M                  & 37.53 & 32.53 & 39.45 & 43.50 & 55.28 \\

\bottomrule
\end{tabular}
}
\caption{Long-horizon performance under different maximum step budgets across multiple datasets.}
\label{tab:horizon_comparison}
\end{table}
As shown in Table~\ref{tab:horizon_comparison}, our model exhibits clear and consistent gains as the maximum step budget increases, demonstrating strong scalability in long-horizon settings. This trend holds across all evaluated datasets, indicating that the model can effectively leverage extended interaction horizons to perform deeper multi-step reasoning and retrieval. In contrast, \textsc{TC-RAG} shows unstable or even degraded performance when the horizon grows, suggesting limited ability to accumulate and utilize information over long trajectories. These results highlight that simply increasing the allowed steps is insufficient; rather, the proposed design is crucial for unlocking effective long-horizon reasoning capabilities. Furthermore, compared to ReAct and Search-R1, which either plateau early or fluctuate inconsistently across step budgets, \M~exhibits monotonically improving trends on most benchmarks, further validating its effectiveness in long-horizon settings.

\textbf{\ding{186} Individual Action Ablation.}
Table~\ref{tab:action_level_ablation} analyzes the contribution of individual actions. Removing \texttt{<Plan>} decreases the average F1 score by 1.95\%, while removing \texttt{<Summary>} and \texttt{<Backtrack>} causes larger drops of 3.05\% and 3.85\%, respectively. Removing both memory actions yields the largest degradation (6.18\%), demonstrating that these actions are complementary for error correction and context compression.


\begin{table}[t]
\centering
\fontsize{8pt}{9pt}\selectfont
\setlength{\tabcolsep}{4.6pt}
\renewcommand{\arraystretch}{0.9}

\begin{tabular}{lccc}
\toprule
\rowcolor{gray!30}
\textbf{Variant} & \textbf{2Wiki} & \textbf{HotpotQA} & \textbf{Avg.} \\
\midrule

Full \M & \textbf{32.92} & \textbf{44.00} & \textbf{33.55} \\

\midrule

w/o \texttt{<Plan>} & 31.00 & 41.80 & 31.60 \\
w/o \texttt{<Summary>} & 29.90 & 40.10 & 30.50 \\
w/o \texttt{<Backtrack>} & 28.80 & 38.60 & 29.70 \\
w/o both memory actions  & 27.54 & 28.77 & 27.37 \\

\bottomrule
\end{tabular}

\caption{Leave-one-action-out ablation of \M~on Qwen2.5-3B.}
\label{tab:action_level_ablation}
\end{table}

\textbf{\ding{187} Effect of SFT Initialization.}
We compare SFT only, cold-start RL, and RL with SFT initialization on Qwen2.5-3B. Their average F1 scores across seven benchmarks are $22.40 \pm 0.50$$\%$, $33.55 \pm 0.42$$\%$, and $35.20 \pm 0.40$$\%$, respectively. SFT alone underperforms RL, while SFT initialization further improves RL performance. These results indicate that SFT is not required for our framework, but stronger initialization remains beneficial.

\section{Related Works}
\paragraph{\textbf{Reinforcement Learning for LLMs}}
Reinforcement learning has recently shown strong effectiveness in improving LLM reasoning and tool-use capabilities, as demonstrated by OpenAI-O1~\cite{openaio1} and DeepSeek-R1~\cite{guo2025deepseek}. 
While policy gradient methods such as PPO~\cite{schulman2017proximal} are widely used, they incur high computational costs due to repeated on-policy updates. 
Alternative approaches, including DPO~\cite{rafailov2023direct} and GRPO~\cite{shao2024deepseekmath}, improve training efficiency but face challenges such as off-policy bias or limited exploration.
Recent advances target multi-step reasoning and tool interaction, proposing techniques to stabilize training and improve exploration~\cite{dong2025agentic,dong2025agentic2,zhou2025mem1}. 
Nevertheless, existing RL methods remain limited by coarse actions, sparse rewards, and error accumulation in long-horizon reasoning.

\section{Conclusion and Future Work}
We presented \M, a reinforcement learning framework for agentic retrieval-augmented reasoning that integrates fine-grained agentic actions, structured memory control, and action-aware rewards. Experiments across multiple reasoning benchmarks show that \M~consistently outperforms strong RAG and agentic baselines while supporting more effective long-horizon reasoning.
Several challenges remain for future work. In particular, jointly optimizing high-level actions and their associated arguments, as well as resolving conflicts among different agentic actions, remains difficult and calls for more principled coordination and credit assignment. We also plan to evaluate \M~in broader vertical domains and more open-ended scenarios, such as deep research and long-horizon tasks.

\section*{Acknowledgments}
This work was supported by the National Natural Science Foundation of
  China (Grant No.~62506010).

\section*{Limitations}
Despite the promising results, our framework does have some limitations that need to be addressed.  First, the computational overhead introduced by the memory stack system and dynamic retrieval processes can be relatively high.  While these components are essential for managing complex queries and minimizing error accumulation, they do contribute to increased computational costs.  In large-scale or real-time applications, this could present challenges related to processing speed and efficiency.  Although we have proposed solutions like speculative sampling to alleviate some of this overhead, further optimization is still needed to enhance performance and scalability in time-sensitive environments.
Second, our training strategy relies on a rejection-based sampling mechanism that prioritizes more challenging examples to enhance reasoning ability.  While effective, this approach requires careful tuning of several hyperparameters, as different configurations can lead to varying performance outcomes.  This additional tuning effort may complicate practical deployment and affect performance when transferring the framework to new datasets or tasks.  Developing more adaptive or self-calibrating sampling strategies to reduce reliance on manual hyperparameter selection is therefore an important direction for future work.

\section*{Ethical considerations}
To evaluate the performance of the proposed method, we carried out experiments exclusively on publicly accessible benchmark datasets, such as GAIA, HotpotQA, MusiQue, and other widely used resources, in accordance with their respective licenses and usage policies. 
We further confirm that no personally identifiable information was used in this study, and that no human or animal subjects were involved in the research.




\bibliography{custom}

\appendix

\newpage

\addtocontents{toc}{\protect\setcounter{tocdepth}{2}}

\section*{Appendix}
\tableofcontents

\section{More Related Work}
\subsection{Reinforcement Learning for LLMs}
Reinforcement learning (RL) has emerged as a powerful paradigm for improving LLM alignment and reasoning, with recent work such as OpenAI-O1~\cite{openaio1} and DeepSeek-R1~\cite{guo2025deepseek} demonstrating strong performance on complex tasks~\cite{ye2026rubric,zhou2026look,lv2026pcsd,liu2025auditing,xie2026unlocking,xie2026edgeexperiencedistillationguidedexploration}. 
Methods like PPO~\cite{schulman2017proximal} have been widely adopted, but require multiple rounds of on-policy optimization, making them computationally expensive and challenging to scale. To improve efficiency, alternatives such as Direct Preference Optimization (DPO)~\cite{rafailov2023direct} bypass reward modeling by directly optimizing preference loss, but suffer from off-policy bias. Group Relative Policy Optimization (GRPO)~\cite{shao2024deepseekmath} offers a compromise by avoiding learned value networks and leveraging intra-group comparisons for relative rewards.
Building on these, recent RL methods specifically designed for multi-step reasoning and tool interaction include AEPO~\cite{dong2025agentic}, which addresses entropy imbalance and rollout collapse in multi-round agents; ARPO~\cite{dong2025agentic2}, which guides exploration of uncertain high-entropy tool calls using advantage attribution; and MEM1~\cite{zhou2025mem1}, which synergizes memory and reasoning to support long-horizon multi-step interactions. Despite these advances, existing RL approaches still face key challenges: \textbf{coarse-grained action modeling, trajectory-level scalar rewards, limited exploration, cumulative errors in multi-step reasoning, and the sparsity of training samples that require diverse reasoning behaviors}.

\section{Training Algorithm}
\label{sec: algorithm} 
Algorithm \ref{alg:llm_rollout} details the full rollout engine with
a maximum action budget~$B$.

\begin{algorithm}[H]
\footnotesize
\caption{Dynamic Rollout with Multi-Action Reasoning and Stack-Based Memory}
\label{alg:llm_rollout}
\begin{algorithmic}[1]
\Require query $q$, policy $\pi_\theta$, retriever $\mathcal{R}$, budget $T$.
\Ensure Generated reasoning trajectory $y$

\State Initialize trajectory $y \leftarrow \emptyset$
and memory stack $\mathcal{M} \leftarrow \emptyset$

\While{$t < T$}
    \State Initialize action segment token $x_t \leftarrow \emptyset$
    \While{$x$ is not an action delimiter}
        \State Sample token $x \sim \pi_\theta(\cdot \mid q, y, \mathcal{M})$
        \State Append $x$ to $x_t$
    \EndWhile

    \State Append $x_t$ to trajectory $y$
    \State Parse composed action $a_t \leftarrow \mathrm{parse}(x_t)$

    \If{$a_t \in \{\texttt{Plan}, \texttt{Think}, \texttt{Search}, \texttt{Conclusion}\}$}
        \State \textbf{push}($\mathcal{M}$, content of $x_t$)
    \EndIf

    \If{$a_t == \texttt{Search}$}
        \State Extract query $q^s_t \leftarrow \mathrm{parse}(x_t)$
        \State Retrieve observation $o_t \leftarrow \mathcal{R}(q_t)$
        \State Append $\langle\texttt{Observation}\rangle o_t \langle/\texttt{Observation}\rangle$ to $y$
        \State \textbf{push}($\mathcal{M}$,$\langle\texttt{Observation}\rangle o_t \langle/\texttt{Observation}\rangle$)
        \State Modify Retrieval Loss Mask

    \ElsIf{$a_t == \texttt{Backtrack}$}
        \State \textbf{pop}($\mathcal{M}$),  \textbf{push}($\mathcal{M}$, revised reasoning from $x_t$)
        \State Modify Pop-based Attention Mask
    \ElsIf{$a_t == \texttt{Summary}$}
        \State \textbf{pop}($\mathcal{M}$),  \textbf{push}($\mathcal{M}$, summarized content)
        \State Modify Pop-based Attention Mask
    \ElsIf{$a_t == \texttt{Conclusion}$}
        \State \textbf{Return} $y$
    \EndIf

    \State $t \leftarrow t + 1$
\EndWhile

\State \textbf{Return} $y$
\end{algorithmic}
\end{algorithm}

\section{Notations Table}
\label{sec:app notation}
Table~\ref{tab:notations} presents a comprehensive list of key notations and symbols used in the \M\ framework.

\begin{table}[htbp]
\centering
\small
\begin{tabular}{p{1cm} p{6.2cm}} 
\toprule
\rowcolor{gray!10}
\textbf{Symbol} & \textbf{Description} \\
\midrule

$\mathcal{S}$ & System state capturing the current reasoning context \\
$\mathcal{A}$ & Set of reasoning actions, consisting of primitive actions (\texttt{push}/\texttt{pop}) and composed actions (e.g., \texttt{Plan}, \texttt{Search}) \\
$\mathcal{M}$ & Stack-based short-term memory of reasoning units \\
$\delta$ & State transition function over system state and memory \\

$s_t$ & System state at reasoning step $t$ \\
$a_t$ & Reasoning action taken at step $t$ \\
$\mathcal{M}_t$ & Memory stack at step $t$ \\

\hline

$q$ & Input query sampled from the training dataset $\mathcal{D}$ \\
$y$ & Generated reasoning trajectory $(a_1,\dots,a_T)$\\
$\pi_\theta$ & Policy model parameterized by $\theta$ \\
$\pi_{\mathrm{ref}}$ & Frozen reference policy for KL regularization \\
$K$ & Number of rollouts generated per input \\
$\mathcal{R}$ & External retrieval module or search engine \\
$r_\phi(q,y)$ & Scalar reward function evaluating trajectory $y$ given input $q$ \\
$x^{(k)}_{i}$ & The $i$-th generated token in the $k$-th trajectory $y^{(k)}$. \\
$R_G$ & The set of token-level rewards across all trajectories in the current rollout group \\
$r^{(k)}_{i}$ & The token-level reward assigned to the token $x^{(k)}_{i}$. \\
$\hat{A}_{i,t}$ & Group-relative advantage for trajectory $i$ at step $t$ \\
$\mathcal{J}(\theta)$ & The Group Relative Policy Optimization (GRPO) objective. \\

$\tilde{z}_{i,t}$ & Clipped importance sampling ratio for trajectory $i$ at step $t$ \\
$\mathbb{D}_{\mathrm{KL}}(\cdot\|\cdot)$ & Kullback--Leibler divergence between policies \\
$\beta$ & KL regularization coefficient \\

\hline

$r_{\text{out}}^{(k)}$ & Outcome-level reward evaluating final answer \\
$r_{\text{proc}}^{(k)}$ & Process-level reward in the $k$-th trajectory\\
$\mathbb{I}^{(k)}_{\text{mask},i}$ & Indicator function that equals 1 if the token $x_k^{(i)}$ is masked from policy optimization, and 0 otherwise. \\
$\mathbb{I}^{(k)}_{\text{act},i}$ & Indicator function that equals 1 if the token $x_k^{(i)}$ corresponds to an action invocation, and 0 otherwise. \\
$a^{(k)}_{\text{conclusion}}$ & The predicted final answer extracted from the \texttt{<Conclusion>} action in the $k$-th trajectory $y^{(k)}$ \\
$a_{\text{gold}}$ & The ground-truth answer used for evaluation \\
$r^{(k)}_{\text{fmt},i}$ & Action format reward enforcing syntactic correctness at action level\\
$r^{(k)}_{\text{rag},i}$ & Search-based reward evaluating retrieval relevance \\
$r^{(k)}_{\text{mem},i}$ & Memory-aware reward for backtracking or summarization \\
$\mathbb{I}_{\text{syntax}}(a_i)$ & Indicator for whether $a_i$ has valid <ACTION> tags (1 = valid, 0 = invalid). \\
$\alpha_{\text{type}}(a_i)$ & Action weight: 1.0 for CONCLUSION, 0.5 for other valid actions, 0 if invalid. \\

$q^s_i$ & Search query issued at step $i$ \\
$o_i$ & Retrieved observation at step $i$ \\

\hline

$\mathcal{B}$ & Mini-batch size (number of input prompts) \\
$\mathcal{Y}_{\mathcal{B}}$ & Set of all rollouts in a batch \\
$\mathcal{Y}_{\mathcal{B}}^{\mathrm{keep}}$ & Rollout subset retained after global selection \\
$\mu^{(b)}$ & Mean reward over rollouts for input $x^{(b)}$ \\
$(\sigma^{(b)})^2$ & Reward variance measuring input-level uncertainty \\
$\tilde{r}^{(b)}_i$ & Information-aware score for rollout $\tau^{(b)}_i$ \\
$p$ & Fraction of top-ranked rollouts retained\\

\bottomrule
\end{tabular}
\caption{Key Notations Used in \M}
\label{tab:notations}
\end{table}

\section{Experiment Datasets}
\label{appendix:datasets}
\subsection{Training Dataset.}
A central challenge in RL-based agent training lies in the limited difficulty of existing QA datasets.
Effective supervision requires questions \textbf{that demand non-trivial reasoning and long-horizon, multi-step retrieval}.
However, widely used multi-hop benchmarks such as HotpotQA and 2WikiMultiHopQA can often be solved with shallow reasoning and zero or only one retrieval step, making them insufficient for learning complex search and planning behaviors.
As a result, models trained on such data \textbf{tend to overfit short-horizon, template-like reasoning patterns and generalize poorly to challenging agentic scenarios}.

To address this limitation, we construct a curated training dataset following~\cite{gao2025turnsunlockinglonghorizonagentic} by filtering instances from HotpotQA and 2Wiki that are both solvable and intrinsically require multi-step search.
Specifically, for each candidate question, we prompt the training-free TC-RAG~\cite{jiang2024tcrag} to generate 10 independent trajectories under identical conditions, and discard questions that
\ding{182} are answered incorrectly in all trials,
\ding{183} achieve a correctness rate of 50\% or higher, or
\ding{184} can be answered correctly with at most a single retrieval step.

\subsection{Testing set.}

In our experiments, we evaluate our method on 9 widely used datasets spanning multi-hop reasoning, open-domain question answering, and real-world agent evaluation, including HotpotQA~\cite{yang2018hotpotqa}, 2WikiMultiHopQA~\cite{ho2020constructing}, Bamboogle~\cite{press2022measuring}, FRAMES~\cite{krishna2025fact}, 
MusiQue~\cite{trivedi2022musique}, NQ~\cite{kwiatkowski2019natural}
and  TriviaQA~\cite{joshi2017triviaqa}) 
. Dataset statistics, including the number of training, development, and test instances, are provided in Table~\ref{tab:dataset_overview}:
\begin{itemize}[leftmargin=*]
\item \textbf{2WikiMultiHopQA}~\cite{ho2020constructing} is a large-scale multi-hop question answering dataset constructed from Wikipedia and Wikidata. It contains 192,606 question--answer pairs grounded in a single database, with 154,878 training examples, 12,576 development examples, and 12,576 test examples. The dataset is designed to evaluate multi-hop reasoning that requires aggregating evidence across multiple documents.
\item \textbf{HotpotQA}~\cite{yang2018hotpotqa} is a large-scale dataset with 112,779 Wikipedia-based question-answer pairs requiring multi-hop reasoning over multiple documents. The dataset is divided into several subsets for training and evaluation. Specifically, the training set includes 18,089 single-hop questions (\textit{train-easy}), 56,814 multi-hop questions (\textit{train-medium}), and 15,661 hard multi-hop questions (\textit{train-hard}). The development set contains 7,405 hard multi-hop questions, while the test set is further divided into two settings: \textit{test-distractor} with 7,405 hard multi-hop questions and \textit{test-fullwiki} with another 7,405 hard multi-hop questions. This division allows for a comprehensive evaluation of models under different conditions. HotpotQA also provides sentence-level supporting facts required for reasoning, enabling models to improve performance and make explainable predictions.
\item \textbf{Bamboogle}~\cite{press2022measuring} is a small, handcrafted dataset with 125 questions designed to measure the extent to which a question answering system can answer varied compositional questions. It covers many different types of questions on various areas.
\item \textbf{FRAMES}~\cite{krishna2025fact} is a high-quality dataset designed to test 
LLMs'
factual responses, retrieval capabilities, and reasoning in generating final answers. It comprises 824 challenging multi-hop questions requiring integration of information from multiple sources.
\item \textbf{MuSiQue}~\cite{trivedi2022musique} is a multihop question answering dataset constructed from Wikipedia, focusing on questions that require connected reasoning across multiple steps. It contains 19,938 training examples, 2,417 development examples, and 2,459 test examples. MuSiQue is designed to enforce proper multihop reasoning by ensuring that each reasoning step critically relies on the output of the previous step.
\item \textbf{Natural Questions (NQ)}~\cite{kwiatkowski2019natural} is a large-scale question answering dataset based on real user queries issued. 
Each example pairs a question with a Wikipedia page and includes annotated long and short answers. The dataset contains 307,373 training examples, 7,830 development examples, and 7,842 test examples.
\item \textbf{TriviaQA}~\cite{joshi2017triviaqa} is a large-scale reading comprehension dataset containing over 650K question-answer-evidence triples derived from 95K question-answer pairs authored by trivia enthusiasts. It includes evidence documents from Wikipedia and the Web, with an average of six documents per question. The dataset is divided into training (61,888 for Wiki, 76,496 for Web), development (7,993 for Wiki, 9,951 for Web), and test sets (7,701 for Wiki, 9,509 for Web). TriviaQA is designed to test models' ability to handle complex questions with significant lexical and syntactic variability, requiring multi-sentence reasoning.
\end{itemize}
\begin{table}[ht]
\centering
\fontsize{9pt}{9pt}\selectfont
\setlength{\tabcolsep}{6pt}
\renewcommand{\arraystretch}{1.1}
\begin{tabular}{l|c|c|c}
\toprule
\rowcolor{gray!10}
\textbf{Dataset} & \textbf{Train} & \textbf{Dev} & \textbf{Test} \\
\midrule
2Wiki& 154,878 & 12,576 & 12,576 \\
HotpotQA  & 90,564 & 7,405 & 14,810 \\
Bamboogle  & 0 & 0 & 125 \\
FRAMES & 0 & 0 & 824 \\
MuSiQue  & 19,938 & 2,417 & 2,459 \\
NQ  & 307,373 & 7,830 & 7,842 \\
TriviaQA& 138,384 &17,944&17,210 \\
\bottomrule
\end{tabular}
\caption{Overview of datasets used in experiments.}
\label{tab:dataset_overview}
\end{table}

\section{Baseline Implementation Details}
\label{appendix:baselines}
To comprehensively evaluate the effectiveness of the proposed method, we compare it against a diverse set of strong baselines covering different paradigms of reasoning, retrieval, and agentic interaction. The implementation details are described below.

\paragraph{\ding{182} No-RAG.}
This category evaluates the model's capability when no retrieval mechanism is involved, serving to assess how well the LLM performs using only its internal parametric knowledge.
\begin{itemize}
    \item \textbf{No-RAG.}
    A non-retrieval baseline where the LLM directly generates responses based solely on its parametric knowledge, without access to any external information.
    \item \textbf{CoT (Chain-of-Thought)~\cite{wei2022chain}.}
    Following prior work, this baseline prompts the model to generate explicit intermediate reasoning steps before producing the final answer.
\end{itemize}
\paragraph{\ding{183} Naive RAG.}
Naive RAG methods incorporate retrieval in a heuristic or fixed manner without explicit reasoning–retrieval coordination, aiming to provide external knowledge through simple retrieval strategies.
\begin{itemize}
    \item \textbf{FS-RAG (Fix Sentence RAG)~\cite{trivedi2023interleaving}.}
    FS-RAG performs retrieval at the sentence level, where each sentence in the input context is independently used as a retrieval query to obtain relevant external knowledge.
    \item \textbf{FL-RAG (Fix Length RAG)~\cite{khandelwal2020generalizationmemorizationnearestneighbor}.}
    FL-RAG invokes retrieval at fixed token intervals, triggering the retrieval module every (n) generated tokens using the preceding token window as the query.
\end{itemize}
\paragraph{\ding{184} Agentic RAG.}
Agentic RAG methods rely on manually designed prompting strategies to interleave reasoning and retrieval, enabling the model to decide when and how to invoke external tools during inference.
knowledge through simple retrieval strategies.
\begin{itemize}
    \item \textbf{ReAct~\cite{react}.}
    ReAct interleaves reasoning and action steps, enabling the model to interact with external tools or environments during inference.
    \item \textbf{IRCOT~\cite{trivedi2023interleaving}.}
   IRCOT alternates between retrieval and Chain-of-Thought generation, where intermediate reasoning guides retrieval and retrieved evidence is used to refine reasoning.
    \item  \textbf{TC-RAG~\cite{jiang2024tcrag}.}
    TC-RAG introduces a Turing-complete RAG framework 
    with a memory stack, allowing the model to dynamically control retrieval through push and pop. 
\end{itemize}
\paragraph{\ding{185} RL-based Agentic RAG.}
RL-based Agentic RAG approaches apply RL to retrieval-augmented settings, training models to learn when and how to invoke external retrieval tools during multi-step reasoning.
\begin{itemize}
    \item \textbf{ReSearch~\cite{chen2025learning}.}
    ReSearch is trained using RL to jointly perform reasoning and search for multi-hop question answering, without requiring supervision over intermediate reasoning steps. We implement this baseline using Proximal Policy Optimization (PPO)~\cite{schulman2017proximal}.
    \item \textbf{Search-R1~\cite{jin2025search}.}
    Search-R1 trains the LLM via RL to autonomously interact with a search engine during reasoning. It generates multi-round search queries and integrates retrieved information for problem solving. We train this baseline using Group Relative Policy Optimization (GRPO)\cite{shao2024deepseekmath}.
    \item  \textbf{AEPO~\cite{dong2025agentic}.}
    AEPO adopts an entropy-balanced policy optimization framework to stabilize training in high-entropy agentic settings by balancing entropy during rollout generation and policy updates.
    \item  \textbf{ARPO~\cite{dong2025agentic2}.}
   ARPO introduces an entropy-aware adaptive rollout strategy that dynamically adjusts sampling at high-entropy decision points, encouraging diverse tool-use behaviors.
    \item  \textbf{Mem1~\cite{zhou2025mem1}.}
    Mem1 is an end-to-end RL framework for long-horizon multi-turn tasks under constant memory constraints, maintaining a compact internal state for reasoning and memory consolidation.
\end{itemize}



\section{Implementation Details}
\label{appendix:Implementation}
We implement our method using a GRPO-based reinforcement learning framework. Training is performed with a batch size of 16 and a learning rate of $1\times10^{-6}$, without quantization or KV caching. We train for a single GRPO epoch consisting of 300 optimization steps. For each prompt, four stochastic generations are sampled with a temperature of 0.7, diverse sampling enabled, and a diversity penalty of 1.0. Each generation is limited to 2048 new tokens, with a maximum aggregated context length of 4096 tokens and up to ten iterative generation steps per episode. Rewards are computed at the token level. The GRPO optimizer follows a clipped objective with $\beta=0.04$, $\mu=2$, and $\epsilon=0.1$. All hyperparameters are fixed across experiments, as training is conducted on a single curated dataset (Table~\ref{tab:impl_hyperparams}).

\begin{table}[ht]
\centering
\fontsize{9pt}{9pt}\selectfont
\renewcommand{\arraystretch}{1.2}
\setlength{\tabcolsep}{6pt}
\begin{tabular}{l|c}
\toprule
\rowcolor{gray!10}
\textbf{Hyperparameter} & \textbf{Value} \\
\midrule
Batch Size & 16 \\
Learning Rate & $1\times10^{-6}$ \\
GRPO Epochs & 1 \\
Steps per Epoch & 300 \\
Number of Generations & 10 \\
Max New Tokens & 2{,}048 \\
Max Context Length & 4{,}096 \\
Max Generate Iterations & 10 \\
Temperature & 0.7 \\
$\beta$ (GRPO) & 0.04 \\
$\mu$ (GRPO) & 2 \\
$\epsilon$ (GRPO) & 0.1 \\
\bottomrule
\end{tabular}
\caption{Implementation details and hyperparameter settings used in all experiments.}
\label{tab:impl_hyperparams}
\end{table}

\section{Prompt}
\label{sec: prompts} 
In this section, we provide the prompts used in our framework:

\begin{tcolorbox}[
  colback=lightgray!20,
  colframe=darkgray!80,
  title=\textsc{\M~Prompt Template}
]
\small
\textbf{Answer the following question to the best of your ability.  
You have access to the tools listed below.}

\medskip
\textit{[Insert \texttt{tool\_description}s here]}

\medskip
The reasoning process and final answer should be wrapped in the <ACTION> and </ACTION> tags, respectively:  \\
1. That is, "<Think> Write the reasoning process here </Think>"; \\
2. "<Conclusion> Write the final answer here </Conclusion>"; \\
3. Use the <Backtrack> tag to wrap your reflection results;  \\
4. Use the <Summary> tag to wrap your summary results;  \\
5. During the thinking process, **if necessary, the assistant can perform a search** to find uncertain knowledge, formatted as "<Search> search query. The search system will then provide the assistant with the retrieved information in the format "<Observation> ...search results... </Observation>".

\medskip
Please think strictly according to the described process and use the following
\textbf{Action Format}:

\begin{flushleft}
\texttt{User\_Query:} … \\
\texttt{<Think>} … \texttt{</Think>} \\
\texttt{<Search>} … \texttt{</Search>} \\
\texttt{<Summary>} … \texttt{</Summary>} \\
\texttt{<Backtrack>} … \texttt{</Backtrack>} \\
…(repeat \texttt{<Think>} / \texttt{<Search>} / \texttt{<Summary>} / \texttt{<Backtrack>} as needed) \\
\texttt{<Think>} I now know the final answer. \texttt{</Think>} \\
\texttt{<Conclusion>} … \texttt{</Conclusion>} \\
\end{flushleft}

\textbf{Begin!}
\end{tcolorbox}

\begin{tcolorbox}
[colback=lightgray!20,colframe=darkgray!80,title=GRM Relevance Evaluation Prompt]
\small
\label{tab:grm_relevance_prompt}

You are a relevance evaluator. Your task is to assess the semantic relevance of the retrieved documents with respect to the given question and task context, and output a relevance score between \textbf{0.0} and \textbf{1.0}, reported to \textbf{one decimal place} (e.g., 0.7).

\medskip
\textbf{Question:}  
\texttt{\{Question\}}

\textbf{Search Arguments:}  
\texttt{\{Search Args\}}

\textbf{Retrieved Documents:}  
\texttt{\{Search Docs\}}

\medskip
\textbf{\textit{Scoring Guidelines:}}
\begin{itemize}[leftmargin=*,noitemsep]
    \item \textbf{1.0}: Highly relevant; directly answers the question and strongly supports reasoning.
    \item \textbf{0.7--0.9}: Mostly relevant and useful, with minor noise or redundancy.
    \item \textbf{0.4--0.6}: Partially relevant; indirect, incomplete, or weakly aligned.
    \item \textbf{0.1--0.3}: Largely irrelevant or misleading, with only marginal relevance.
    \item \textbf{0.0}: Completely irrelevant to the question.
\end{itemize}

\medskip
\textbf{Response Format:}

\texttt{Relevance Score:}

\end{tcolorbox}

\begin{tcolorbox}[
  colback=lightgray!20,
  colframe=darkgray!80,
  title=GRM Memory Evaluation Prompt
]
\small

You are a memory-operation evaluator. Your task is to assess whether
the agent's memory operation is reasonable and beneficial under the
current question and stack context, and output a quality score between
\textbf{0.0} and \textbf{1.0}, reported to \textbf{one decimal place}
(e.g., 0.7).

\medskip
\textbf{Question:}  
\texttt{\{Question\}}

\textbf{Current Reasoning Context:}  
\texttt{\{Reasoning Context\}}

\textbf{Memory Stack Before Operation:}  
\texttt{\{Stack Before\}}

\textbf{Executed Memory Action:}  
\texttt{\{Memory Action\}}

\textbf{Memory Stack After Operation:}  
\texttt{\{Stack After\}}

\medskip
\textbf{\textit{Scoring Guidelines:}}
\begin{itemize}[leftmargin=*,noitemsep]
    \item \textbf{1.0}: Highly reasonable and highly beneficial; the
    operation clearly improves the current reasoning state. For
    \texttt{<Backtrack>}, it removes misleading, redundant, or
    no-longer-useful memory while preserving useful information. For
    \texttt{<Summary>}, it effectively compresses prior context while
    preserving key facts needed for future reasoning.
    \item \textbf{0.7--0.9}: Mostly reasonable and useful, with only minor
    information loss, redundancy, or imperfect compression/removal.
    \item \textbf{0.4--0.6}: Partially reasonable; the intention is
    acceptable, but the operation is only weakly helpful, incomplete, or
    somewhat misaligned with the current reasoning needs.
    \item \textbf{0.1--0.3}: Largely unreasonable or weakly beneficial;
    the operation removes useful information, keeps misleading content,
    or produces an unhelpful summary.
    \item \textbf{0.0}: Completely unreasonable or harmful; the operation
    clearly damages the current reasoning state or is irrelevant to the
    task.
\end{itemize}

\medskip
\textbf{\textit{Evaluation Principles:}}
\begin{itemize}[leftmargin=*,noitemsep]
    \item Judge the operation with respect to the current question and
    reasoning state.
    \item Reward operations that help correct wrong branches, reduce
    distraction, or compress context without losing key evidence.
    \item Penalize operations that discard important facts, preserve
    misleading branches, or produce vague/inaccurate summaries.
    \item Focus on whether the operation is reasonable and beneficial at
    this step, not whether the final answer is ultimately correct.
\end{itemize}

\medskip
\textbf{Response Format:}

\texttt{Memory Operation Score:}

\end{tcolorbox}

\section{Reward Analysis}
\label{sec:reward_analysis}
Our process rewards ($r_{\text{rag}}$ and $r_{\text{mem}}$) are scored by a general-purpose LLM judge (GRM) via a fixed evaluation prompt, requiring no additional reward model training.

\paragraph{Semantic Reward Analysis.}
To assess whether semantic process evaluation is necessary, we replace the LLM judge with heuristic rewards: ROUGE-L for retrieval relevance and a positive score for any valid memory operation. The average F1 score across seven benchmarks decreases from 33.55 to 29.40, and further drops to 19.94 when both $r_{\mathrm{rag}}$ and $r_{\mathrm{mem}}$ are removed, demonstrating the importance of semantic process supervision.

\section{Additional Experiments}
\label{sec:additional experiments}

\begin{table}[ht]
\centering
\fontsize{9pt}{10pt}\selectfont
\renewcommand{\arraystretch}{1}{
\setlength{\tabcolsep}{8pt}

\begin{tabular}{l|cc}
\toprule
\rowcolor{gray!10}
\textbf{Metric} & \textbf{TC-RAG} & \textbf{\M} \\
\midrule
Avg.Time (s)        & 11.86  & 10.15  \\
Avg.ResponseLen & 838.14 & 880.89 \\
Avg.Actions       & 10.14  & 11.38  \\
Tool-Call Ratio (\%)      & 91.70 & 97.50 \\
\bottomrule
\end{tabular}}
\caption{Statistical comparison of inference latency, response complexity, and tool-use behaviors between TC-RAG and \M~on Qwen2.5-3B.}
\label{tab:statistical_analysis}
\end{table}

\paragraph{\ding{182} Inference Behavior and Tool Usage.}
Table~\ref{tab:statistical_analysis} compares \M with TC-RAG using Qwen2.5-3B. \M reduces average inference latency from 11.86 to 10.15 seconds (14.4\%), while increasing the average number of actions by 12.2\% and response length by 5.1\%. Its tool-call rate increases from 91.7\% to 97.5\%, a difference of 5.8 percentage points. Thus, under this evaluation setting, the additional interactions do not result in higher measured average latency, although they produce more actions and slightly longer responses.





\begin{figure*}[t]
\centering
\setlength{\tabcolsep}{1pt}
\renewcommand{\arraystretch}{1.03}

\begin{tabular}{ccccc}
\toprule
 & \textbf{Reward} & \textbf{Entropy} & \textbf{Resp. Len.} & \textbf{\#Turns} \\
\midrule

{\tiny\textbf{Rej.=0.2}} &
\includegraphics[width=0.22\linewidth]{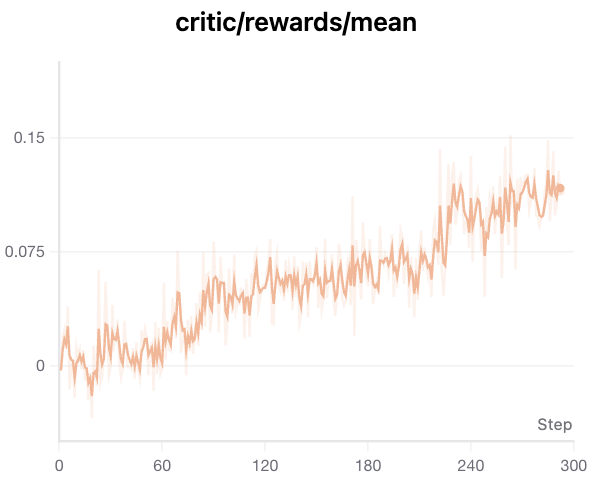} &
\includegraphics[width=0.22\linewidth]{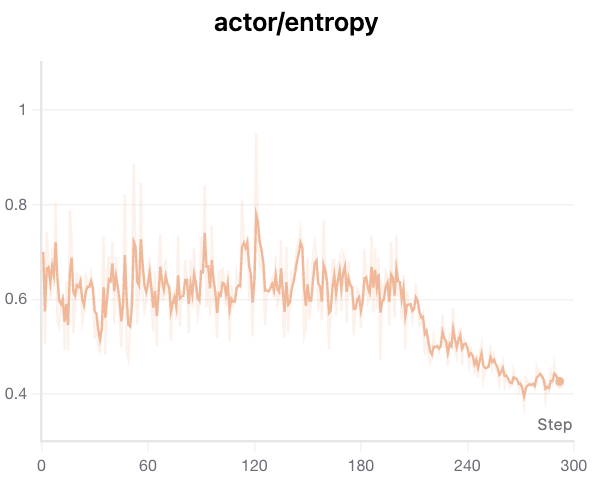} &
\includegraphics[width=0.22\linewidth]{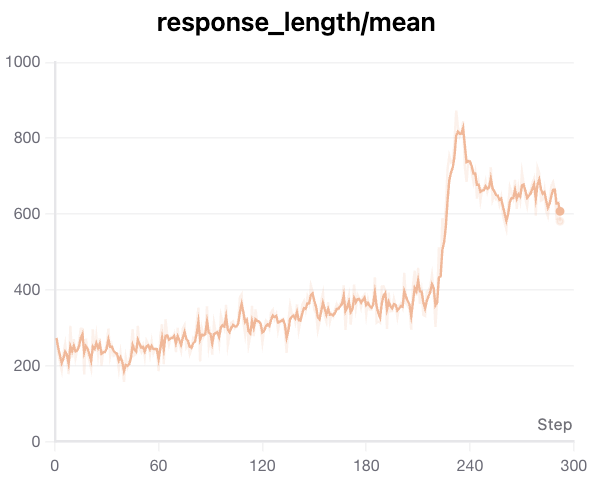} &
\includegraphics[width=0.22\linewidth]{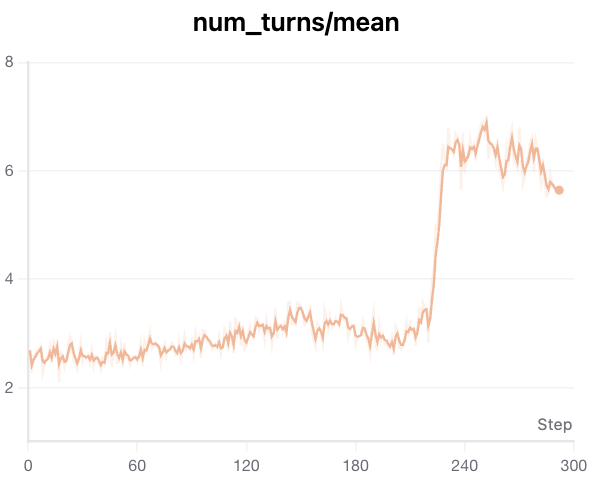} \\

{\tiny\textbf{Rej.=0.4}} &
\includegraphics[width=0.22\linewidth]{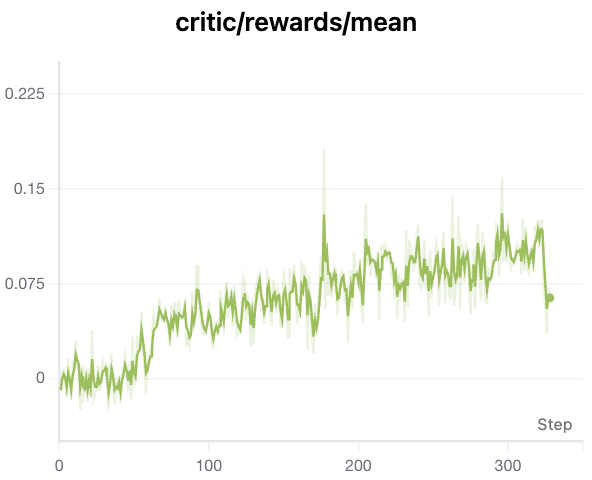} &
\includegraphics[width=0.22\linewidth]{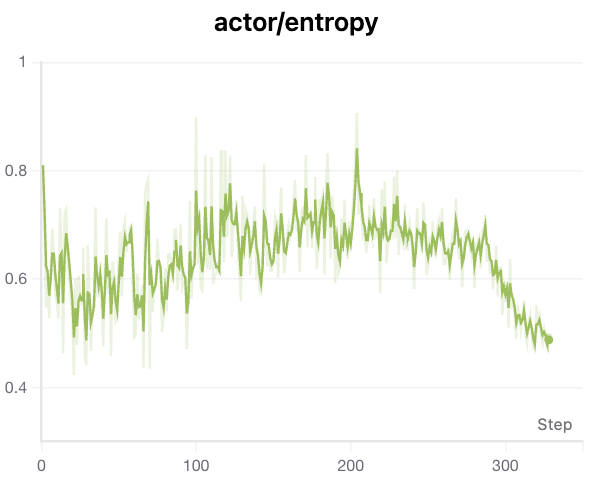} &
\includegraphics[width=0.22\linewidth]{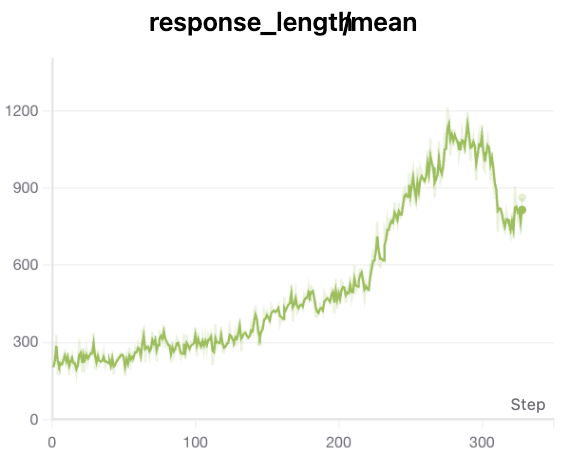} &
\includegraphics[width=0.22\linewidth]{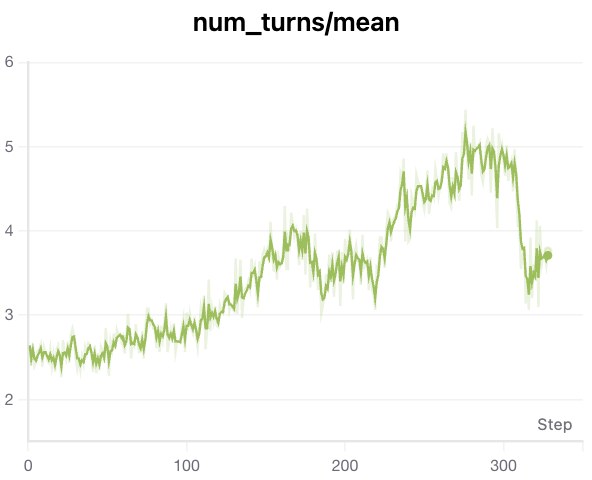} \\

{\tiny\textbf{Rej.=0.6}} &
\includegraphics[width=0.22\linewidth]{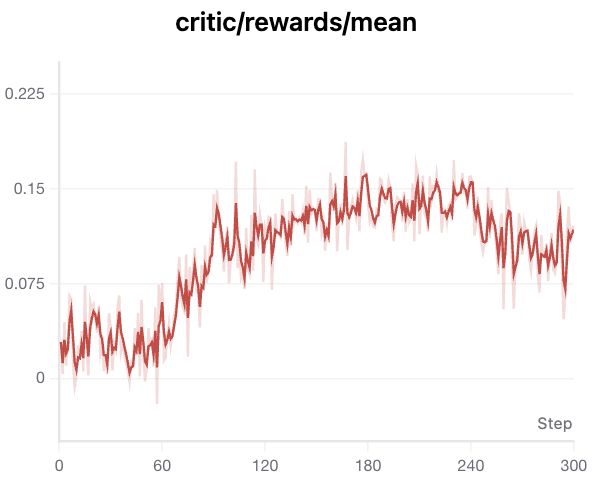} &
\includegraphics[width=0.22\linewidth]{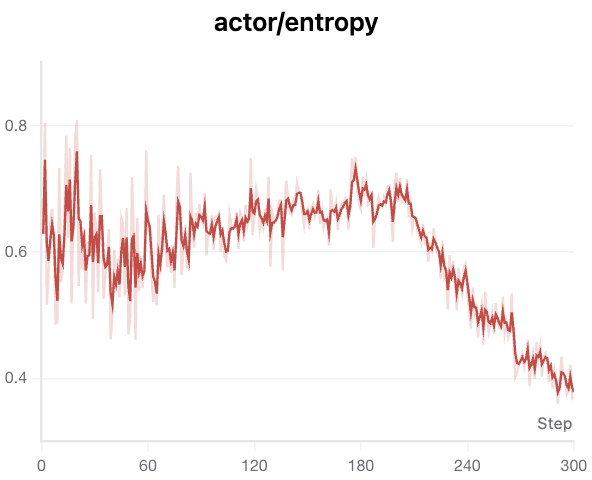} &
\includegraphics[width=0.22\linewidth]{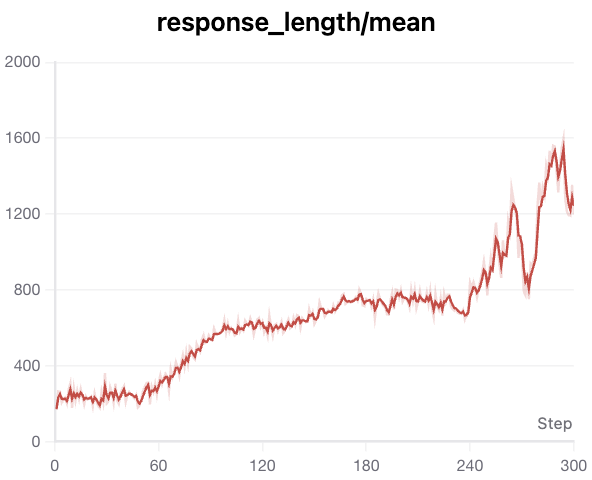} &
\includegraphics[width=0.22\linewidth]{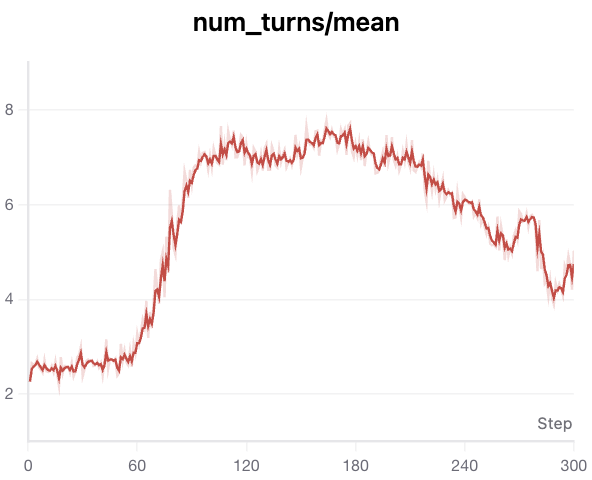} \\

{\tiny\textbf{Rej.=0.8}} &
\includegraphics[width=0.22\linewidth]{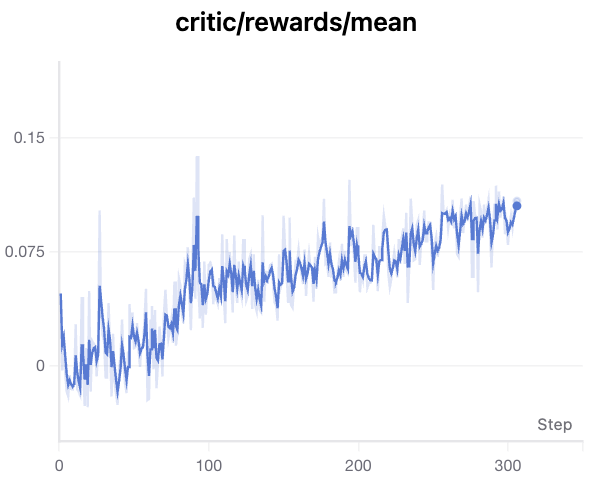} &
\includegraphics[width=0.22\linewidth]{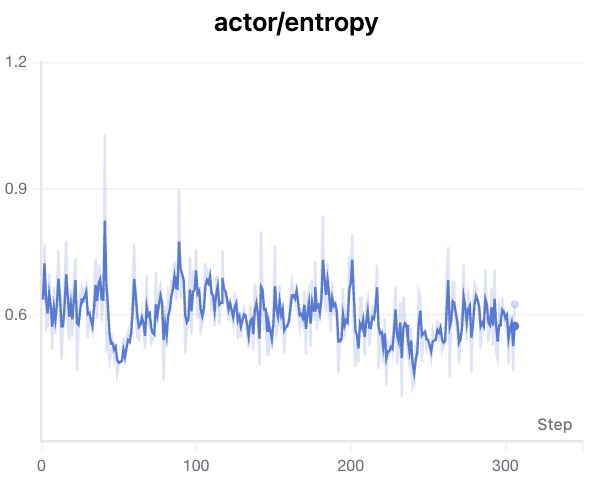} &
\includegraphics[width=0.22\linewidth]{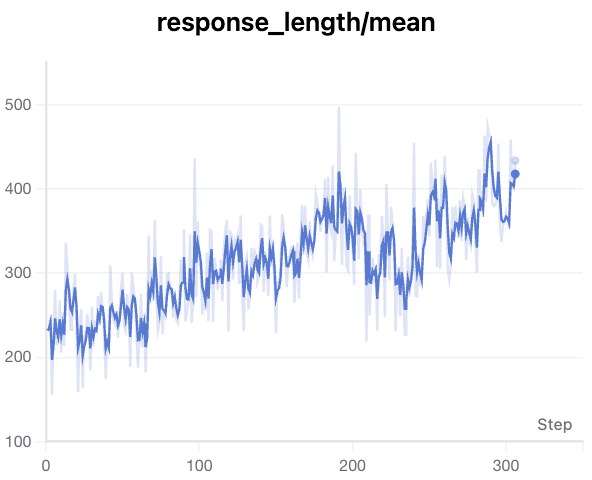} &
\includegraphics[width=0.22\linewidth]{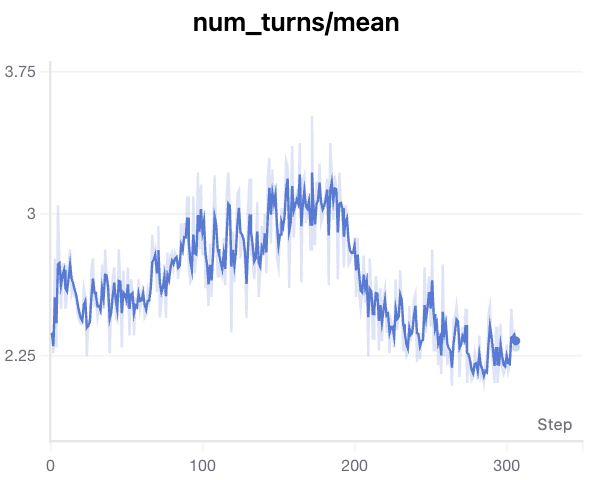} \\

{\tiny\textbf{No Rej.}} &
\includegraphics[width=0.22\linewidth]{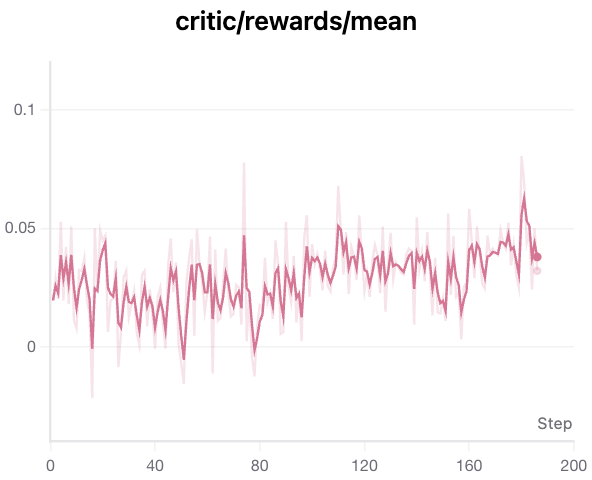} &
\includegraphics[width=0.22\linewidth]{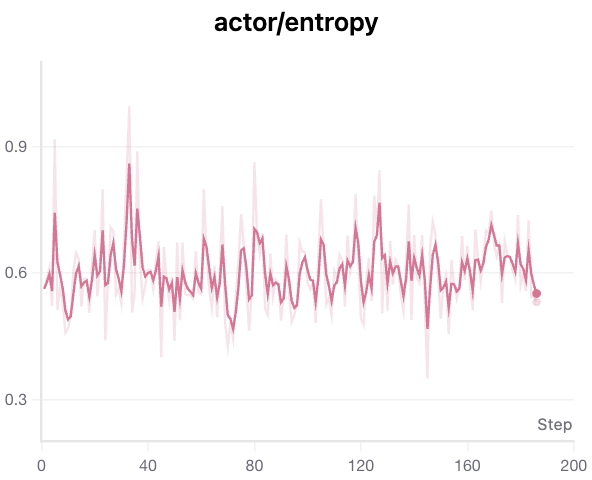} &
\includegraphics[width=0.22\linewidth]{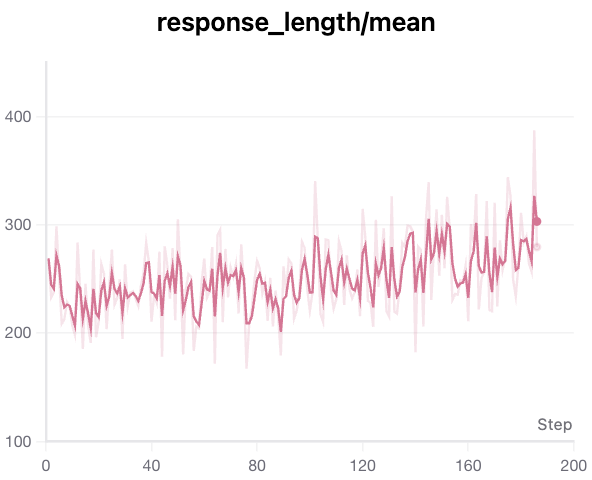} &
\includegraphics[width=0.22\linewidth]{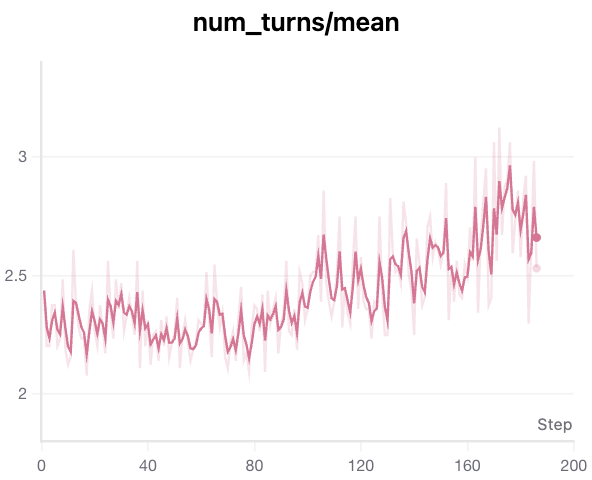} \\

\bottomrule
\end{tabular}

\caption{
Training dynamics of \M~during GRPO optimization on Qwen2.5-3B under different rejection ratios.
}
\label{fig:wandb_all}
\end{figure*}

\paragraph{\ding{183} Action-Specific Importance Analysis}
\begin{figure*}[t]
  \centering
  \includegraphics[width=0.32\linewidth]{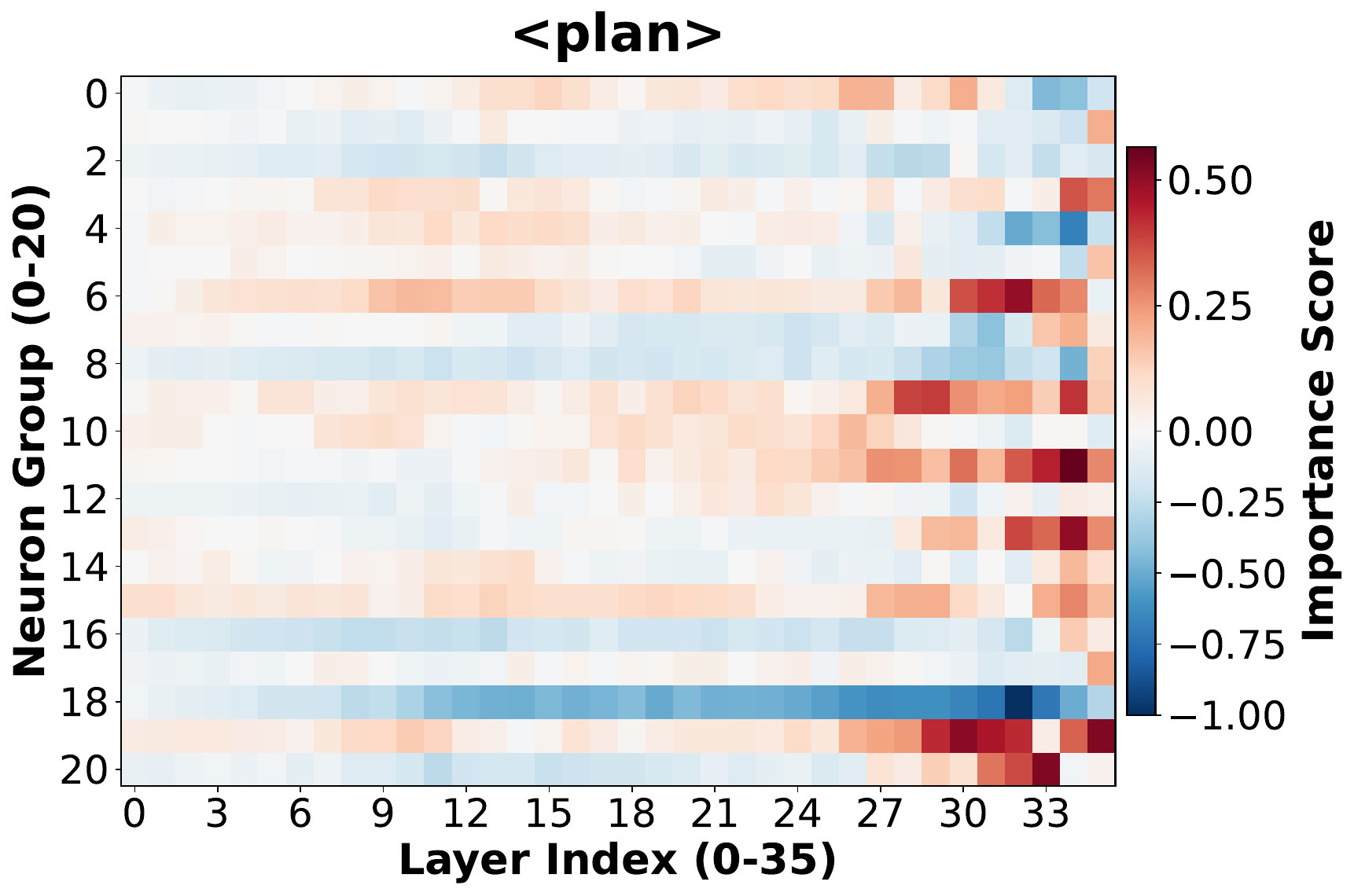}
  \includegraphics[width=0.32\linewidth]{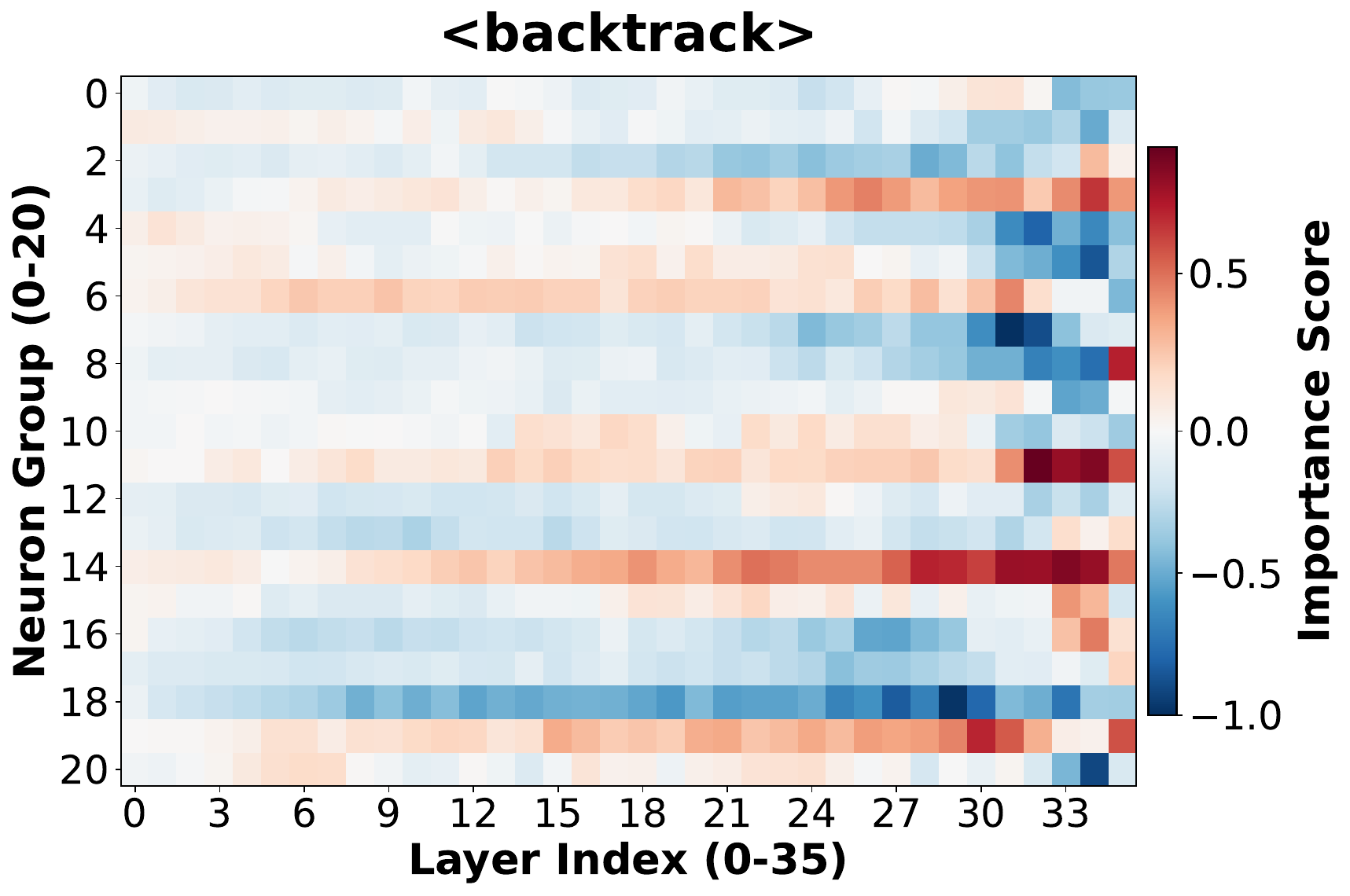}
  \includegraphics[width=0.32\linewidth]{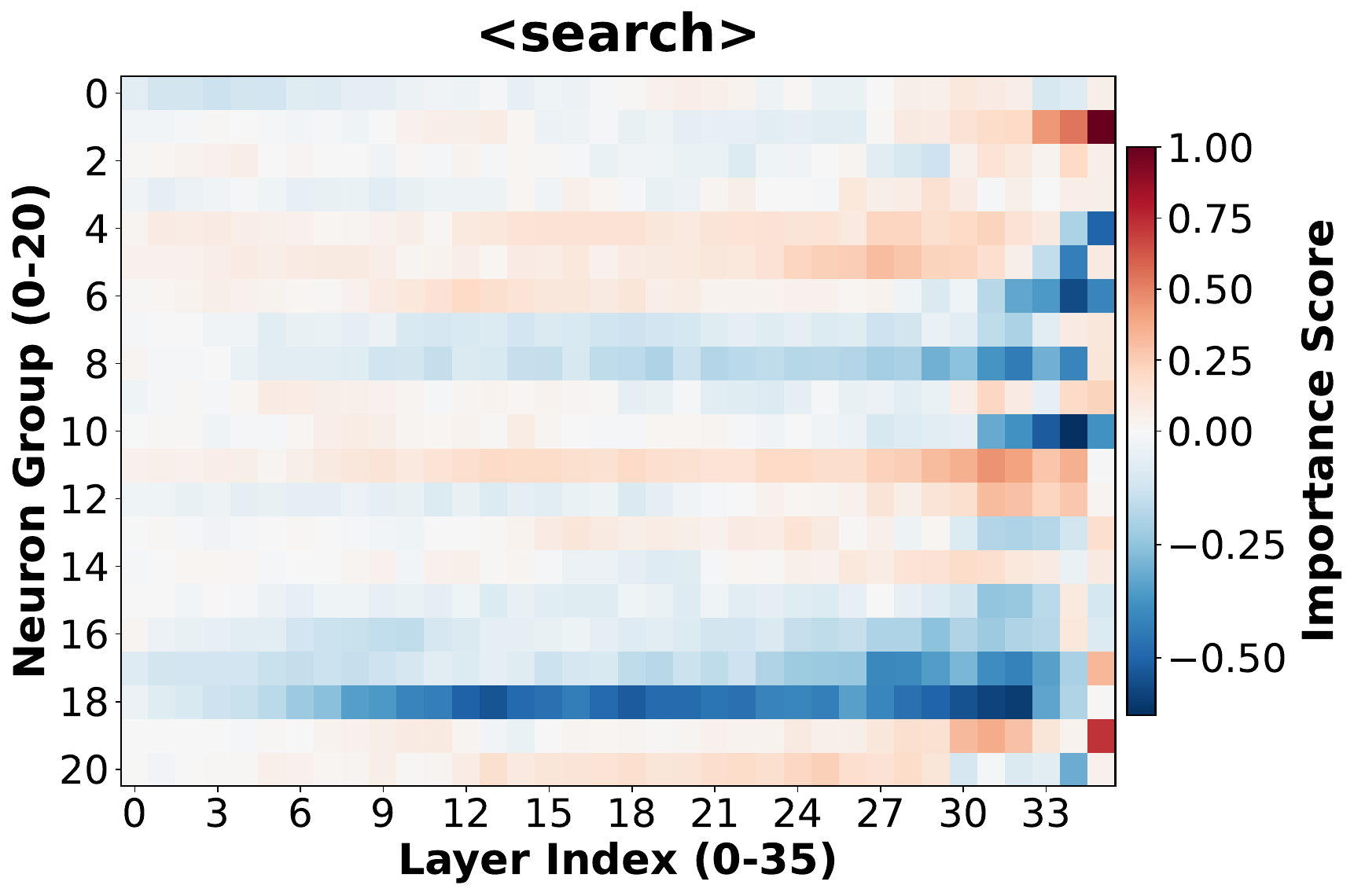}


  \includegraphics[width=0.32\linewidth]{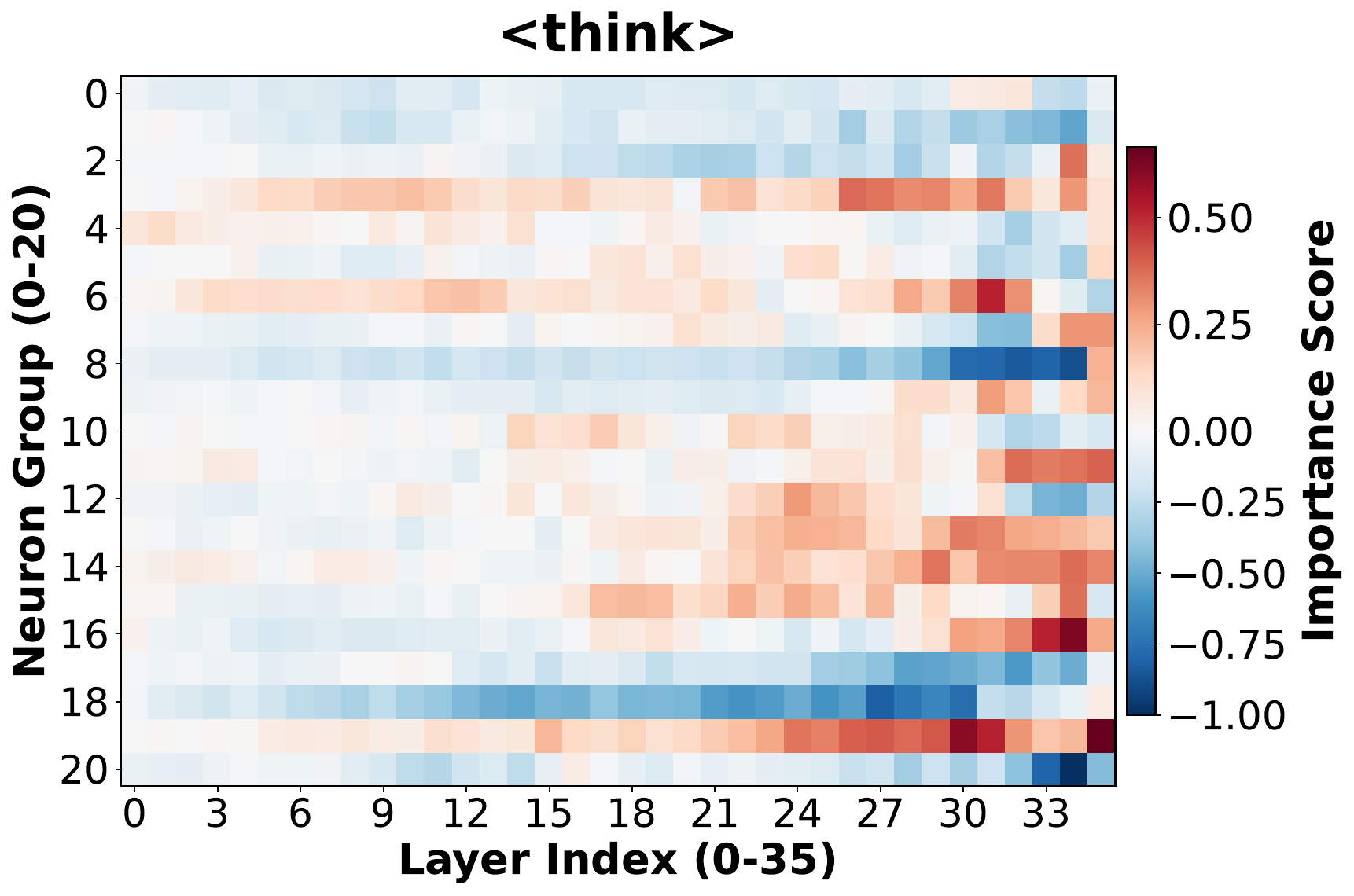}
  \includegraphics[width=0.32\linewidth]{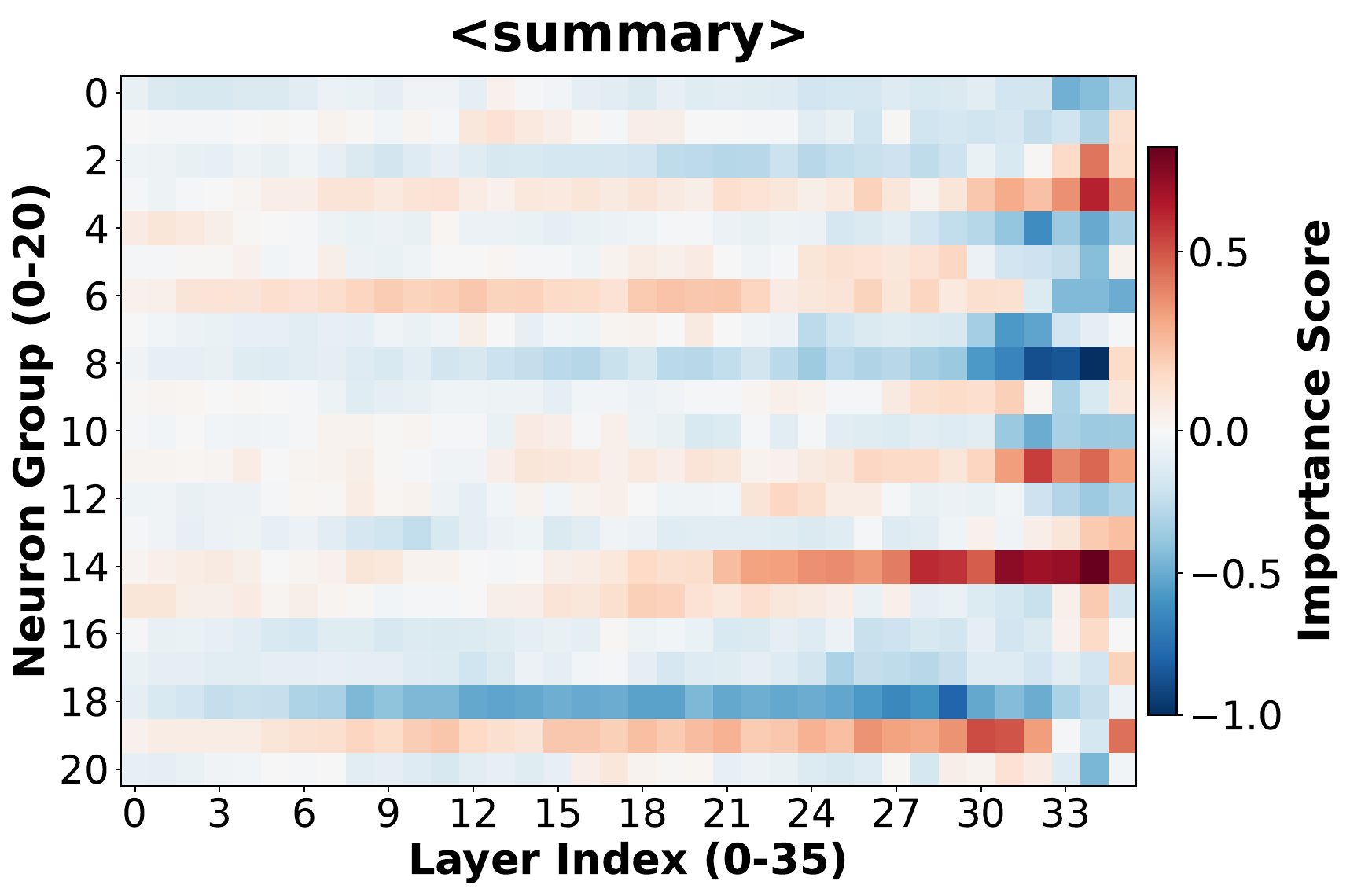}
  \includegraphics[width=0.32\linewidth]{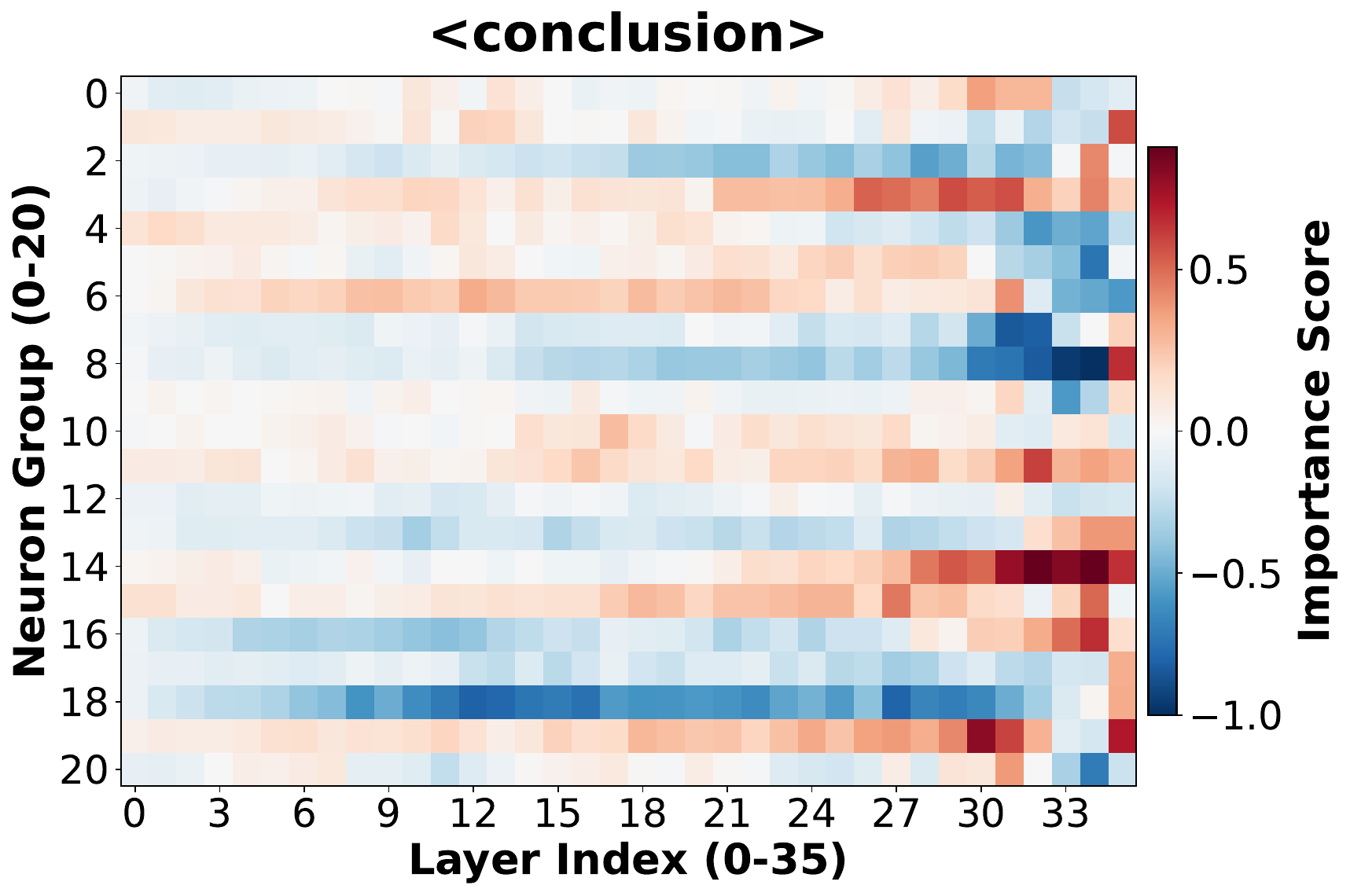}

  \includegraphics[width=0.32\linewidth]{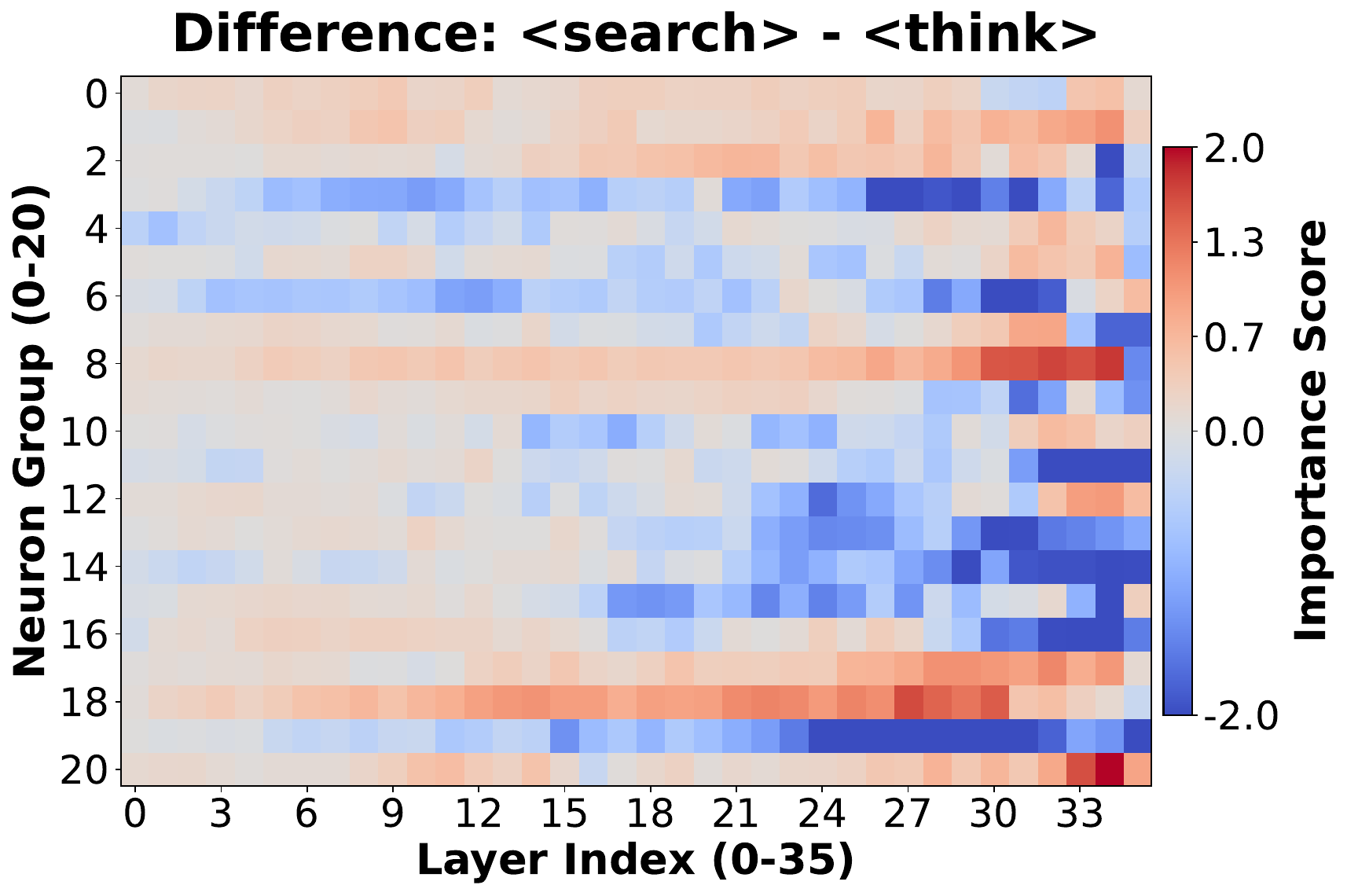}
  \includegraphics[width=0.32\linewidth]{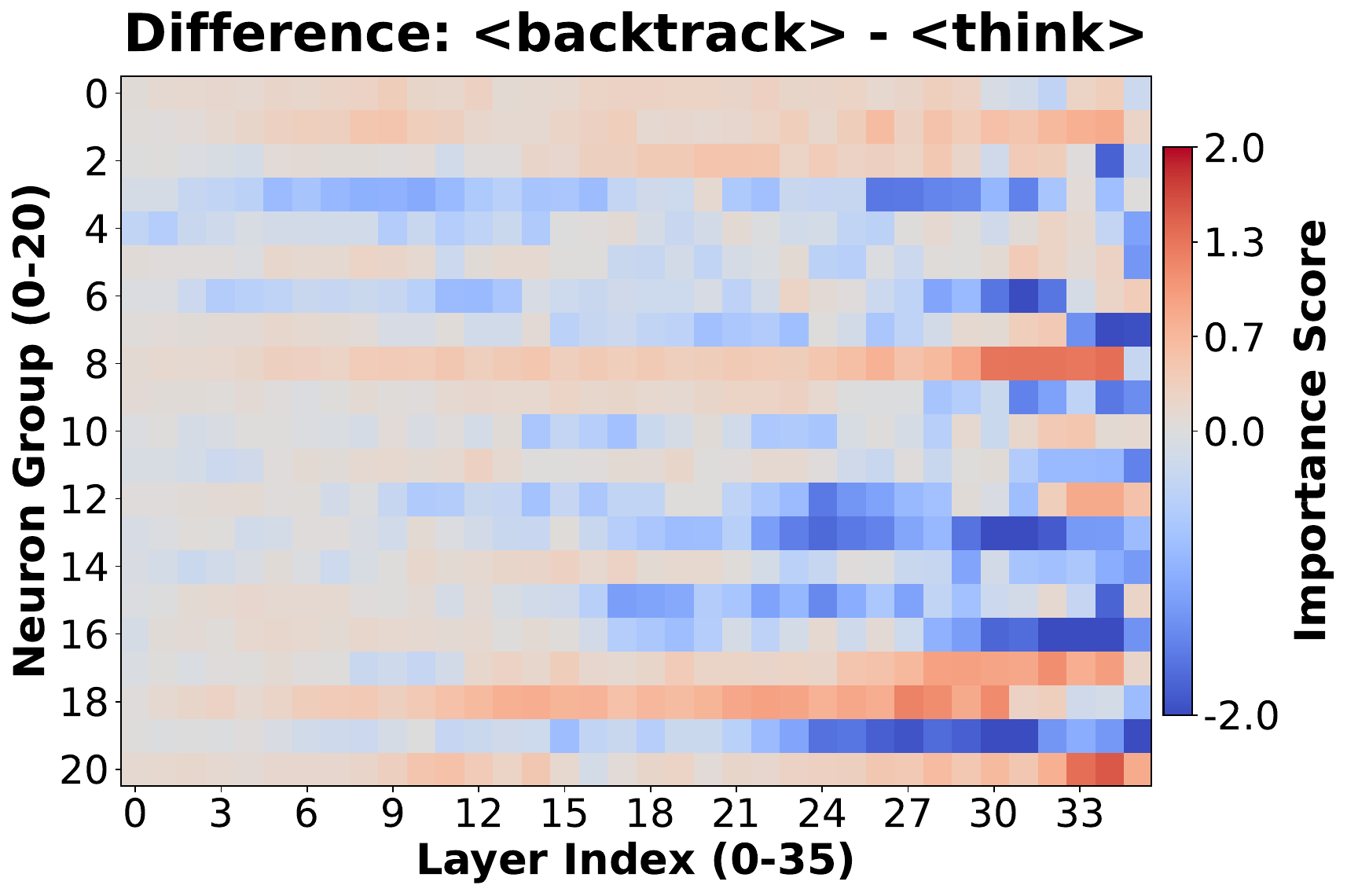}
  \includegraphics[width=0.32\linewidth]{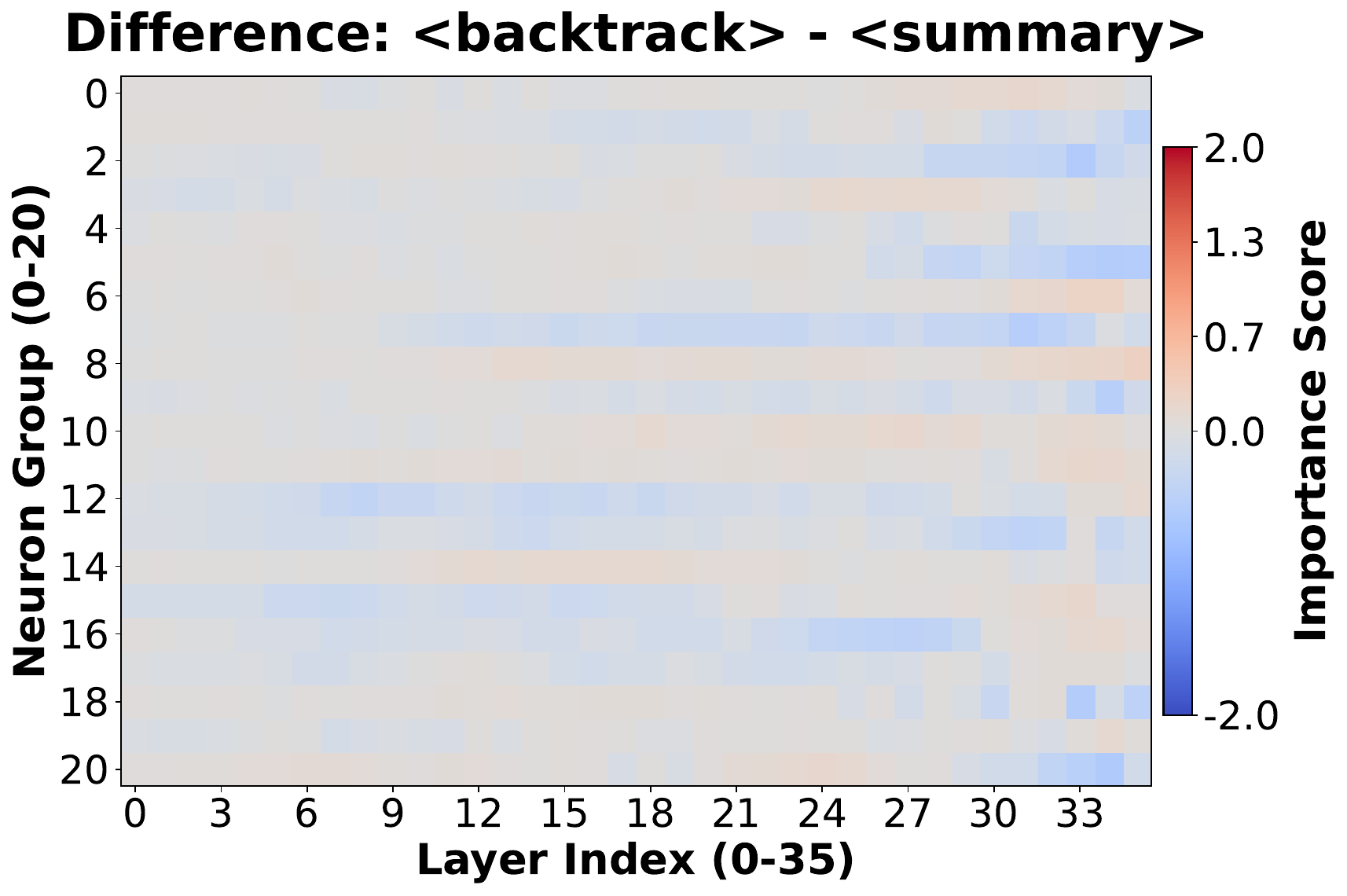}

  \caption{
Neuron-wise importance analysis of different actions. Top two rows: normalized importance scores for individual actions. Bottom row: pairwise differences of unnormalized importance scores between actions.
  }
\label{fig:Importance_Score_heatmap}
\end{figure*}

To analyze how different behaviors engage model parameters~\cite{xu2025parenting,zhang2025knowpo,yang2026dfams}, we define a layer-wise importance score based on neuron activations. For each action token (including both the action token and its argument tokens), we extract its hidden representation at every transformer layer and count the number of neurons with positive activations after the ReLU nonlinearity:
\begin{equation}
    \text{Importance}(l)=\sum \mathbf{ReLU}(\text{embedding}_l>0),
\end{equation}
which measures intensity of neuron participation at layer \(l\). Each embedding is 2048-dimensional and is visualized by averaging every 100 dimensions.

Importance scores are normalized within each action, as shown in the upper panels of Fig.~\ref{fig:Importance_Score_heatmap}. All statistics are computed on a challenging reasoning dataset collected via rejection sampling, retaining only correct-answer trajectories to avoid spurious activations from erroneous reasoning. The top six heatmaps present the \emph{normalized} importance profiles of six actions. At a coarse level, these actions exhibit broadly similar activation patterns, suggesting shared use of the model’s representation space. However, their importance distributions are not identical and encode action-specific differences in both magnitude and layer-wise allocation that are difficult to discern without direct comparison.

To make these distinctions explicit, the bottom row shows pairwise \textbf{differences computed from unnormalized importance scores}. Clear contrasts emerge between \texttt{<Search>} and \texttt{<Think>}, as well as between \texttt{<Backtrack>} and \texttt{<Think>}, indicating substantial divergence in their underlying neuron activations. In contrast, the difference between \texttt{<Summary>} and \texttt{<Backtrack>} is relatively small, suggesting that these memory-related actions rely on highly overlapping neural resources. Overall, the results indicate that tool invocation (\texttt{<Search>}), internal reasoning (\texttt{<Think>}), and memory manipulation (\texttt{<Backtrack>}, \texttt{<Summary>}) correspond to distinct and largely decoupled activation patterns. \textbf{This finding supports explicit action-level decomposition and fine-grained optimization, rather than treating all behaviors as a single homogeneous reasoning process.}

Building on the action-level importance analysis above, we further examine activation patterns at a finer action–argument granularity. As shown in Fig.~\ref{fig:Importance_Score_tag_args}, we separately compute layer-wise importance scores for action tokens and their associated argument tokens using the same activation-sum metric as before, but aggregating activations across layers instead of tokens. While the overall trends remain consistent with the action-level results, clear divergences emerge across several contiguous layer ranges (highlighted in red), where different actions exhibit distinct layer-wise distributions for both their action tags and argument tokens. Importantly, although both components are consistently involved across actions, their relative contributions vary across layers in an action-dependent manner. This suggests that different actions are characterized by distinct layer allocation patterns between action tags and arguments, rather than differences in overall activation magnitude alone.

\begin{figure}[t]
  \centering
  \includegraphics[width=0.5\textwidth]{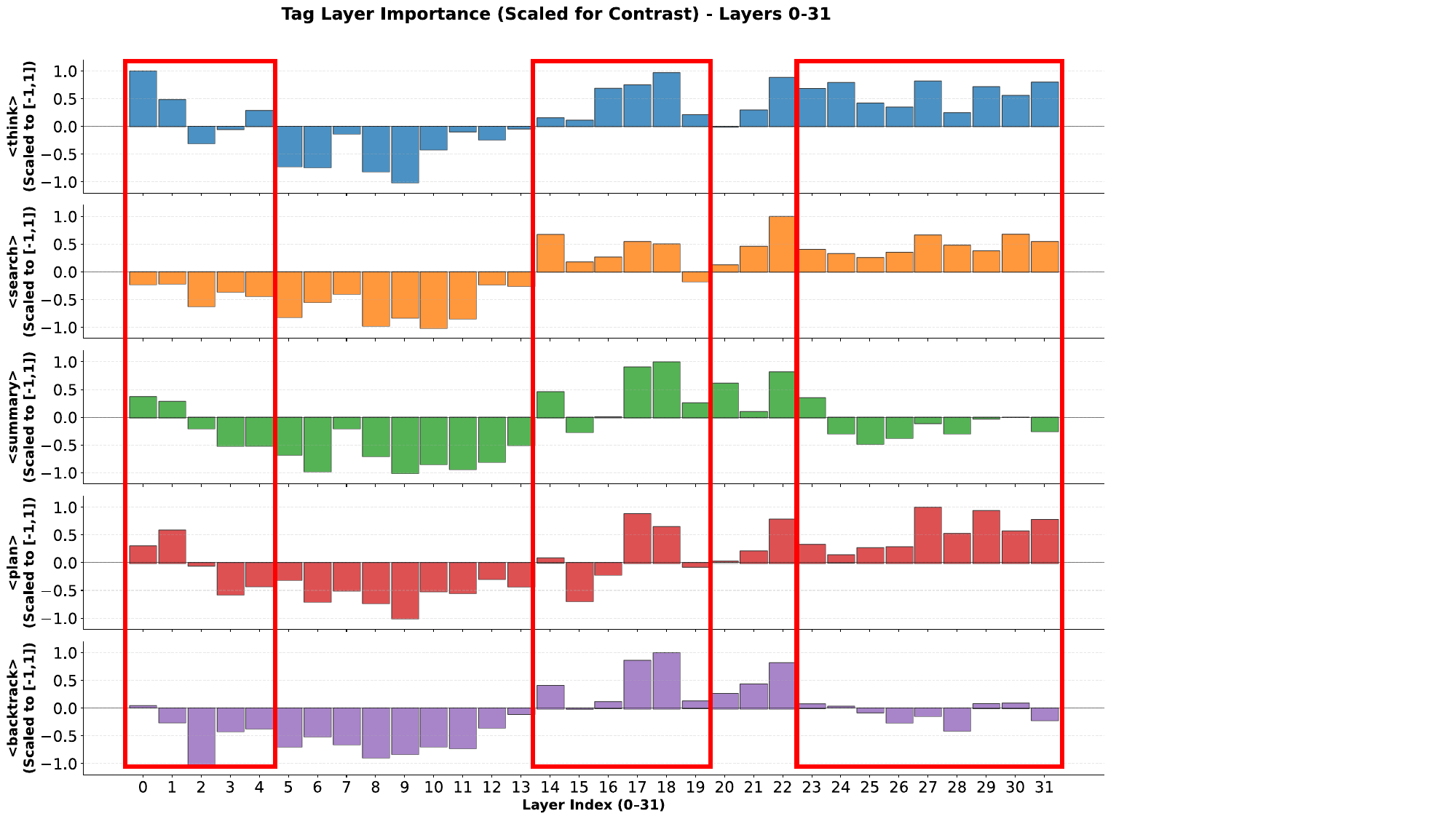}
  \includegraphics[width=0.5\textwidth]{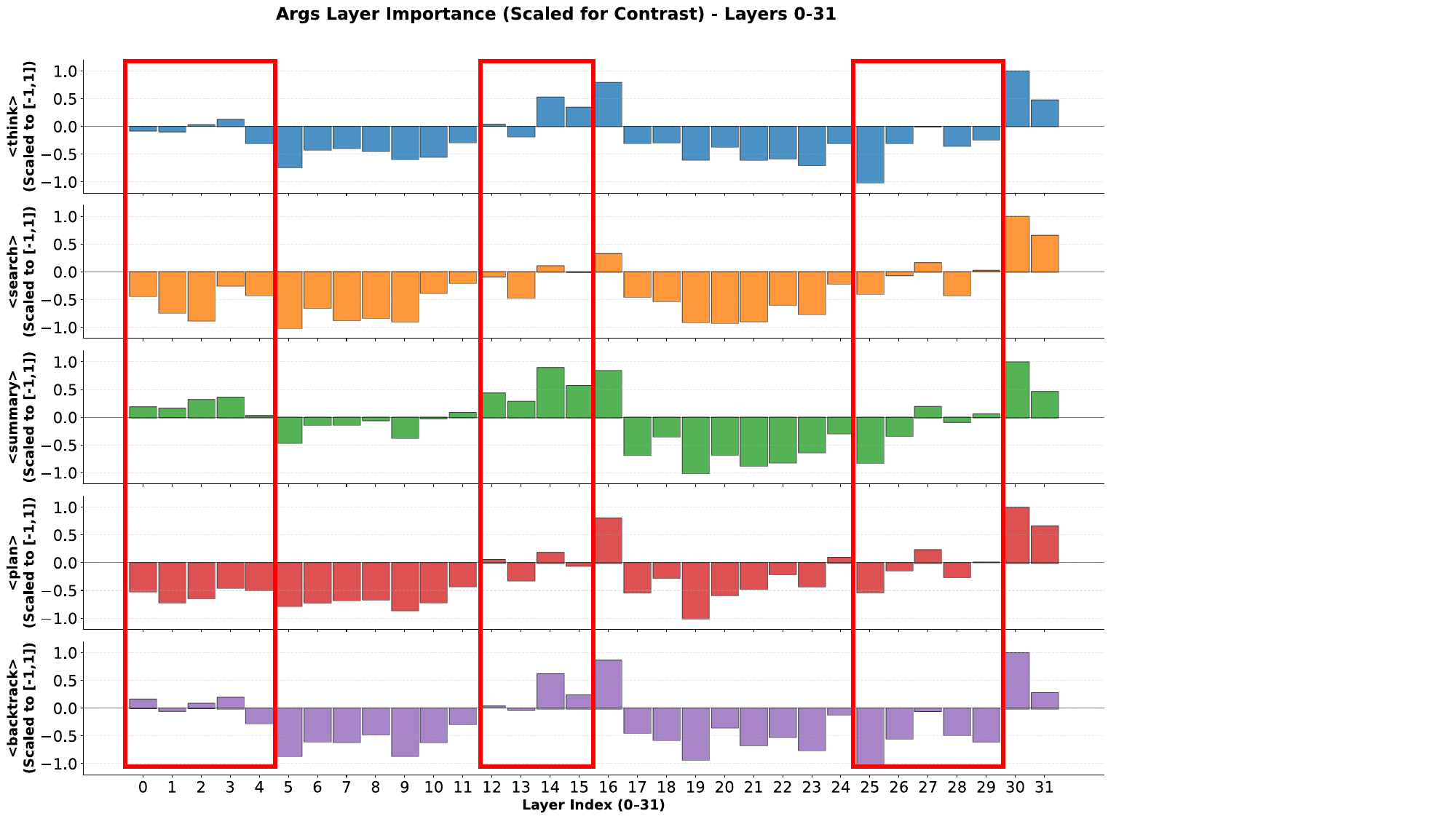}

\caption{
Layer-wise importance profiles of action tokens and their argument tokens.
}

\label{fig:Importance_Score_tag_args}
\end{figure}


\paragraph{\ding{184} Action-Specific Optimization with Self-Search and Decoupling.} 
Motivated by the observation that different actions rely on distinct and only partially overlapping neural resources, we investigate action-specific parameter optimization guided by model Shapley analysis. Specifically, we identify action-relevant parameters using SP-R1 and decouple the optimization by assigning different subsets of parameters to the args layer and action layer, allowing each action to be updated independently. We evaluate this approach on MedQA under two search paradigms: External Search (ES), where both training and testing rely on external Wiki retrieval, and Self-Search (SS)~\cite{fan2025ssrl}, which generates candidate retrievals during training without external knowledge and uses external Wiki search only at test time. This self-supervised search encourages better generalization to unseen questions. Table~\ref{tab:action_decoupling} reports the results: incorporating SS alone improves accuracy over ES (from 65.5\% to 67.5\%), demonstrating its strong effect on generalization, while decoupled, action-specific optimization (Decoup) further boosts performance to 72.2\% under SS and 68.5\% under ES. These results highlight that SS significantly enhances generalization by exposing the model to diverse self-generated retrievals during training, and that Decoup effectively leverages action-specific parameter structure, enabling targeted optimization per action and yielding consistent gains across paradigms.

\begin{table}[t]
\centering
\fontsize{8pt}{10pt}\selectfont
\begin{tabular}{l|lc}
\toprule
\rowcolor{gray!10}
\textbf{Paradigm} & \textbf{Method} & \textbf{Acc (\%)} \\
\midrule
Raw 
& Base & 63.5 \\
& React & 65.5 \\
\midrule
ES
& \M + ES & 66.3 \\
& \M + ES + Decoup & \textbf{68.5} \\
\midrule
SS
& \M + SS & 67.5 \\
& \M + SS+ Decoup & \textbf{72.2} \\
\bottomrule
\end{tabular}
\caption{Effect of action-specific (decoupled) parameter optimization under different search paradigms.}
\label{tab:action_decoupling}
\end{table}

\paragraph{\ding{185} Training-Time Memory-Judge Token Cost.} We quantify the token overhead of memory-operation evaluation during RL training. As shown in Table~\ref{tab:grm_token_cost}, each judge call consumes approximately 3,000--4,000 input tokens and 100--200 output tokens. Across a complete training run, memory-operation evaluation consumes approximately 3.5--4.0 million input tokens and 0.1--0.2 million output tokens. This overhead is incurred only during training and excludes retrieval-relevance judging.

\begin{table}[t]
\centering
\small
\begin{tabular}{lr}
\toprule
\rowcolor{gray!10}
\textbf{Component} & \textbf{Tokens per Call} \\
\midrule
Original context & 2,500--3,000 \\
Generated summary & 300--500 \\
Evaluation prompt and rubric & 200--300 \\
Total input & 3,000--4,000 \\
Output & 100--200 \\
\bottomrule
\end{tabular}
\caption{Estimated token consumption per memory-operation judge call.}
\label{tab:grm_token_cost}
\end{table}

\paragraph{\ding{186} Cross-Domain Rollout Reward Statistics.}
To examine whether the trajectory rejection strategy remains effective across different domains and model scales, we report the mean and variance of rollout rewards. As shown in Table~\ref{tab:rollout_stats}, meaningful rollout variance is preserved across all domains, confirming that the reward distributions do not collapse into indistinguishable ranges.

\begin{table}[h]
\centering
\small
\begin{tabular}{l|cc|cc}
\toprule
\multirow{2}{*}{\textbf{Domain}} & \multicolumn{2}{c|}{\textbf{Qwen2.5-3B}} & \multicolumn{2}{c}{\textbf{Qwen2.5-7B}} \\
& Mean & Var & Mean & Var \\
\midrule
Math & 0.44 & 0.07 & 0.46 & 0.09 \\
Medical & 0.37 & 0.10 & 0.53 & 0.12 \\
General & 0.41 & 0.12 & 0.65 & 0.11 \\
\bottomrule
\end{tabular}
\caption{Rollout reward statistics across domains and model scales.}
\label{tab:rollout_stats}
\end{table}

\paragraph{\ding{187} Scaling Analysis across Model Sizes.}
We compare average performance gains of \M~over TC-RAG across three model scales. Smaller models benefit proportionally more from the framework, while larger models achieve the best absolute performance:
\begin{itemize}[leftmargin=*,noitemsep]
    \item Qwen2.5-1.5B: 17.9 $\rightarrow$ 27.6 (+54\%)
    \item Qwen2.5-3B: 22.2 $\rightarrow$ 33.5 (+50\%)
    \item Qwen2.5-7B: 30.5 $\rightarrow$ 36.6 (+20\%)
\end{itemize}
This pattern suggests that the structured action space and memory mechanisms are particularly effective at compensating for limited parametric capacity in smaller models.

\section{Case Study}

We present a representative case study to illustrate the model’s ability
to perform adaptive retrieval refinement and error correction.
As shown in Table~\ref{tab:case2}, the model initially
issues broad retrieval queries that return irrelevant or noisy information,
including unrelated medical content and off-topic encyclopedic entries.
Instead of committing such noise to its internal state, the model explicitly
invokes backtracking to revise retrieval keys and progressively narrow the
search scope.
Through multiple iterations of retrieval, inspection, and memory revision, the
model correctly determines that \textit{Battleship} is the
2012 military science fiction action film directed by Peter Berg in which Jason
Snell worked.
These cases highlight the effectiveness of fine-grained memory control and
retrieval refinement in mitigating error accumulation during multi-step
reasoning.

\begin{table*}[htbp]
\centering
\begin{tabular}{
  >{\centering\arraybackslash}m{2cm}
  >{\raggedright\arraybackslash}m{11.25cm}
}
\hline
\textbf{Question} & On which 2012 American military science fiction action film, directed by Peter Berg, did Jason Snell work? \\ \hline
\textbf{Ground Truth} & Battleship \\ \hline
\textbf{Output} & 
\begin{tcolorbox}[colback=violet!12!white,colframe=white,boxsep=0pt,left=0pt,right=0pt,top=0pt,bottom=0pt,after=\vspace{-8pt} ]
<Plan>
First, I need to find the 2012 American military science fiction action film directed by Peter Berg. Then, I will search for the actors who worked on the film to identify Jason Snell.
</Plan>
\end{tcolorbox}

\begin{tcolorbox}[colback=green!15!white,colframe=white,boxsep=0pt,left=0pt,right=0pt,top=0pt,bottom=0pt,after=\vspace{-8pt} ]
<Search>
[Wiki\_RAG]:\textbf{ Peter Berg's} films
</Search>
\end{tcolorbox}

\begin{tcolorbox}
[colback=gray!12!white,colframe=white,boxsep=0pt,left=0pt,right=0pt,top=0pt,bottom=0pt,after=\vspace{-8pt} ]
<Observation>
'1. Gerald T. Olson is an American film producer best known for films such as House Party, Bloody Birthday and Dumb $\&$ Dumber. The San Francisco State alum relocated...'
</Observation>
\end{tcolorbox}
\begin{tcolorbox}[colback=violet!12!white,colframe=white,boxsep=0pt,left=0pt,right=0pt,top=0pt,bottom=0pt,after=\vspace{-8pt} ] 
<Backtrack> While the tool did provide some information about films directed by Peter Berg, \textbf{it didn't specify the 2012 military} science fiction action film I was looking for... Let me search for the 2012 military science fiction action film directed by Peter Berg. 
</Backtrack>
\end{tcolorbox}
\begin{tcolorbox}[colback=green!15!white,colframe=white,boxsep=0pt,left=0pt,right=0pt,top=0pt,bottom=0pt,after=\vspace{-8pt} ]
<Search>
[Wiki\_RAG]: Search for detailed information about the \textbf{2012 military} sci-fi action film \textbf{Peter Berg}
</Search>
\end{tcolorbox}
\begin{tcolorbox}
[colback=gray!12!white,colframe=white,boxsep=0pt,left=0pt,right=0pt,top=0pt,bottom=0pt,after=\vspace{-8pt} ]
<Observation> 
'Wu Junyong (; born 1978) is a Chinese artist born in Putian, a city in Fujian province, China...'
</Observation>
\end{tcolorbox}
\begin{tcolorbox}[colback=violet!12!white,colframe=white,boxsep=0pt,left=0pt,right=0pt,top=0pt,bottom=0pt,after=\vspace{-8pt} ]
<Backtrack>  
The tool didn't provide the information about the 2012 military science fiction action film directed by Peter Berg. I will now try searching for that specific film.
</Backtrack> 
\end{tcolorbox}
\begin{tcolorbox}[colback=green!15!white,colframe=white,boxsep=0pt,left=0pt,right=0pt,top=0pt,bottom=0pt,after=\vspace{-8pt} ]
<Search>  
[Wiki\_RAG]:\textbf{ Peter Berg movies 2012} 
</Search> 
\end{tcolorbox}

\begin{tcolorbox}
[colback=gray!12!white,colframe=white,boxsep=0pt,left=0pt,right=0pt,top=0pt,bottom=0pt,after=\vspace{-8pt} ]
<Observation> 
'1. House Party -1991 - drama 2. Bloody Birthday -1992 - horror 3. Dumb $\&$ Dumber -1994 - comedy 4. The Rundown -2003 - action 5. Patriots Day -2016 - drama The Hidden -1999 - horror Fast $\&$ Furious 6 -2013 - action'
</Observation>
\end{tcolorbox}

\begin{tcolorbox}[colback=violet!12!white,colframe=white,boxsep=0pt,left=0pt,right=0pt,top=0pt,bottom=0pt,after=\vspace{-8pt} ]
<Backtrack>  
From the previous searches, \textbf{no information about a 2012 military science fiction action film} directed by Peter Berg and which Jason Snell worked on was retrieved. This information needs to be carefully re-evaluated and corrected.
</Backtrack> 
\end{tcolorbox}

\begin{tcolorbox}[colback=green!15!white,colframe=white,boxsep=0pt,left=0pt,right=0pt,top=0pt,bottom=0pt,after=\vspace{-8pt} ]
<Search>  
[Wiki\_RAG]: 2012 military science fiction action film directed by Peter Berg Jason Snell worked
</Search> 
\end{tcolorbox}

\begin{tcolorbox}
[colback=gray!12!white,colframe=white,boxsep=0pt,left=0pt,right=0pt,top=0pt,bottom=0pt,after=\vspace{-8pt} ]
<Observation> 
'1. Peter Berg (born March 11, 1964) is an American director, producer, writer, and actor...
\textbf{the military science fiction war film Battleship (2012)}...'
</Observation>
\end{tcolorbox}

\begin{tcolorbox}[colback=violet!12!white,colframe=white,boxsep=0pt,left=0pt,right=0pt,top=0pt,bottom=0pt,after=\vspace{-8pt} ]
<Think>  
From the search result, \textbf{I see that the 2012 movie directed by Peter Berg is "Battleship"}. However, the query mentions Jason Snell worked in it. Let me \textbf{verify} if Jason Snell indeed played a role in this film.
</Think> 
\end{tcolorbox}

\begin{tcolorbox}[colback=green!15!white,colframe=white,boxsep=0pt,left=0pt,right=0pt,top=0pt,bottom=0pt,after=\vspace{-8pt} ]
<Search>  
[Wiki\_RAG]: Jason Snell in Battleship
</Search> 
\end{tcolorbox}

\begin{tcolorbox}
[colback=gray!12!white,colframe=white,boxsep=0pt,left=0pt,right=0pt,top=0pt,bottom=0pt,after=\vspace{-8pt} ]
<Observation> 
'1. Jason H. Snell is an American special... 
Indiana Jones and the Kingdom of the \textbf{Battleship (2012)}...'
</Observation>
\end{tcolorbox}

\begin{tcolorbox}[colback=violet!12!white,colframe=white,boxsep=0pt,left=0pt,right=0pt,top=0pt,bottom=0pt,after=\vspace{-8pt} ]
<Summary> 
\textbf{After verifying with the search results, it is clear that "Battleship" (2012)} is the 2012 military science fiction action film directed by Peter Berg and in which Jason Snell worked. He played a role in this film.
</Summary>
\end{tcolorbox}

\begin{tcolorbox}[colback=yellow!15!white,colframe=white,boxsep=0pt,left=0pt,right=0pt,top=0pt,bottom=0pt ,after=\vspace{1pt}]%
<Conclusion> 
The 2012 military science fiction action film directed by Peter Berg in which Jason Snell worked is \textbf{"Battleship"}.
</Conclusion>
\end{tcolorbox}
\\ \hline
\end{tabular}
\caption{\label{table:case_study_on_wiki} A Case Study of a Wiki Problem on \M.}
\label{tab:case2}
\end{table*}

\section{Agentic Hacking Theoretical Analysis}
\label{sec:proof of agentic}
\subsection{Outcome-Only Reward Bias in Multi-Step Reasoning}
\label{sec:outcome_reward_multistep}

\begin{theorem}[Error Amplification under Outcome-Only Rewards]
\label{thm:multistep_outcome_bias}
Consider a multi-step reasoning trajectory
$y = (a_1, a_2, \ldots, a_T)$ generated by a policy
$\pi_{\boldsymbol{\theta}}(a_t \mid s_t)$ conditioned on an initial query $q_0$.
Let $r_{\mathrm{out}}$ denote an outcome-level reward assigned solely
based on the correctness of the final answer.
Under \textbf{outcome-only reward assignment without action-level reward decomposition},
each reasoning step $a_t$ receives an identical implicit reward
$\tilde{r}_t = \frac{r_{\mathrm{out}}}{T}$,
regardless of its semantic correctness\footnote{Here we ignore the group-relative advantage normalization for simplicity.}.
For any \textbf{erroneous or noisy step} $a_e$ with true marginal contribution
$\phi_e < 0$, the policy model parameter $\boldsymbol{\theta}_k$ at step $k$ update satisfies that:
\begin{equation}
\label{eq:multistep_bias_main}
\boldsymbol{\theta}_{k+1}
=
\boldsymbol{\theta}_k
+
\mu \cdot \frac{r_{\mathrm{out}}}{T} \cdot
\nabla_{\boldsymbol{\theta}}
\log \pi_{\boldsymbol{\theta}}(a_e \mid s_e),
\end{equation}
where $\mu > 0$ is the learning rate, $\log \pi_{\boldsymbol{\theta}}(a_e \mid s_e)$ is the output logits.
Consequently, the generation probability of $a_e$ increases whenever
$r_{\mathrm{out}} > 0$, irrespective of its negative contribution.
\end{theorem}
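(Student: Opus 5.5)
The argument is a direct REINFORCE computation, with a first-order Taylor step for the probability claim. First I will write the outcome-only objective as $J(\boldsymbol{\theta}) = \mathbb{E}_{y\sim\pi_{\boldsymbol{\theta}}}[r_{\mathrm{out}}(y)]$ and factor the trajectory likelihood autoregressively, $\pi_{\boldsymbol{\theta}}(y\mid q_0)=\prod_{t=1}^T \pi_{\boldsymbol{\theta}}(a_t\mid s_t)$. The log-derivative trick then gives the single-sample estimator $\nabla_{\boldsymbol{\theta}} J \approx r_{\mathrm{out}}\sum_t \nabla_{\boldsymbol{\theta}}\log\pi_{\boldsymbol{\theta}}(a_t\mid s_t)$. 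Following the statement's convention that the scalar is spread uniformly over steps, so that $\tilde r_t = r_{\mathrm{out}}/T$ (equivalently, length-normalized as in the token-averaged GRPO loss, with group normalization dropped per the footnote), the per-step contribution is $\frac{r_{\mathrm{out}}}{T}\nabla_{\boldsymbol{\theta}}\log\pi_{\boldsymbol{\theta}}(a_t\mid s_t)$.

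The key structural observation is that $\tilde r_t$ depends only on the terminal outcome, not on $a_t$'s own semantics. In particular, the marginal contribution $\phi_e$ never appears in the update. Isolating the summand for $t=e$ from a gradient-ascent step $\boldsymbol{\theta}_{k+1}=\boldsymbol{\theta}_k+\mu\nabla_{\boldsymbol{\theta}} J$ yields exactly \eqref{eq:multistep_bias_main}. Here I interpret the displayed equation as the component of the update attributable to $a_e$, or equivalently the update when only that term is considered; I will state this interpretation explicitly.

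For the consequence, I will expand $\log\pi_{\boldsymbol{\theta}_{k+1}}(a_e\mid s_e)$ to first order around $\boldsymbol{\theta}_k$:
\[
\log\pi_{\boldsymbol{\theta}_{k+1}}(a_e\mid s_e)-\log\pi_{\boldsymbol{\theta}_k}(a_e\mid s_e)
=\mu\frac{r_{\mathrm{out}}}{T}\bigl\|\nabla_{\boldsymbol{\theta}}\log\pi_{\boldsymbol{\theta}_k}(a_e\mid s_e)\bigr\|^2+O(\mu^2).
\]
For $r_{\mathrm{out}}>0$, nonzero gradient, and $\mu$ sufficiently small, this is strictly positive. Since $\log$ is monotone, $\pi(a_e\mid s_e)$ increases, and the sign is independent of $\phi_e<0$.

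The main obstacle is rigor around the cross terms. In the full update, the other summands $\nabla\log\pi(a_t\mid s_t)$ for $t\neq e$ also perturb $\pi(a_e\mid s_e)$ through their inner products with $\nabla\log\pi(a_e\mid s_e)$. I would handle this in one of two ways. The first is to state the result for the isolated $a_e$ component, as the theorem's display does. The second is to add an assumption that the step gradients are non-negatively correlated or approximately orthogonal, so the cross terms cannot flip the sign. For a clean statement I would also require an $L$-smoothness bound on $\log\pi$ to control the $O(\mu^2)$ remainder, for instance $\mu< 2T\|\nabla\log\pi\|^2/(L\, r_{\mathrm{out}}\,\|\nabla\log\pi\|^2)$, i.e., $\mu<2T/(L r_{\mathrm{out}})$.
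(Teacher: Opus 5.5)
Your proposal is correct and follows essentially the same route as the paper: the REINFORCE score-function gradient with the uniform per-step weight $r_{\mathrm{out}}/T$, isolation of the $a_e$ summand as the stated update, and a first-order Taylor expansion showing the change is $\mu\,(r_{\mathrm{out}}/T)\,\|\nabla_{\boldsymbol{\theta}}\log\pi_{\boldsymbol{\theta}_k}(a_e\mid s_e)\|_2^2+O(\mu^2)$. The paper expands $\pi$ directly via $\nabla\pi=\pi\nabla\log\pi$ while you expand $\log\pi$, which is equivalent; your added conditions (a nonzero gradient for strictness, the $L$-smoothness bound $\mu<2T/(L\,r_{\mathrm{out}})$ on the remainder, and the cross terms from $t\neq e$ that the paper drops by keeping only the $a_e$ component) tighten points the paper leaves implicit without changing the argument.
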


\begin{proof}
Under the RL estimator with outcome-only rewards, the policy gradient for a trajectory $y$ is:
\begin{equation}
\label{eq:multistep_reinforce}
\nabla_{\boldsymbol{\theta}} \mathbb{E}[R_{y}]
=
\mathbb{E}\!\left[
r_{\mathrm{out}}
\sum_{t=1}^{T}
\nabla_{\boldsymbol{\theta}}
\log \pi_{\boldsymbol{\theta}}(a_t \mid s_t)
\right].
\end{equation}
By linearity of expectation, this is equivalent to assigning each step
an identical implicit reward
$\tilde{r}_t = {r_{\mathrm{out}}}/{T}$:
\begin{equation}
\label{eq:uniform_step_reward}
\nabla_{\boldsymbol{\theta}} \mathbb{E}[R_{y}]
=
T \sum_{t=1}^{T}
\mathbb{E}\!\left[
\tilde{r}_t \cdot
\nabla_{\boldsymbol{\theta}}
\log \pi_{\boldsymbol{\theta}}(a_t \mid s_t)
\right].
\end{equation}
This formulation implicitly assumes uniform contribution across steps.
In practice, however, reasoning steps exhibit heterogeneous marginal
contributions $\{\phi_t\}_{t=1}^{T}$, where
\(
\sum_{t=1}^{T} \phi_t = r_{\mathrm{out}},
\)
and some $\phi_t$ \textbf{may be negative due to hallucination,
spurious inference, or retrieval noise}.
Thus, consider an erroneous step $a_e$ with $\phi_e < 0$.
The induced parameter update is
\begin{equation}
\label{eq:multistep_param_update}
\Delta \boldsymbol{\theta}
=
\mu \cdot \frac{r_{\mathrm{out}}}{T} \cdot
\nabla_{\boldsymbol{\theta}}
\log \pi_{\boldsymbol{\theta}_k}(a_e \mid s_e).
\end{equation}
Applying first-order Taylor expansion to step update:
\begin{equation}
\footnotesize
\begin{aligned}
\label{eq:multistep_taylor}
&\pi_{\boldsymbol{\theta}_{k+1}}(a_e)
=
\pi_{\boldsymbol{\theta}_k}(a_e)
+
\nabla_{\boldsymbol{\theta}}
\pi_{\boldsymbol{\theta}_k}(a_e)
\cdot \Delta \boldsymbol{\theta}
+ \mathcal{O}(\mu^2) \nonumber\\
&=
\pi_{\boldsymbol{\theta}_k}(a_e)
\Bigg(
1
+
\mu \cdot \frac{r_{\mathrm{out}}}{T}
\Big\|
\nabla_{\boldsymbol{\theta}}
\log \pi_{\boldsymbol{\theta}_k}(a_e \mid s_e)
\Big\|_2^2
\Bigg)
+ \mathcal{O}(\mu^2).
\end{aligned}
\end{equation}
where we used the log-derivative identity
$\nabla_{\boldsymbol{\theta}} \pi = \pi \nabla_{\boldsymbol{\theta}} \log \pi$.
Since $r_{\mathrm{out}} > 0$ for correct outcomes and gradient norms
are non-negative, the bracketed term exceeds unity, implying
\begin{equation}
\pi_{\boldsymbol{\theta}_{k+1}}(a_e)
>
\pi_{\boldsymbol{\theta}_k}(a_e)
\quad \text{regardless of } \phi_e < 0.
\end{equation}
To characterize long-term accumulation, define the
\textbf{step-level reward allocation error} as:
\begin{equation}
\label{eq:step_reward_error}
\varepsilon_e
=
\frac{r_{\mathrm{out}}}{T}
-
\phi_e.
\end{equation}
For erroneous steps, $\varepsilon_e > 0$ whenever $r_{\mathrm{out}} > 0$.
Over $K$ training iterations, the cumulative bias satisfies
\begin{equation}
\footnotesize
\label{eq:multistep_cumulative_bias}
\mathbb{E}\!\left[\|\bm{B}_e^{(K)}\|_2\right]
\simeq
\varepsilon_e \cdot K \cdot
\mathbb{E}\!\left[
\big\|
\nabla_{\boldsymbol{\theta}}
\log \pi_{\boldsymbol{\theta}}(a_e \mid s_e)
\big\|_2
\right],
\end{equation}
demonstrating monotonic amplification of erroneous reasoning steps (\textbf{memory-aware actions}) in the absence of backtracking or summarizing.
\end{proof}

\subsection{Bias Isolation via Pop-Based Stack and Context Masking}
\label{sec:pop_stack_bias_isolation}
\begin{theorem}[Structural Isolation via Pop-Based Backtracking]
\label{thm:pop_stack_isolation}
Consider a multi-step reasoning trajectory
$y = (a_1, a_2, \ldots, a_T)$ generated by a policy
$\pi_{\boldsymbol{\theta}}(a_t \mid s_t)$.
Suppose that at step $e$, an erroneous action $a_e$ is detected and removed
via a pop-based stack mechanism, producing a pruned trajectory
$\tilde{y}$.
The removal is implemented by masking $a_e$ from the conditioning context
of all subsequent actions.
Under this counterfactual trajectory $\tilde{y}$,
the expected policy gradient update contains no contribution from $a_e$:
\begin{equation}
\mathbb{E}\!\left[
\nabla_{\boldsymbol{\theta}}
\log \pi_{\boldsymbol{\theta}}(a_e \mid s_e)
\cdot r_{\mathrm{out}}
\right]
= 0.
\end{equation}
Thus, erroneous reasoning steps do not receive outcome-level rewards
and are not systematically reinforced.
\end{theorem}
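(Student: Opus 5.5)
To prove Theorem~\ref{thm:pop_stack_isolation}, I would start from the same policy-gradient decomposition used in the proof of Theorem~\ref{thm:multistep_outcome_bias}, but written for the pruned trajectory $\tilde{y}$ rather than for $y$. Concretely, I would make precise what ``pop-based removal with context masking'' means at the level of the loss. The popped action $a_e$ is assigned a loss mask, exactly as retrieved observations are handled by $\mathbb{I}^{(k)}_{\text{mask},i}$. The algorithm in Appendix~\ref{sec: algorithm} (``Modify Pop-based Attention Mask'') then removes $a_e$ from the conditioning context of every later state, so that $\tilde{s}_t = s_t \setminus a_e$ for $t>e$. Under this convention, the per-token reward formula gives $(1-\mathbb{I}_{\text{mask},i})\,r_{\text{out}} = 0$ on all tokens of $a_e$.

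The key steps, in order, are as follows. First, I write the surrogate gradient for $\tilde{y}$ as
\[
\nabla_{\boldsymbol{\theta}} \mathcal{J} = \mathbb{E}\Big[r_{\mathrm{out}}\sum_{t\ne e}\nabla_{\boldsymbol{\theta}}\log\pi_{\boldsymbol{\theta}}(a_t\mid \tilde{s}_t)\Big],
\]
using the fact that masked tokens contribute no log-probability term. Second, I identify the contribution of $a_e$ in the original estimator, namely $\mathbb{E}[\nabla_{\boldsymbol{\theta}}\log\pi_{\boldsymbol{\theta}}(a_e\mid s_e)\,r_{\mathrm{out}}]$, with the coefficient multiplying that term in the masked objective, which is identically zero. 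Third, I argue that $a_e$ also leaves no indirect trace through the later score terms: because $\tilde{s}_t$ no longer contains $a_e$, the quantity $\nabla\log\pi_{\boldsymbol{\theta}}(a_t\mid\tilde{s}_t)$ does not depend on the tokens of $a_e$, so no credit flows back to them through attention. Finally, I conclude with the first-order Taylor argument from the previous theorem, now with $\Delta\boldsymbol{\theta}$ carrying zero component along $\nabla\log\pi(a_e\mid s_e)$. This gives $\pi_{\boldsymbol{\theta}_{k+1}}(a_e)=\pi_{\boldsymbol{\theta}_k}(a_e)+\mathcal{O}(\mu^2)$ along that direction, meaning there is no systematic reinforcement and the cumulative bias $\bm{B}_e^{(K)}$ of Theorem~\ref{thm:multistep_outcome_bias} vanishes to first order.

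The main obstacle is the claimed equality itself. Read literally as an expectation under $\pi_{\boldsymbol{\theta}}$ of $\nabla\log\pi(a_e\mid s_e)\,r_{\mathrm{out}}$, it is not zero: $r_{\mathrm{out}}$ is correlated with $a_e$, since $a_e$ influenced whether the backtrack was triggered. The statement therefore only holds as a statement about the \emph{estimator actually used in training}, that is, with the mask indicator inside the expectation: $\mathbb{E}[(1-\mathbb{I}_{\text{mask},e})\,r_{\mathrm{out}}\,\nabla\log\pi(a_e\mid s_e)]$ with $\mathbb{I}_{\text{mask},e}=1$. I would state this interpretation explicitly as the counterfactual-trajectory assumption in the theorem and prove the result under it. I would also note in a remark that a baseline-type argument, $\mathbb{E}_{a_e}[\nabla\log\pi]=0$, gives the unmasked version only if $r_{\mathrm{out}}$ is independent of $a_e$ given $s_e$. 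That independence holds approximately when the pop fully restores the pre-$e$ state, so that the continuation distribution is the same whichever $a_e$ was sampled.
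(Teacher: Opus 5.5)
Your proposal follows essentially the same route as the paper: compute the outcome-only policy gradient over the pruned trajectory $\tilde{y}$, note that $a_e\notin\tilde{y}$ so its score-function term never appears, and use the pop-based attention mask to argue that subsequent actions, and hence $r_{\mathrm{out}}$, no longer condition on $a_e$. Your caveat is well placed: the paper goes from ``the term never appears'' to the literal identity $\mathbb{E}[\nabla_{\boldsymbol{\theta}}\log\pi_{\boldsymbol{\theta}}(a_e\mid s_e)\, r_{\mathrm{out}}]=0$ only by asserting that $r_{\mathrm{out}}$ is counterfactually independent of $a_e$, which is the same kind of independence you isolate in your remark, so the estimator-level statement you prove is what the paper's argument actually establishes.
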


\begin{proof}
Under outcome-only rewards, the standard advantage estimator is
\begin{equation}
\nabla_{\boldsymbol{\theta}} \mathbb{E}[R_{y}]
=
\mathbb{E}\!\left[
r_{\mathrm{out}}
\sum_{t=1}^{T}
\nabla_{\boldsymbol{\theta}}
\log \pi_{\boldsymbol{\theta}}(a_t \mid s_t)
\right].
\end{equation}

\paragraph{Pop-based backtracking and context masking.}
When $a_e$ is removed, all subsequent actions are generated or evaluated
under a masked context in which $a_e$ is excluded. The resulting trajectory is
\begin{equation}
\tilde{y}
=
(a_1, \ldots, a_{e-1}, a_{e+1}', \ldots, a_T').
\end{equation}
The attention mask ensures that $a_e$ does not contribute to hidden states
or logits involved in producing the final answer, effectively addressing the problem of reward hacking from outcome reward to be allocated the harmless popped actions.

\paragraph{Counterfactual independence.}
Under the masked context, the outcome distribution satisfies
\begin{equation}
p_{\boldsymbol{\theta}}(a_T \mid a_1, \ldots, a_{e-1}, \cancel{a_e}, \ldots)
=
p_{\boldsymbol{\theta}}(a_T \mid \tilde{y}),
\end{equation}
implying that $r_{\mathrm{out}}$ depends only on $\tilde{y}$ and is
conditionally independent of $a_e$.

\paragraph{Vanishing gradient contribution.}
The policy gradient under $\tilde{y}$ becomes
\begin{equation}
\nabla_{\boldsymbol{\theta}} \mathbb{E}[R_{\tilde{y}}]
=
\mathbb{E}\!\left[
r_{\mathrm{out}}
\sum_{t \in \tilde{y}}
\nabla_{\boldsymbol{\theta}}
\log \pi_{\boldsymbol{\theta}}(a_t \mid s_t)
\right].
\end{equation}
Since $a_e \notin \tilde{y}$, the corresponding score-function term
never appears. Equivalently,
\begin{equation}
    \mathbb{E}\!\left[
\nabla_{\boldsymbol{\theta}}
\log \pi_{\boldsymbol{\theta}}(a_e \mid s_e)
\cdot r_{\mathrm{out}}
\right]
= 0.
\end{equation}
Therefore, erroneous reasoning steps are structurally isolated from
outcome-level rewards.
\end{proof}

\paragraph{Discussion.}
Rather than numerically redistributing rewards, the pop-based stack mechanism
corrects outcome-only reward bias through \emph{structural isolation}.
By explicitly removing erroneous actions from the conditioning context via
attention masking, policy gradients are evaluated on a counterfactual
trajectory in which these actions never occurred, eliminating the error
amplification effect identified in
Section~\ref{sec:outcome_reward_multistep}.

\section{More Method Details: Pop-based Attention Mask}
\label{sec:pop_attention_mask}
We extend the standard causal attention mechanism with a pop-based attention mask to handle dynamic context updates induced by \texttt{<summary>} and \texttt{<backtrack>} operations (\texttt{<pop>}). In autoregressive Transformers, causal attention ensures that each token at position ( $i$ ) can only attend to tokens at positions ( $j \leq i$ ). However, in our setting, causal ordering alone is insufficient, since tokens that appear earlier in the sequence may have been explicitly \textbf{removed from the active context} via a pop operation and should no longer be accessible.

Consider the example shown in Figur~\ref{fig:newmethod.png} Part 4, where the sequence follows an \textbf{Action–Summary} pattern: \textbf{Act. 1 → Act. 2 → Sum. 3 → Act. 4}. The summary token \textbf{Sum. 3} aggregates the information of \textbf{Act. 2}, after which \textbf{Act. 2} is popped from the active memory. As a result, when generating \textbf{Act. 4}, the model should be allowed to attend to \textbf{Sum. 3}, but must be prevented from directly attending to \textbf{Act. 2}, even though \textbf{Act. 2} appears earlier in the sequence. Thus, the attention mask between \textbf{Act. 2} and \textbf{Act. 4} should be False. This constraint is indicated as \textbf{Mask after POP} in the attention matrix.
Formally, let ( $M^{\text{causal}}$ ) denote the standard causal mask:
\begin{equation}
    M_{i,j}^{\text{causal}} =
\begin{cases}
0, & j \le i \\
-\infty, & j > i
\end{cases}
\end{equation}
We introduce an additional pop-based mask ( $M^{\text{pop}}$ ), where a token ( j ) is masked if it has been popped before step ( i ):
\begin{equation}
M_{i,j}^{\text{pop}} =
\begin{cases}
-\infty, & \text{if} j \text{ is popped before } i \\
0, & \text{otherwise}
\end{cases}
\end{equation}
The final attention mask is defined as:
\begin{equation}
M_{i,j} = M_{i,j}^{\text{causal}} + M_{i,j}^{\text{pop}}.
\end{equation}
This formulation ensures that once an action has been summarized and popped, its information is only accessible through the corresponding summary token in all subsequent steps.

\section{Broader Generalization}
\label{sec:broader_generalization}

To evaluate the effectiveness of \M across different application scenarios, we conduct additional experiments on MedQA, DeepResearch  Bench, ALFWorld, and WebShop, covering medical question answering, long-form report generation, and interactive decision making.

\subsection{Medical Question Answering.}
Recent work has demonstrated the potential of LLMs for domain-specific healthcare applications~\cite{xu2025dearllm,jiang2024tcrag,fang2026graphwalker}. 
To evaluate the generalization of \M in a domain-specific setting, we conduct experiments on MedQA~\cite{jin2020disease}. Table~\ref{tab:medqa_results} reports the performance of different methods using Qwen2.5-72B as the judge. Compared with the base model, ReAct improves format accuracy from 39\% to 56\% but reduces answer  accuracy from 84\% to 79\%. In contrast, \M achieves 92\% format accuracy and 87\% answer accuracy, outperforming both the base model and ReAct on both metrics. These results demonstrate that \M generalizes effectively to domain-specific medical question answering while improving format compliance without sacrificing answer quality.

\begin{table}[t]
\centering
\fontsize{8pt}{10pt}\selectfont
\renewcommand{\arraystretch}{1}
\setlength{\tabcolsep}{8pt}

\begin{tabular}{l|cc}
\toprule
\rowcolor{gray!10}
\textbf{Method} & \textbf{Format Acc (\%)} & \textbf{Answer Acc (\%)} \\
\midrule
Base   & 39\% & 84\% \\
ReAct  & 56\% & 79\% \\
\M     & 92\% & 87\% \\
\bottomrule
\end{tabular}
\caption{Results on the MedQA test set. The Qwen2.5-7B-Instruct model is evaluated using Qwen2.5-72B as the judge.}
\label{tab:medqa_results}
\end{table}

\subsection{Long-Form Research and Report Generation}
\begin{figure*}[htbp]
    \centering
    \begin{subfigure}{1\linewidth}
        \centering
        \includegraphics[width=1\linewidth]{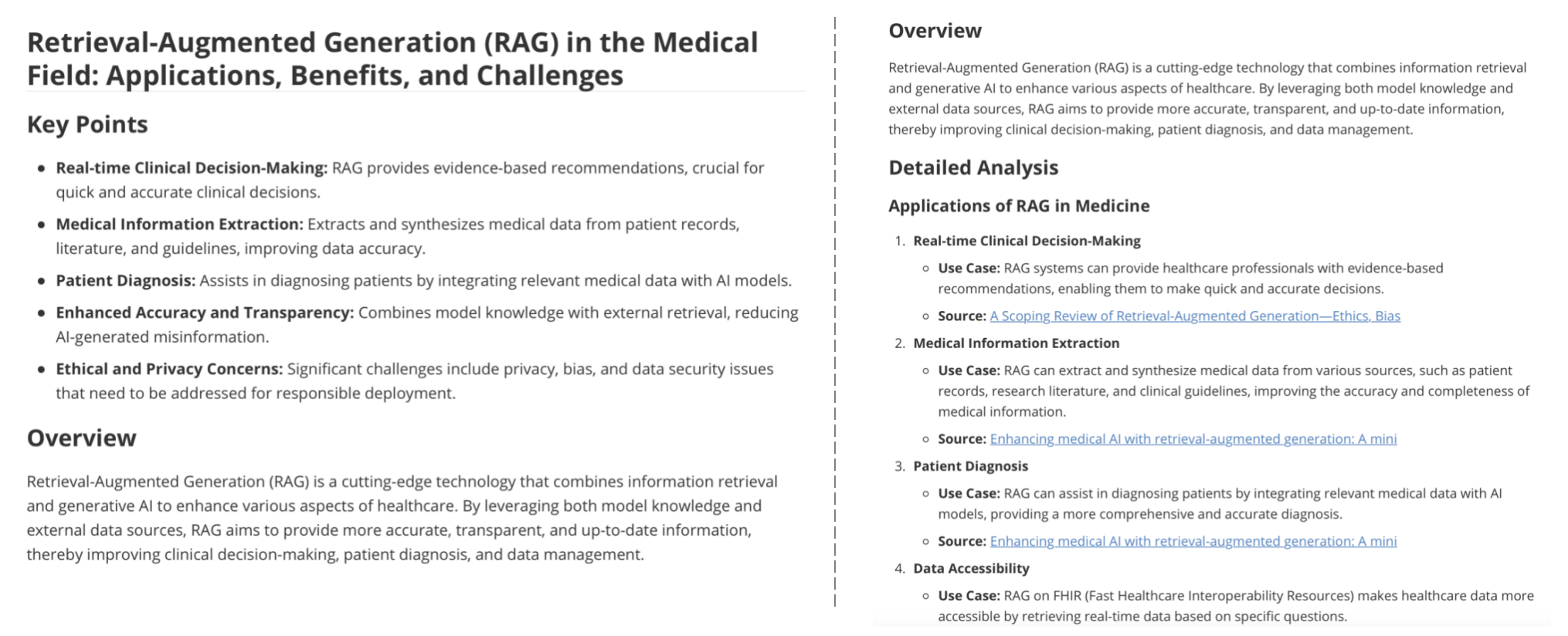}
        \caption{Report Part (1/2)}
        \label{subfig:report1}
    \end{subfigure}


    \begin{subfigure}{1\linewidth}
        \centering
        \includegraphics[width=1\linewidth]{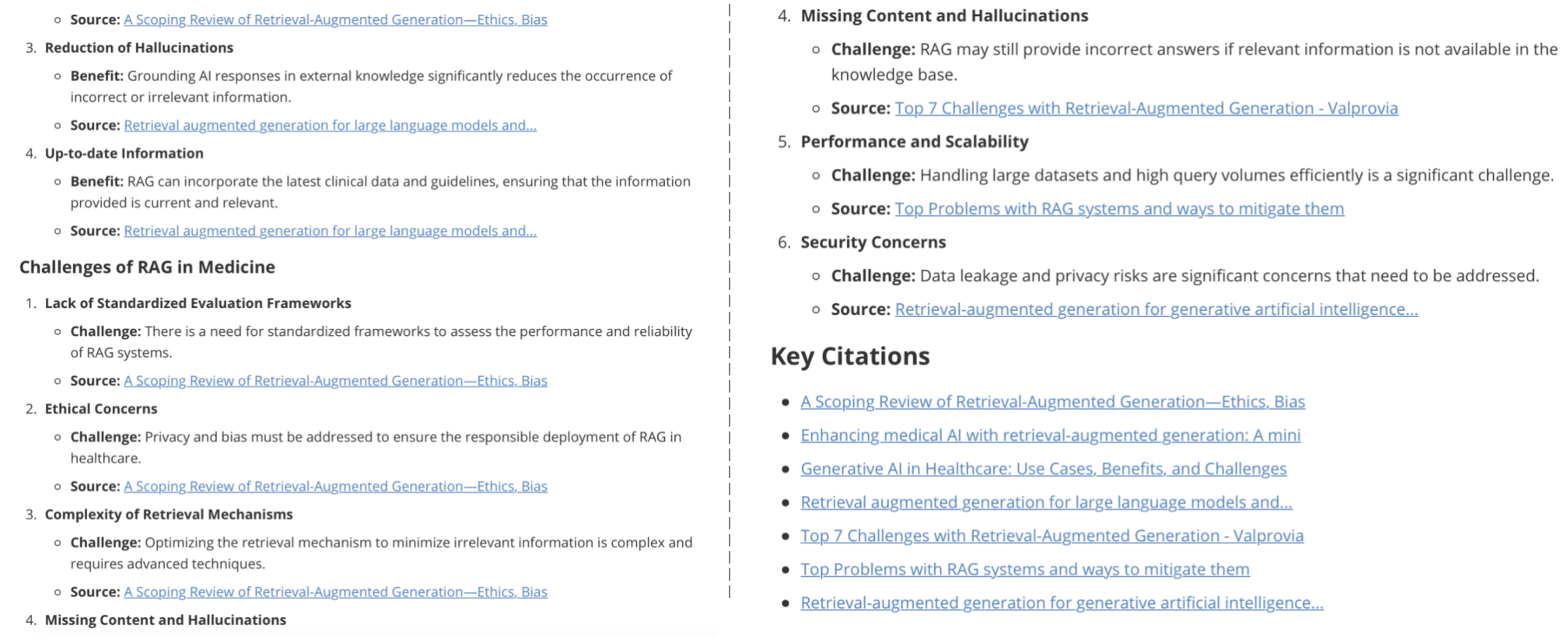}
        \caption{Report Part (2/2)}
        \label{subfig:report2}
    \end{subfigure}

    \caption{Report Generation for query "\textit{Give me a report of topic <RAG's Future in Medical Domain>}"}
    \label{fig:reporter_rag}
\end{figure*}

Figure~\ref{fig:reporter_rag} presents the report-generation performance of \M~ on the 7B model, inferred with the query:
\textit{``Give me a report of topic <RAG's Future in Medical Domain>.''}
As shown, after multiple rounds of retrieval, the model produces a high-quality report accompanied by well-aligned citations.
This result further demonstrates the strong generalization capability of our reinforcement learning–based agent training strategy.

To quantitatively evaluate the generalization of \M~to long-horizon agentic scenarios beyond QA, we benchmark on DeepResearch Bench~\cite{du2025deepresearch}, a multi-document report generation task that requires sustained multi-step retrieval, information aggregation, and structured composition. We evaluate using RACE (report quality) and FACT (factual consistency, including effective citation accuracy and claim-level accuracy).
As shown in Table~\ref{tab:deepresearch}, \M~substantially outperforms all baselines, including RAG, ReACT, and EDR~\cite{prabhakar2025enterprise}, achieving the highest RACE score (43.60) and a significant margin on FACT metrics (+8.14 on Eff.c.\ and +8.56 on C.acc.\ over the best baseline).
\begin{table}[h]
\centering
\small
\begin{tabular}{l|ccc}
\toprule
\textbf{Method} & \textbf{RACE} & \textbf{FACT (Eff.c.)} & \textbf{FACT (C.acc.)} \\
\midrule
RAG & 36.68 & 2.93 & 10.86 \\
ReACT & 42.21 & 4.69 & 23.78 \\
EDR & 40.99 & 12.28 & 23.70 \\
Ours & \textbf{43.60} & \textbf{20.42} & \textbf{32.34} \\
\bottomrule
\end{tabular}
\caption{Results on DeepResearch Bench for multi-document report generation.}
\label{tab:deepresearch}
\end{table}

\subsection{Interactive Decision-Making Environments.}

We further evaluate \M on interactive long-horizon decision making with ALFWorld~\cite{ALFWorld20} and WebShop~\cite{yao2022webshop}. These environments require agents to maintain state, incorporate feedback, and adapt strategies over extended trajectories. \M improves ALFWorld success rate from 36.5\% to 51.1\%, WebShop task score from 14.7 to 27.8, and WebShop success rate from 0.8\% to 2.3\%. Despite category-level variations on ALFWorld, these results demonstrate that \M generalizes beyond static QA to sequential interactive environments.

\begin{table*}[t]
\centering
\footnotesize
\setlength{\tabcolsep}{6pt}
\renewcommand{\arraystretch}{1.12}
\resizebox{0.7\textwidth}{!}{%
\begin{tabular}{l | c c c c c c c | c c}
\toprule
\rowcolor{gray!30}
\textbf{Method} & \multicolumn{7}{c|}{\textbf{ALFWorld}} &
\multicolumn{2}{c}{\textbf{WebShop}} \\
\rowcolor{gray!30}
& \textbf{Pick} & \textbf{Look} & \textbf{Clean} &
\textbf{Heat} & \textbf{Cool} & \textbf{Pick2} & \textbf{Overall} &
\textbf{Score} & \textbf{SR} \\
\midrule
Prompt & 45.8 & 48.4 & \textbf{34.5} & \textbf{41.0} &
19.6 & 31.7 & 36.5 & 14.7 & 0.8 \\
\M & \textbf{84.7} & \textbf{80.6} & 24.1 & 33.3 &
\textbf{37.0} & \textbf{51.2} & \textbf{51.1} &
\textbf{27.8} & \textbf{2.3} \\
\bottomrule
\end{tabular}}
\caption{Generalization of \M to interactive decision-making
environments. ALFWorld reports task-specific and overall success rates
(\%); WebShop reports the official task score and success rate
(SR, \%).}
\label{tab:interactive_generalization}
\end{table*}

\section{Computational Resources and Software Environment}
Experiments were carried out on a server equipped with two Intel Xeon Platinum 8488C processors, comprising 96 physical cores and 192 hardware threads in total. 
The system was configured with two NVIDIA A800 GPUs, each providing 80~GB of GPU memory, and 503~GB of system RAM. 
All experiments were conducted on Ubuntu~22.04.5~LTS.
The software environment was based on Python~3.11.11, with package management handled through Conda (version~23.5.2). 
Model implementation and training were performed using PyTorch~2.6.0 with CUDA support, together with HuggingFace Transformers~4.51.3 and SpaCy~3.8.4. 
Unless otherwise specified, all models and libraries were used with their default parameter settings.
The total training time for our method was approximately 21{,}582~seconds under the above hardware configuration.
The computational cost of training, data preprocessing, and inference varied depending on the experimental configuration and dataset scale, and all reported results were obtained under the same hardware and software setup.


\section{The Use of Large Language Model}
In this study, generative AI software tools were used solely as auxiliary support for language refinement, programming-related assistance, and figure visualization. Specifically, large language models were employed to improve grammatical accuracy, readability, and overall writing quality, as well as to provide general coding suggestions or debugging guidance. In addition, AI-based image generation tools were used to enhance the visual appearance of figure \ref{fig:intro}; however, the figure design, initial draft, and final verification were entirely performed by the authors. All AI-assisted outputs were carefully reviewed and validated by the authors prior to inclusion. The research conception, methodological design, and analysis of experimental results were conducted exclusively by the authors, and generative AI tools did not contribute to the formulation of research ideas or the derivation of conclusions.

\end{document}